\newtheorem{theorem}{Theorem}[section]
\newtheorem{lemma}[theorem]{Lemma}
\newtheorem{meta-theorem}[theorem]{Meta-Theorem}
\newtheorem{remark}[theorem]{Remark}
\newtheorem{corollary}[theorem]{Corollary}
\newtheorem{definition}[theorem]{Definition}
\newcommand{\FullOrShort}{full}
  \newcommand{\fullOnly}[1]{#1}
  \newcommand{\shortOnly}[1]{}
    \newcommand{\fullOnly}[1]{}
    \newcommand{\shortOnly}[1]{#1}
\definecolor{darkgreen}{rgb}{0,0.5,0}
\crefname{theorem}{Theorem}{Theorems}
\Crefname{lemma}{Lemma}{Lemmas}
\Crefname{observation}{Observation}{Observations}
\Crefname{equation}{}{}
\algnewcommand\algorithmicswitch{\textbf{switch}}
\algnewcommand\algorithmiccase{\textbf{case}}
\crefname{theorem}{Theorem}{Theorems}
\Crefname{lemma}{Lemma}{Lemmas}
\Crefname{observation}{Observation}{Observations}
\Crefname{equation}{}{}
\newcommand{\eps}{\varepsilon}
\newcommand{\card}[1]{\left| #1 \right|}
\newcommand{\eqdef}{{\stackrel{\rm def}{=}}}
\newcommand{\poly}{{\rm poly}}
\newcommand{\floor}[1]{\left\lfloor #1 \right\rfloor}
\newcommand{\ceil}[1]{\left\lceil #1 \right\rceil}
\newcommand{\dist}{\mathop{\mbox{\rm dist}}\nolimits}
\providecommand{\E}{{\rm \mathbb{E}}}
\providecommand{\Var}{{\rm \mathbb{V}ar}}
\providecommand{\Cov}{{\rm Cov}}
\newcommand*{\congested}{\textsf{CONGESTED-CLIQUE}}
\newcommand*{\mpc}{\textsf{MPC}}
\newcommand*{\local}{\textsf{LOCAL}}
\newcommand*{\congest}{\textsf{CONGEST}}
\newcommand*{\mis}{\textsf{MIS}}
\newcommand*{\mm}{\textsf{MM}}
\providecommand{\matching}{\mathcal{M}}
\providecommand{\indepset}{\mathcal{I}}
\providecommand{\machines}{\mathbf{M}}
\providecommand{\memory}{\mathbf{S}}
\providecommand{\spaceconst}{\alpha}
\providecommand{\localspace}{O(n^\alpha)}
\providecommand{\arb}{\lambda}
\providecommand{\hpart}{\mathcal{H}}
\providecommand{\last}{L}
\newcommand{\goodset}[1]{T(#1)}
\providecommand{\degreehigh}{\beta}
\newcommand{\Shell}{\mathcal{S}}
\newcommand{\Shellgood}{\mathcal{S}_{\textsf{Good}}}
\newcommand{\highdeg}{\sqrt{\Delta_i}}
\newcommand{\betadeg}{\Delta^{0.03}}
\newcommand{\pvalue}{\beta^{3}}
\renewcommand{\paragraph}[1]{\vspace{0.15cm}\noindent {\bf #1}:}
\DeclarePairedDelimiter{\abs}{\lvert}{\rvert}
\let\oldabs\abs
\def\abs{\@ifstar{\oldabs}{\oldabs*}}
    \newcommand{\IncludePictures}[1]{}
\begin{document}

\date{}

\title{\huge Deterministic Massively Parallel Symmetry Breaking\\ for Sparse Graphs}

\author{
    Manuela Fischer \\
    \small{ETH Zurich}\\
    \small{\texttt{manuela.fischer@inf.ethz.ch}}
    \and
    Jeff Giliberti \\
    \small{ETH Zurich}\\
    \small{\texttt{jeff.giliberti@inf.ethz.ch}}
    \and
    Christoph Grunau\thanks{\scriptsize{Supported by the European Research Council (ERC) under the European Unions Horizon 2020 research and innovation programme (grant agreement No.~853109).}} \\
    \small{ETH Zurich}\\
    \small{\texttt{cgrunau@inf.ethz.ch}}
}

\date{}

\maketitle

\setcounter{page}{0}
\thispagestyle{empty}

\begin{abstract}
We consider the problem of designing deterministic graph algorithms for the model of Massively Parallel Computation (\mpc) that improve with the sparsity of the input graph, as measured by the standard notion of arboricity. For the problems of maximal independent set (\mis), maximal matching (\mm), and vertex coloring, we improve the state of the art as follows. Let $\arb$ denote the arboricity of the $n$-node input graph with maximum degree $\Delta$.

\paragraph{\mis\ and \mm} We develop a low-space \mpc\ algorithm that \emph{deterministically} reduces the maximum degree to $\poly(\arb)$ in $O(\log \log n)$ rounds, improving and simplifying the randomized $O(\log \log n)$-round $\poly(\max(\arb, \log n))$-degree reduction of Ghaffari, Grunau, Jin~[DISC'20].
Our approach when combined with the state-of-the-art $O(\log \Delta + \log \log n)$-round algorithm by Czumaj, Davies, Parter~[SPAA'20, TALG'21] leads to an improved deterministic round complexity of $O(\log \arb + \log \log n)$. 

The above \mis\ and \mm\ algorithm however works in the setting where the global memory allowed, i.e., the number of machines times the local memory per machine, is superlinear in the input size. 
We extend them to obtain the first low-space \mis\ and \mm\ algorithms that work with linear global memory. Specifically, we show that both problems can be solved in deterministic time $O(\log \arb \cdot \log \log_{\arb} n)$, and even in $O(\log \log n)$ time for graphs with arboricity at most $\log^{O(1)} \log n$. In this setting, only a $O(\log^2 \log n)$-running time bound for trees was known due to Latypov and Uitto~[ArXiv'21].

\paragraph{Vertex Coloring} We present a $O(1)$-round deterministic algorithm for the problem of $O(\arb)$-coloring in the linear-memory regime of \mpc, with relaxed global memory of $n \cdot \poly(\arb)$.
This matches the round complexity of the state-of-the-art randomized algorithm by Ghaffari and Sayyadi~[ICALP'19] and significantly improves upon the deterministic $O(\arb^{\eps})$-round algorithm by Barenboim and Khazanov~[CSR'18].
Our algorithm solves the problem after just \textit{one} single graph partitioning step, in contrast to the involved local coloring simulations of the above state-of-the-art algorithms. 

Using $O(n + m)$ global memory, we derive a $O(\log \arb)$-round algorithm by combining the constant-round $(\Delta+1)$-list-coloring algorithm by Czumaj, Davies, Parter~[PODC'20, SIAM J. Comput.'21] with that of Barenboim and Khazanov.
\end{abstract}

\newpage
\section{Introduction}
One of the most fundamental tasks in distributed networks are symmetry breaking problems such as electing a set of leaders or resolving conflicts for the allocation of shared resources. In recent years, the ever-increasing amount of data rendered traditional graph algorithms inefficient, or even inapplicable. 
Inspired by popular large-scale computing frameworks such as MapReduce~\cite{MapReduce} and Spark~\cite{Spark}, the Massively Parallel Computation (\mpc) model~\cite{FMS+10, KSV10} has emerged as a standard theoretical abstraction for the algorithmic study of large-scale graph processing.

In the sublinear and linear regime of \mpc, there has recently been a plethora of work on fundamental combinatorial and optimization problems on graphs (see, e.g.,~\cite{BKM20, MPC-MIS-log2logn, 8948671, coyconnectivity2021, MPC-MIS-det-general, CDP20ccb, GGKMR18, doi:10.1137/1.9781611975482.99, KPP20}). As many real-world graphs are believed to be sparse, it would be desirable to develop algorithms that improve with the sparsity of graphs~\cite{CLM+18}. An established measure of sparsity is the \textit{arboricity} $\arb$ of a graph: It captures a \textit{global} notion of sparsity, as it does not impose any \textit{local} constraints such as bounded maximum degree. 
Over the last decade, a series of \textit{arboricity-dependent} algorithms~\cite{BE10, BE11, BEPS16, BK18, MPC-MIS-log3logn, MPC-MIS-loglogn, GL17, GS19, MPC-MIS-det-log2logn} has shown that for sparse graphs there exist far more efficient algorithms than for general graphs. 

In this paper, we focus on the \textit{deterministic} complexity of maximal independent set (\mis), maximal matching (\mm), and vertex coloring, parameterized by the graph arboricity, although without imposing any upper bound on the value of $\arb$. 

\paragraph{The \mpc\ Model} In the \mpc\ model, the distributed network consists of $\machines$ machines with memory $\memory$ each. 
The input is distributed across the machines and the computation proceeds in synchronous rounds. In each round, each machine performs an arbitrary \textit{local computation} and then communicates up to $\memory$ data with the other machines. All messages sent and received by each machine in each round have to fit into the machine’s local memory. 
The main complexity measure of an algorithm is its \textit{round complexity}, that is, the number of rounds needed by the algorithm to solve the problem. Secondary complexity measures are the \textit{global memory} usage---i.e., the number of machines times the local memory per machine---as well as the \textit{total computation} required by the algorithm, i.e., the (asymptotic) sum of all local computation performed by each machine.

For the \mis\ and \mm\ problems, we focus on the design of fully scalable graph algorithms in the \textit{low-memory} \mpc\ model (a.k.a. low-space \mpc), where each machine has strongly sublinear memory. More precisely, an input graph  $G = (V,E)$ with $n$ vertices and $m$ edges is distributed arbitrarily across machines with local memory $\memory = O(n^\spaceconst)$ each, for  constant $0 < \spaceconst < 1$, so that the global space is $\memory_{Global} = \Omega(n + m)$. 

For vertex coloring, we consider the \textit{linear-memory} \mpc\ model, where machines have local memory $\memory = O(n)$. This model is closely linked to the \congested\ model~\cite{Congested}, where the communication graph is complete but the  message size is restricted to $O(\log n)$ bits. By a result of Lenzen~\cite{L13}, linear \mpc\ results can often be extended to \congested~\cite{BDH18}.

\paragraph{Maximal Independent Set and Maximal Matching}
These two problems have been studied extensively since the very beginning of the area \cite{ABI86, BEPS16, Gha16, II86, L85} and are defined as follows. A \textit{maximal independent set} is a set $\indepset  \subseteq V$ such that no two vertices in $\indepset$ are adjacent and every vertex $u \in V \setminus \indepset$ has a neighbor in $\indepset$. A \textit{maximal matching} is a set $\matching  \subseteq E$  such that no two edges in $\matching$ are adjacent and every edge $e \in V \setminus \matching$ has a neighbor in $\matching$.

\textbf{Randomized \mis\ and \mm} One of the recent highlights in low-memory \mpc\ has been a sequence of fast algorithms for \textit{bounded-arboricity graphs}. The authors in~\cite{MPC-MIS-log3logn} were the first to break the linear memory barrier devising an $O(\log^3 \log n)$-time algorithm for trees. This result was improved by~\cite{MPC-MIS-log2logn} who gave a general $O(\log^2 \log n)$-time algorithm that reduces the maximum degree to $\poly(\max(\arb, \log n))$ for graphs of arbitrary arboricity. The subsequent algorithm of~\cite{MPC-MIS-loglogn} obtains the same degree reduction in $O(\log \log n)$ rounds. This result combined with the $O(\sqrt{\log \Delta}\log \log \Delta + \log \log n)$-round randomized \mis\ and \mm\ algorithm~\cite{doi:10.1137/1.9781611975482.99} for general graphs yields an $O(\sqrt{\log \arb}\log \log \arb + \log \log n)$ running time, i.e., a round complexity of $O(\log \log n)$ on graphs of polylogarithmic arboricity.

\textbf{Deterministic \mis\ and \mm} The fastest deterministic \mis\ and \mm\ algorithm for general graphs runs in $O(\log \Delta + \log \log n)$ by a result of~\cite{MPC-MIS-det-general}. When restricting our attention to deterministic algorithms in terms of the arboricity, only an $O(\log^2 \log n)$-time bound for \mis\ and \mm\ on trees (i.e., $\arb = 1$) is known due to Latypov and Uitto~\cite{MPC-MIS-det-log2logn}.

On the hardness side, there is a component-stable conditional lower bound of $\Omega(\log \log n)$ for \mm, which holds even on trees, by the works of \cite{CDP21lb, GKU19}.

\paragraph{Arboricity Coloring} 
The problem of coloring a graph is a corner-stone \textit{local} problem, with numerous applications, that has been studied extensively. A $K$-coloring of vertices is a function $\phi \colon V \rightarrow \{1,\ldots,K\}$ such that for every edge $\{u,v\} \in E$ vertices $u$ and $v$ are assigned distinct colors, i.e., $\phi(u) \neq \phi(v)$.
There has been a long succession of results based on recursive graph partitioning procedures in \congested\ and linear-memory \mpc~\cite{BKM20, CFG+19, HP15, P18, PS18}. The randomized complexity of $(\Delta+1)$-coloring and $(\Delta+1)$-list-coloring was settled to $O(1)$ by~\cite{CFG+19}, and \cite{CDP20cc1}~achieved the same result deterministically. 

Much of the research on this topic has focused on coloring with $\Delta + 1$ colors. However, for sparse graphs, the maximum degree $\Delta$ may be much larger than the arboricity $\arb$, and thus an arboricity-dependent coloring would be more satisfactory. 
Note that any graph admits a $2 \arb$-coloring, and that this bound is tight.
In this regard, Barenboim and Khazanov~\cite{BK18} observed that any problem can be solved in $O(\arb)$ deterministic time (since there are at most $\arb n$ edges), and proved a deterministic upper bound of $\arb^\eps$ to $O(\arb)$-color vertices, for any constant $\eps > 0$.  On the randomized side, Ghaffari and Sayyadi~\cite{GS19} show that an $O(\arb)$-coloring can be solved in $O(1)$ time, settling its complexity. 

\subsection{Deterministic Algorithms and Derandomization}\label{sec:rand-vs-det}
The derandomization of local algorithms has recently attracted much attention in the distributed setting (see, e.g., \cite{BKM20, coyconnectivity2021, CDP20cc1, MPC-MIS-det-general, CDP20ccb, FGG22P, PP22}). The clever derandomization scheme introduced by \cite{CPS17} enhances classic derandomization methods \cite{Lub93, MNN94} with the power of local computation and global communication, in spite of the limited bandwidth. Specifically, a randomized process that works under limited independence is derandomized, i.e., the right choice of (random) bits is found, by either brute-forcing all possible bit sequences or using the method of conditional expectation, that is, computing the conditional expectations in a distributed fashion.
There, one divides the seed into several parts and fixes one part at a time in a way that does not decrease the conditional expectation via global communication in a voting-like manner. 
In addition, to reduce the size of the search space, one can color the graph such that dependent entities are assigned distinct colors, i.e., their random decisions are decoupled, and entities that do not depend on each other can be safely assigned the same color. 

Interestingly, a good result is returned by the derandomized process even when concentration bounds fail to hold with high probability. On the other hand, the \textit{shattering effect}, that is, the emergence of (small) components of size $\poly(\Delta, \log n)$ after $O(\log \Delta)$ rounds, no longer appears under limited independence. 
Crucially, the lack of this phenomena poses major challenges in obtaining deterministic algorithms with 
complexities that match those of randomized algorithms. 
To overcome these challenges, current derandomization solutions often require large global memory (e.g., superlinear in the input size) or long local running time (e.g., large polynomial in $n$ or even exponential). 

\subsection{Our Contribution}
We introduce several improved deterministic algorithms for the three fundamental graph problems presented above in the \mpc\ setting. 

\paragraph{\mis\ and \mm}
We develop a deterministic low-space \mpc\ algorithm that reduces the maximum degree of the input graph to $\poly(\lambda)$. 

\begin{restatable}{theorem}{maindegred}\label{thm:main_deg_red}
    There is a $O(\log \log n)$-round deterministic low-space \mpc\ algorithm with linear global memory that computes an independent set $\mathcal{I}$ (or a matching $\mathcal{M}$) of a $\arb$-arboricity input graph $G$ such that the resulting graph $G \setminus \mathcal{I}$ (or $G \setminus \mathcal{M}$) has degree at most $\poly(\arb)$.
\end{restatable}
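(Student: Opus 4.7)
The high-level plan is to derandomize the randomized $\poly(\max(\arb,\log n))$-degree reduction of Ghaffari, Grunau, and Jin so that it (a) becomes fully deterministic and (b) leaves behind only degree $\poly(\arb)$, independent of $n$. Both goals are addressed by replacing the randomized marking with a small-seed pseudorandom family and then picking a good seed via the method of conditional expectation in the style of Censor-Hillel, Parter, and Schwartzman, carefully adapted so that each derandomization step runs in $O(1)$ low-space \mpc\ rounds using only linear global memory.

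I would structure the algorithm as $T = O(\log\log n)$ doubly-exponential phases with degree thresholds $\Delta = \Delta_0 \geq \Delta_1 \geq \cdots \geq \Delta_T = \poly(\arb)$, where $\Delta_{i+1} := \max(\poly(\arb), \sqrt{\Delta_i})$, so that $T = O(\log\log(\Delta/\poly(\arb))) = O(\log\log n)$. In phase $i$ we start from a graph $G_i$ of maximum degree at most $\Delta_i$ and produce an independent set $\mathcal{I}_i$ whose closed neighborhood covers every vertex with degree exceeding $\Delta_{i+1}$, so that $G_{i+1}$ obtained by deleting $\mathcal{I}_i$ and its neighbors has maximum degree at most $\Delta_{i+1}$. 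Taking $\mathcal{I} := \bigcup_i \mathcal{I}_i$ then yields the output, and after $T$ phases the residual graph has degree $\poly(\arb)$ as required.

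Inside a single phase I would use a standard marking template that exploits an $O(\arb)$-bounded out-degree orientation of $G_i$, which exists by Nash-Williams and can be computed deterministically via a peeling decomposition: mark each vertex with probability $p_i = \Theta(1/\sqrt{\Delta_i})$ and keep a marked vertex in $\mathcal{I}_i$ iff none of its (few) out-neighbors are marked. A pairwise-independence calculation shows that every high-degree vertex is covered in expectation, so a pairwise-independent family of seed length $O(\log n)$ is sufficient for the analysis. The derandomization then proceeds by fixing the seed in chunks: every vertex locally computes its marginal ``badness'' contribution conditioned on the bits fixed so far; machines aggregate these contributions by a constant-round sort-and-sum on $O(n+m)$ data; the block setting that minimizes the conditional expectation is broadcast and fixed. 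Splitting the seed into $O(\log\log n)$ chunks of size $O(\log n/\log\log n)$ makes each chunk enumerable inside a single machine of memory $O(n^\spaceconst)$, so a phase completes in $O(\log\log n)$ \mpc\ rounds.

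The main obstacle I anticipate, and the nontrivial part, is pushing the total round count back down from the naive $O(\log^2\log n)$ to $O(\log\log n)$ while preserving the linear global memory budget. Two ingredients should help: first, in phase $i$ the effective seed length need only be logarithmic in the number of vertices of current degree above $\Delta_i$, and this high-degree count itself shrinks doubly-exponentially across phases, so the per-phase derandomization cost telescopes; second, by running the derandomization only on the induced subgraph of currently-relevant (high-degree) vertices we keep all aggregations linear in $n+m$ at every step. The maximal matching case should follow the same template applied to an $O(\arb)$-orientation of the line graph, but to avoid blowing up memory I would never materialize the line graph, instead emulating the marking, orientation, and conditional-expectation aggregation directly on the edge set of $G$, each edge being ``hosted'' by a canonical machine and contributing its marginal expectation directly to the global sum.
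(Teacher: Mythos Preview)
Your proposal has a genuine structural gap. The core issue is that your phase structure is \emph{sequential} across degree classes: in phase $i$ you want to eliminate \emph{all} vertices of degree above $\Delta_{i+1}=\sqrt{\Delta_i}$ before moving on. But a single derandomized marking step does not cover every high-degree vertex; under pairwise independence you get at best that a $\Delta_i^{-\Omega(1)}$ fraction survive (and with your specific parameters $p=\Theta(1/\sqrt{\Delta_i})$ and $\deg\ge\sqrt{\Delta_i}$, Chebyshev does not even give that---the mean and variance are both $\Theta(1)$, so you need the slack that the paper builds into its $\beta$-high-graph, with $\ge\beta^4$ low-degree neighbors and $p=1/\beta^3$). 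Hence each phase already needs $\Omega(\log_{\Delta_i} n)$ marking rounds to empty class $i$, and you give no mechanism to compress these. Your proposed fix---``the effective seed length need only be logarithmic in the number of surviving high-degree vertices''---attacks the wrong bottleneck: even if the seed were a single bit, you would still need $\Omega(\log_{\Delta_i} n)$ marking rounds per phase and $\Theta(\log\log\Delta)$ phases, giving $\Theta(\log\log n\cdot\log\log\Delta)$ overall. Separately, computing an $O(\arb)$-out-degree orientation via peeling costs $\Theta(\log_{\arb} n)$ rounds, not $O(\log\log n)$.

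What the paper does instead is orthogonal to seed-length telescoping: it uses \emph{graph exponentiation} so that after $\ell$ \mpc\ rounds every vertex holds its $2^\ell$-hop ball and can simulate $\Theta(2^\ell)$ \local\ marking rounds in $O(1)$ \mpc\ rounds, and it processes \emph{all} degree classes simultaneously via a multi-class preparation lemma that extracts pairwise non-conflicting $\beta$-high subgraphs (one per class) on which the pairwise-independent marking is run with a single shared seed. The simultaneous processing is what makes graph exponentiation compatible with linear global memory: a preprocessing step first shrinks every class $V_{\ge i}$ by a $\Delta_i^{O(1)}$ factor, and then a charging argument shows $\sum_v \Delta_i^{O(k)}$ over the stored $k$-hop balls stays $O(m)$. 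None of these ingredients---graph exponentiation, concurrent multi-class reduction, or the $\beta$-high-graph extraction that makes pairwise Chebyshev go through---appear in your plan, and without them the round count does not reach $O(\log\log n)$.
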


This algorithm (\Cref{chp:mismm}) is obtained by improving and derandomizing the $O(\log \log n)$-time algorithm of \cite{MPC-MIS-loglogn}, which achieves a reduction of $\poly(\max(\lambda, \log n))$. Our approach not only attains a stronger guarantee deterministically, but it also simplifies the intricate pipelined \mpc\ simulation~\cite{MPC-MIS-loglogn} of an arboricity-based \local\ degree reduction by~\cite{BEPS16}. Our end result is an $O(\log \arb + \log \log n)$-round algorithm by invoking the $O(\log \Delta + \log \log n)$ deterministic algorithm of~\cite{MPC-MIS-det-general} on the resulting $\poly(\arb)$-degree graph. This proves that \mis\ and \mm\ can be solved deterministically in $O(\log \log n)$ rounds on polylogarithmic arboricity graphs, implying up to an exponential speed up.
Prior to our work, a deterministic $O(\log \log n)$-round complexity was known only for $\poly(\log n)$-maximum degree graphs.

\begin{restatable}{theorem}{mainsuperlinear}\label{thm:main_superlinear}
    There is a deterministic low-space \mpc\ algorithm that computes a maximal independent set and a maximal matching in $O(\log \arb + \log \log n)$ rounds using $\localspace$ local space and $O\left(n^{1+\eps} + m \right)$ global space, for any constant $\eps > 0$.
\end{restatable}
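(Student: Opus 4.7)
The plan is to pipeline two deterministic subroutines: first, the arboricity-sensitive degree reduction from \Cref{thm:main_deg_red}, and then the deterministic general-graph MIS/MM algorithm of~\cite{MPC-MIS-det-general} applied to the residual graph. Since after the first stage the remaining maximum degree is only $\poly(\arb)$, the second stage contributes only $O(\log \arb + \log \log n)$ rounds, which suffices to match the target bound.

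Concretely, I would first invoke \Cref{thm:main_deg_red} on $G$ to obtain, in $O(\log \log n)$ deterministic rounds and linear global memory, an independent set $\mathcal{I}_0$ (resp.\ matching $\mathcal{M}_0$) such that the residual graph $G'$---the subgraph induced on $V \setminus N[\mathcal{I}_0]$ for MIS, or on $V \setminus V(\mathcal{M}_0)$ for MM---has maximum degree $\poly(\arb)$. Constructing $G'$ in the low-space regime is an $O(1)$-round post-processing: each vertex learns whether it lies in $N[\mathcal{I}_0]$ (resp.\ $V(\mathcal{M}_0)$) via sort-and-aggregate primitives, and the incident edges are dropped. I would then invoke the $O(\log \Delta + \log \log n)$-round algorithm of~\cite{MPC-MIS-det-general} on $G'$, which since $\Delta(G') = \poly(\arb)$ terminates in $O(\log \arb + \log \log n)$ rounds, producing a maximal independent set $\mathcal{I}_1$ of $G'$ (resp.\ a maximal matching $\mathcal{M}_1$). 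By construction no vertex of $G'$ is dominated by $\mathcal{I}_0$ and no edge of $G'$ is adjacent to $\mathcal{M}_0$, so $\mathcal{I}_0 \cup \mathcal{I}_1$ (resp.\ $\mathcal{M}_0 \cup \mathcal{M}_1$) is maximal in $G$.

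Summing the two phases yields round complexity $O(\log \log n) + O(\log \arb + \log \log n) = O(\log \arb + \log \log n)$. For the memory budget, Phase~1 fits in linear global memory by \Cref{thm:main_deg_red}, while Phase~2 dominates: the algorithm of~\cite{MPC-MIS-det-general} uses $\localspace$ local and $O(n^{1+\eps} + m)$ global memory for any constant $\eps > 0$, matching the claim. The main obstacle of the overall result is not this composition---which is essentially a routine pipeline once the degree-reduction primitive is available---but the proof of \Cref{thm:main_deg_red} itself, in particular obtaining the stronger $\poly(\arb)$ bound deterministically (rather than the randomized $\poly(\max(\arb, \log n))$ bound of~\cite{MPC-MIS-loglogn}).
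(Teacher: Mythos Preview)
Your proposal is correct and follows the same pipeline as the paper: invoke the $\poly(\arb)$-degree reduction of \Cref{thm:main_deg_red}, then run the $O(\log \Delta + \log\log n)$ algorithm of~\cite{MPC-MIS-det-general} on the residual graph. The only difference is organizational: the paper's proof of \Cref{thm:main_superlinear} additionally carries out the preprocessing that reduces $\Delta$ to $O(n^{\delta})$ (via a constant-depth $H$-partition, with a separate argument for \mm\ in \Cref{sec:mmveryhighdeg}), because the machinery of \Cref{sec:mpc_part} underlying \Cref{thm:main_deg_red} was developed under the standing assumption $\Delta = O(n^{\delta})$ and the paper chose to lift that assumption here rather than inside the proof of \Cref{thm:main_deg_red}. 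Since you invoke \Cref{thm:main_deg_red} as a black box in its stated (assumption-free) form, your argument is formally complete.
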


When restricting our attention to the near-linear global memory regime (\Cref{sec:mis_near_global}), we obtain the first deterministic low-space \mis\ and \mm\ algorithms for general graphs. Each of the following round complexities is superior for different ranges of $\arb$.  
    \begin{restatable}{theorem}{mainlinear}\label{thm:mainlinear} 
        There is a \mpc\ algorithm that deterministically computes a \mis\ and a \mm\ with round complexity on the order of
        \[
            \min\left\{
                \begin{array}{l}
                    \log \arb \cdot \log\log_{\arb} n,\\
                    \arb^{1+\eps} + \log \log n \quad\text{for \mm},\\
                    \arb^{2+\eps} + \log \log n \quad\text{for \mis}
                    
                \end{array}
            \right\},
        \]
        for any constant $\eps > 0$, using $\localspace$ local space and $O\left(m + n \right)$ global space.
    \end{restatable}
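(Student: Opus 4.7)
The plan is to start by invoking \Cref{thm:main_deg_red}: in $O(\log\log n)$ rounds and linear global memory, this deterministically produces an independent set $\indepset$ (resp.\ matching $\matching$) whose removal leaves a residual graph $G'$ of arboricity at most $\arb$ and maximum degree $\poly(\arb)$. Since the cost of this preprocessing is absorbed into the additive $\log\log n$ term of each bound, it remains to solve \mis\ (resp.\ \mm) on $G'$ within the claimed complexity under the $O(n+m)$ global-memory budget. The final round complexity is then the minimum of what the two finishing strategies achieve.

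For the $\arb^{c+\eps}+\log\log n$ bound (with $c=1$ for \mm, $c=2$ for \mis), I would simulate a \local\ algorithm whose running time is polynomial in the maximum degree. On $G'$ with $\Delta' = \poly(\arb)$, the classical deterministic $O(\Delta)$-round \mm\ algorithm and $O(\Delta^2)$-round \mis\ algorithm via defective coloring give $\arb^{O(1)}$-round low-space \mpc\ procedures. To bring the exponent down to the claimed $1+\eps$ and $2+\eps$, I would accelerate the simulation using graph exponentiation, doubling the collected radius while $(\Delta')^{r}\le n^{\spaceconst}$; because each edge is communicated only $O(1)$ times per doubling and $\Delta' = \poly(\arb)$, this remains within linear global memory and contributes only a $\Theta(1/\eps)$ factor.

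For the $\log\arb \cdot \log\log_\arb n$ bound, my approach is to iterate a degree-halving primitive for $O(\log \arb)$ stages. At stage $i$, given the current maximum degree $\Delta_i$ (starting from $\poly(\arb)$), I would execute an $O(\log\log_\arb n)$-round deterministic procedure that extracts a partial independent set (resp.\ matching) whose removal reduces the maximum degree to at most $\Delta_i/2$. The primitive combines the arboricity-sensitive \local\ peeling of \cite{BEPS16} with the derandomization engine used in the proof of \Cref{thm:main_deg_red}, pipelined via graph exponentiation; the arboricity cap means that $\log_\arb n^{\spaceconst}$ radius doublings fit into local memory, matching the per-stage round count. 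After $O(\log \arb)$ such halvings the residual degree is $O(1)$ and a single local round finishes the problem, giving $O(\log \arb \cdot \log\log_\arb n)$ rounds in total.

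The main obstacle is ensuring that every phase stays within \emph{linear} global memory. In the superlinear-memory regime of \Cref{thm:main_superlinear}, one can freely store $n^{\eps}$-bit seed tables or per-vertex conditional-expectation statistics; here the same information must be compressed along a low-out-degree acyclic orientation induced by the arboricity bound so that both the seed search space and the auxiliary aggregates fit into $O(n+m)$ words. Verifying that this orientation-based encoding remains compatible with the method-of-conditional-expectations voting, and that each degree-halving phase really costs only $O(\log\log_\arb n)$ rounds under this tight memory cap, is the technical crux. Once this is in place, taking the better of the two finishing strategies and adding the $O(\log \log n)$ preprocessing of \Cref{thm:main_deg_red} yields the stated minimum.
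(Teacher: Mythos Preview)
Both finishing strategies in your proposal contain genuine gaps.

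For the $\arb^{c+\eps}+\log\log n$ bounds: after \Cref{thm:main_deg_red} the residual degree is $\Delta'=\arb^{C}$ for some \emph{fixed} constant $C$ determined by that theorem, not $C=1+\eps$. Running an $O(\Delta')$- or $O((\Delta')^2)$-round \local\ algorithm therefore costs $\arb^{C}$ or $\arb^{2C}$ rounds, and graph exponentiation cannot repair the exponent: it divides the round count by the collected radius $r$, but $r\le \alpha\log_{\arb} n$ and is $O(1)$ whenever $\arb=n^{\Omega(1)}$, so the exponent on $\arb$ is unchanged. Moreover, storing the $r$-hop ball of every vertex costs $n\cdot(\Delta')^{r}$ words, which is not $O(n+m)$ unless the vertex count has already been thinned; your proposal does no such thinning. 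The paper obtains the exponents $1+\eps$ and $2+\eps$ differently (\Cref{lm:low-arb}): after the degree reduction it builds an $H$-partition of degree $d=\arb^{1+\eps}$ in $O(\log\log n)$ rounds, and then for \mm\ runs Panconesi--Rizzi on the resulting $d$-forest decomposition in $O(d)=O(\arb^{1+\eps})$ rounds, while for \mis\ it computes an $O(d^2)=O(\arb^{2+\eps})$ coloring via Barenboim--Elkin's Arb-Linial procedure and iterates over color classes.

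For the $\log\arb\cdot\log\log_\arb n$ bound: the degree-halving plan cannot be carried out with the primitive you invoke. \Cref{lem:degree_reduction_extract_single} requires the threshold to satisfy $\Delta_i\ge\poly(\arb)$, so once the maximum degree is already $\poly(\arb)$ after \Cref{thm:main_deg_red} you cannot keep halving it down to $O(1)$ with that machinery. The paper's route is entirely different and never tries to push the maximum degree lower. It runs a derandomized Luby process that shrinks the number of remaining \emph{edges} by a constant factor per \local\ step; in iteration $i$ it simulates $2^i\cdot\Theta(\log\arb)$ such steps using the $2^i$-hop ball, costing $O(\log\arb)$ \mpc\ rounds and shrinking the edge count by an $\arb^{\Omega(2^i)}$ factor, which is exactly what pays for the next exponentiation step within $O(n+m)$ global space. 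The linear-memory derandomization is not an orientation-based encoding; it is a new pessimistic estimator for a single Luby round (\Cref{sec:higharbcase}) that is computable from one-hop neighborhoods only, thereby avoiding the two-hop collection that forces the $n^{1+\eps}$ global space in~\cite{MPC-MIS-det-general}. You correctly flag linear global memory as the crux, but your proposal does not supply this estimator.
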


The first runtime bound is obtained via a space-efficient \mpc\ derandomization of Luby's algorithm. In particular, we reduce the global memory usage of the $O(\log n)$-round algorithm of~\cite{MPC-MIS-det-general} from $O(n^{1+\eps}+m)$ to $O(n+m)$ by devising a pessimistic estimator that does not require knowledge of the two-hop neighborhood, i.e., it relies only on the one-hop neighborhoods to find a partial solution that decreases the problem size by a constant factor. Then, the graph exponentiation process yields the final runtime, where the $O(\log \arb)$-multiplicative factor stems from the $\arb^{\Omega(1)}$-progress needed at each iteration to respect the linear global memory constraint, as opposed to the  $O(\log \arb)$-additive term of \cref{thm:main_superlinear}.
For low-arboricity graphs, i.e., in the $(\log^{O(1)}\log n)$-range, we apply our $\poly(\arb)$-degree reduction followed by a fast reduction to arboricity coloring, and adapt coloring algorithms by Barenboim and Elkin~\cite{BE10} and Panconesi and Rizzi~\cite{PR01} to achieve the last two round complexities. Even for trees ($\arb = 1$) the best deterministic result had $O(\log^2 \log n)$ running time~\cite{MPC-MIS-det-log2logn}.

\paragraph{Arboricity Coloring}
We design a simple constant-round deterministic algorithm for the problem of computing a vertex coloring with $O(\arb)$ colors in the linear memory regime of \mpc, with relaxed global memory $n\cdot \poly(\arb)$.

\begin{restatable}{theorem}{mainarbcolalltoall}\label{thm:main_arbcol_alltoall}
        Given a graph $G$ with arboricity $\arb$, a legal $O(\arb)$ vertex coloring of $G$ can be computed deterministically in $O(1)$ on a linear-memory \mpc\ using $n \cdot \poly\arb$ global memory. 
\end{restatable}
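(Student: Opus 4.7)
The idea is to reduce the problem---via a single deterministic partition---to a constant number of instances of $(\Delta+1)$-list-coloring on subgraphs of maximum degree $O(\arb)$, each of which can then be solved in $O(1)$ rounds by the deterministic algorithm of Czumaj, Davies, and Parter~\cite{CDP20cc1}. Concretely, any graph of arboricity $\arb$ admits a partition of its vertex set into $c = O(1)$ classes $V_1, \ldots, V_c$ such that $G[V_i]$ has maximum degree $O(\arb)$ for each $i$, i.e., a $(c, O(\arb))$-defective coloring: a uniformly random assignment of each vertex to one of $c$ classes makes the intra-class degree of each $v$ concentrate around $\deg(v)/c$, and for $c$ a sufficiently large constant this is at most $2\arb$ at every vertex with positive probability. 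If we can produce such a partition in $O(1)$ deterministic \mpc\ rounds using $n \cdot \poly(\arb)$ global memory, the remainder of the algorithm is essentially free.

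\textbf{Using the defective partition.} Given the classes $V_1,\ldots,V_c$, assign disjoint palettes $P_1,\ldots,P_c$ of size $\Theta(\arb)$ each, and invoke the constant-round deterministic $(\Delta+1)$-list-coloring algorithm of~\cite{CDP20cc1} on each $G[V_i]$ in parallel. Because $G[V_i]$ has maximum degree $O(\arb)$ and $|P_i|=\Omega(\arb)$, each list-coloring invocation succeeds. The arboricity bound $|E(G[V_i])| \le \arb \cdot |V_i|$ guarantees that the $c = O(1)$ parallel invocations together fit within an $O(\arb n) = n \cdot \poly(\arb)$ memory budget on a linear-memory \mpc. The resulting coloring uses $c \cdot \Theta(\arb) = O(\arb)$ colors in total, and the disjointness of palettes across classes rules out monochromatic cross-class edges.

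\textbf{The hard part.} The main obstacle is the $O(1)$-round deterministic computation of the $(O(1), O(\arb))$-defective coloring itself. The natural randomized construction draws a single seed from a small (e.g., pairwise-independent) hash family and uses it to route vertices to classes; concentration of the intra-class degree then yields defect $O(\arb)$ at every vertex with high enough probability. To derandomize in only $O(1)$ rounds, I plan to follow the framework sketched in \Cref{sec:rand-vs-det}: pick the seed via the method of conditional expectations applied to a pessimistic estimator which is a convex surrogate of the intra-class degrees summed over vertices. Because the seed lives in a universe of size $\poly(\arb)$, the entire search can be performed by $O(1)$ rounds of all-to-all communication in which each machine evaluates the estimator on its $O(n)$-sized portion and the global sum is aggregated---using exactly the $n\cdot\poly(\arb)$ global memory to hold the necessary state. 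Once this single partitioning step is in hand, the rest of the algorithm is a parallel black-box call to~\cite{CDP20cc1}, matching the "one single graph partitioning step" description in the introduction.
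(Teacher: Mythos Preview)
Your central step is wrong. You assert that a uniformly random assignment of vertices to $c=O(1)$ classes gives every vertex intra-class degree $O(\arb)$ with positive probability, because ``the intra-class degree of each $v$ concentrate[s] around $\deg(v)/c$.'' But $\deg(v)/c$ is not $O(\arb)$ in general: arboricity bounds only the average degree, not the maximum. A star has $\arb=1$ while its center has degree $n-1$; under a random $c$-way partition the center's intra-class degree is $\Theta(n/c)$, not $O(1)$. The event you propose to derandomize therefore has probability zero, and no pessimistic estimator or conditional-expectations search over a pairwise-independent family can rescue it. Even if an $(O(1),O(\arb))$-defective coloring happened to exist abstractly, your construction does not produce one, and you give no alternative.

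The paper's route is structurally different. It first peels off $O(1)$ layers of an $H$-partition of degree $\arb^{1+\delta}$; the residual graph has $O(n)$ edges and is colored on one machine. Inside each layer (now of maximum degree $\arb^{1+\delta}$) it partitions into $\ell=\arb^{0.6}$ bins, not $O(1)$, and the quantity being concentrated is the \emph{out-degree} in a $2\arb$-orientation, not the full degree; this drops to $O(\arb/\ell)$ for all but $n/\arb^{3}$ ``bad'' vertices under $k$-wise independence. Each bin then has $O(n/\ell)$ vertices and induced arboricity $O(\arb/\ell)$, hence $O(n)$ edges, so it is gathered and colored \emph{locally on a single machine} with $O(\arb/\ell)$ colors via a greedy $H$-partition---no call to the list-coloring of~\cite{CDP20cc1} on the bins is made. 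Bad vertices get a fresh $2\arb$-palette. Finally, because arboricity is a global quantity, the derandomization cannot be a direct per-vertex conditional expectation: the paper first filters the seed space to those hash functions under which every bin has linear size, and only then selects among the survivors by counting, bin by bin, how many vertices the local coloring leaves uncolored.
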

    
Our constant-round algorithm is strikingly simple: After one single step of graph partitioning, all induced subgraphs are of linear size and can thus be collected and solved locally. We also extend this algorithm to the \congested\ model when $\poly(\lambda) = O(n^2)$. For the linear-memory \mpc\ setting with linear global memory, we show an $O(\arb)$-coloring algorithm running in $O(\log \arb)$ time. Both results (\Cref{chp:col}) improve exponentially upon the deterministic $(\arb^{\Omega(1)})$-round algorithm of \cite{BK18}.

\subsection{Comparison with the State-of-the-Art}
We review the main ideas behind our arboricity-dependent \mis, \mm, and coloring algorithms.

\paragraph{Arboricity-Dependent \mis\ and \mm\ Algorithms} 
The arboricity-dependent \mpc\ algorithms of~\cite{MPC-MIS-log2logn, MPC-MIS-loglogn} rely on a key degree reduction routine by~\cite{BEPS16}, which essentially reduces the number of nodes of degree at least $\beta$ by a $(\beta^{\Omega(1)})$-factor in $O(1)$ \local\ rounds for any $\beta \ge \poly(\log n, \arb)$ with high probability.

\subparagraph{\textit{Randomized Approaches.}} 
The $O(\log \log n \cdot \log \log \Delta)$-round algorithm of \cite{MPC-MIS-log2logn} shows that a black-box simulation of the above degree-reduction routine can be performed on a low-memory \mpc\ in $O(\log \log n)$ rounds to achieve a $(\Delta^{\eps})$-degree reduction, for any constant $\eps > 0$. Therefore, the repeated application of this routine for $O(\log \log \Delta)$ phases (one phase for each degree class with quadratically-spaced degree classes) leads to the claimed round complexity.
By using a pipelining idea, ~\cite{MPC-MIS-loglogn} shows that these $O(\log \log \Delta)$ phases can be simulated concurrently in $O(\log \log n)$ rounds. 
The idea being that as more vertices become irrelevant in the current phase, the next phase can be started concurrently with the current phase which has not necessarily finished completely.
However, in their framework, directly lifting multiple \local\ degree-reduction routines to low-space \mpc, which operate in conflicting parts of the same graph, results in a quite involved \mpc\ simulation and analysis, e.g., it requires phases to be asynchronous, a non-uniform graph exponentiation process, the concept of pending vertices, and keeping track of the message history of certain nodes.

\subparagraph{\textit{Our Deterministic Approach.}} 
Our deterministic approach is based on a novel pairwise independent analysis of the \local\ degree reduction by~\cite{BEPS16} and a clean \local\ pipelining framework, which makes derandomization in the \mpc\ model possible. We first inspect the internal details of the primitive by~\cite{BEPS16} to achieve the same expected degree reduction as under full independence solely relying on pairwise independence. We then define a \textit{pipelined} simulation of this routine by extracting multiple subgraphs, one for each degree class, on which we can make progress simultaneously. Thereby, we bring the \mpc\ pipeline framework of~\cite{MPC-MIS-loglogn} at the \local\ level. This locality makes the concurrency of the algorithm easier to handle and, thus, the \mpc\ simulation and its analysis significantly simpler. In fact, we show that our carefully-designed \local\ procedure can be simulated and derandomized on a low-space \mpc\ for multiple stages efficiently. 

We also show that initially reducing the size of all degree classes by a $\Delta^{\Omega(1)}$-factor makes it possible to carry out all phases synchronously with a uniform graph exponentiation process, i.e., at each step $k$ all nodes have knowledge of their $k$-hop neighborhood. Our algorithm overcomes several other challenges such as reducing the maximum degree of the input graph deterministically to $O(n^\delta)$, for any constant $\delta > 0$, and limiting the random bits for the derandomization of the $O(\log \log \Delta)$ algorithmic routines to $O(\log \Delta)$. On a qualitative level, our \local\ procedures based on shared random bits may provide a more intuitive way to convert \local\ randomized experiments to deterministic \mpc\ algorithms. 

\paragraph{Arboricity-Coloring Algorithms} 
Our approach builds upon the vertex partitioning techniques introduced by~\cite{CFG+19, CDP20cc1, GS19}. 
We derandomize a binning procedure that partitions the vertices of the input graph in $k$ subsets, each of which induces a subgraph of arboricity at most $O(\arb/k)$ so that the product of the number of bins and the induced arboricity is linear.
We then color each of the $k$ parts with $O(\arb/k)$ colors separately and independently. Moreover, bad-behaving nodes, which were excluded by the partitioning algorithm due to their high degree, will induce a subgraph of size $O(n)$ that can be colored locally using a fresh color palette of size $2\arb$. 

Our key technical contribution is a derandomization scheme that overcomes the inapplicability of the distributed method of conditional expectation. In fact, as opposed to the derandomization scheme for $(\Delta+1)$-coloring of~\cite{CDP20cc1}, arboricity is a global quality measure that cannot be decomposed into functions computable by gathering neighbors of each node on a single machine. To circumvent this, we find a subset of candidate partitions whose $k$ parts are of linear size and can be checked by a single machine, yet allowing only a $\poly(\arb)$ overhead factor in the global space.

We note that the randomized constant-round arboricity-coloring of \cite{GS19} uses a similar graph partitioning approach as a preprocessing step, which allows them to work with bins of arboricity $O(\log n)$ whenever $\arb = \Omega(\log n)$. However, the main challenge there is to solve the problem for $O(\log n)$-arboricity graphs. To do so, they take an orthogonal approach compared to ours, that is, a topology-gathering idea that simulates a $O(\log^2 \beta)$-round \congest\ coloring algorithm for $\beta = \poly(\log \log n)$-arboricity graphs in $O(1)$ time.

For the second algorithm, we observe that \cite{BK18} computes $O(\log \arb)$ layers of a graph decomposition known as $H$-partition, after which the graph has linear size. We show that applying the $(\Delta+1)$-list coloring algorithm of~\cite{CDP20cc1} on each layer yields an $O(\arb)$-coloring.

\section{Preliminaries}\label{chp:tools}
This section introduces the notation, useful definitions, and the algorithmic routines frequently used in the subsequent sections.

    \paragraph{Notation} For an integer $k \ge 1$, we use $[k]$ to denote the set $\{1,2,\dots,k\}$.
    Let~$G=(V,E)$ be the undirected $n$-node $m$-edge graph given in input with maximum degree $\Delta$.
    Denote $\deg_G(u)$ as the degree of vertex $u$ in graph $G$ and $N_G(u)$ as the set of neighbors of $u$ in $G$ (when $G$ is clear from the context, we may omit it).
    The distance from $u$ to $v$, $\dist_G(u,v)$, is the length of the shortest path from $u$ to $v$ in $G$. 
    For a vertex subset $U \subseteq V$ or edge subset $U\subseteq E$, we use $\deg_U(u)$ and $N_U(u)$ to refer to the subgraph of $G$ induced by $u \cup U$ denoted by $G[u \cup U]$. 
    For a vertex subset $V' \subseteq V$, let $N_G(V')$ be the union of the neighbors of each $v \in V'$.
    Let $\dist_G(u,V')$ be the minimum distance from $u$ to any vertex in $V'$.
    Further, we denote $G^k$, for $k \ge 2$, as the graph where each pair of vertices at distance at most $k$ in $G$ is connected. 
    The $k$-hop neighborhood of a vertex $v$ in $G$ is the subgraph containing all vertices $u$ satisfying $\dist_G(u,v)\le k$
    and the $k$-hop neighborhood of a vertex set $V'\subseteq V$ is the union of the $k$-hop neighbors of $v\in V'$. 
    Finally, denote $G \setminus \indepset$ as the graph obtained by removing the nodes in $\indepset$ and their neighbors. Similarly, $G \setminus \matching$ denotes the graph where nodes in $\matching$ are removed.

    \paragraph{Arboricity Definitions}
    The arboricity $\arb$ of $G$ is equal to the minimum number of forests into which the edge set of $G$ can be partitioned. It is closely linked to the following graph-theoretic structure known as $H$-partition~\cite{BE10}.
    \begin{definition}[$H$-Partition]\label{def:hpartition}
        An $H$-partition $\hpart$ of a $\arb$-arboricity graph $G$ parameterized by its \textit{degree} $d$, for any $d > 2\arb$, is a partition of the vertices of $G$ into layers $H_1, \ldots, H_\last$ such that every vertex $v \in H_i$ has at most $d$ neighbors in layers with indexes equal to or higher than $i$, i.e., $\bigcup_{j=i}^{\last} L_j$.
    \end{definition}
    \fullOnly{
    We defer arboricity properties and the algorithmic construction of an $H$-partition to \Cref{sec:graph_arboricity}.}
    \shortOnly{We defer arboricity properties and the algorithmic construction of an $H$-partition to the full version of the paper.}

    \paragraph{Primitives in Low-Space \mpc}
    We use the \mpc\ primitives of the following lemma as black-box tools, which allow us to perform basic computations on graphs deterministically in $O(1)$ rounds. These include the tasks of computing the degree of every vertex, ensuring neighborhoods of all vertices are stored on contiguous machines, collecting the $2$-hop neighborhoods, etc.
    \begin{lemma}[\cite{G99, GSZ11}]\label{lem:primitives}
        For any positive constant $\spaceconst$, sorting, filtering, prefix sum, predecessor, duplicate removal, and colored summation task on a sequence of $n$ tuples can be performed deterministically in MapReduce (and therefore in the \mpc\ model) in a constant number of rounds using $\memory = n^\spaceconst$ space per machine, $O(n)$ global space, and $\widetilde{O}(n)$ total computation.
    \end{lemma}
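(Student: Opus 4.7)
The plan is to reduce every task in the list to deterministic sorting, and then establish sorting as the workhorse primitive using the classical BSP/MapReduce construction behind the cited references. Once an $O(1)$-round deterministic sort is in hand, the other operations are simple $O(1)$-round post-processing steps: for \emph{prefix sum}, sort by key, compute local prefixes on each machine, then propagate per-machine totals through a balanced aggregation tree of height $O(1/\alpha)$; for \emph{predecessor}, tag each query with a flag, sort the union of data and queries, and let each query read off the nearest preceding data element on its machine (with one extra round to handle boundaries); for \emph{duplicate removal}, sort and keep the first occurrence in each contiguous run of equal keys; for \emph{filtering}, sort by the filter bit and compact; for \emph{colored summation}, sort by color and perform a local sum over each contiguous color block, then aggregate partial sums per color via the same $O(1)$-round tree. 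All of these add only $\tilde O(n)$ additional total work and constantly many rounds.

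The main obstacle, and the only nontrivial step, is the sort itself. The proposal is to use a deterministic $n^{\alpha}$-way sample sort in the style of Goodrich (and its MapReduce adaptation by Goodrich, Sitchinava, and Zhang). Distribute the $n$ tuples evenly across $M := n^{1-\alpha}$ machines. Each machine sorts its local block of size $n^{\alpha}$ and outputs $M-1$ equally spaced splitters. All $M(M-1) = O(n^{2(1-\alpha)})$ splitters are then gathered, which fits when $2(1-\alpha) \le 1$; in general one applies the scheme recursively a constant number of times (with branching factor $n^{\alpha}$ the recursion depth is $O(1/\alpha) = O(1)$) to reduce the splitter-selection problem until it fits on a single machine. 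A final set of $M-1$ global pivots is chosen from the sorted splitters and broadcast to every machine; each machine then partitions its local data against the pivots and routes each piece to the corresponding destination machine, after which the destinations sort locally.

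The key deterministic invariant I would verify is the standard regular-sampling guarantee: with equally spaced local splitters, each global bucket receives at most $O(n^{\alpha})$ elements, so the routing step respects the local-memory bound $\memory = n^{\alpha}$ and no machine overflows. Routing itself is realized by a constant number of sort/prefix-sum subroutines on metadata of size $O(M)$, which sits comfortably inside one machine, so it does not break the recursion. The round count is $O(1/\alpha)$, which is $O(1)$ for constant $\alpha$, and because each element participates in only $O(1)$ local sorts and $O(1)$ routings, the total computation is dominated by $O(n \log n) = \tilde{O}(n)$.

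Once sorting is established with these parameters, the earlier reductions give the remaining primitives within the same round, local-memory, global-memory, and total-work budgets, yielding the lemma. The delicate part is purely the sorting analysis---specifically the regular-sampling bucket-size bound and the recursive handling of splitter selection---since the other primitives follow by direct application of sort plus a constant-depth aggregation tree.
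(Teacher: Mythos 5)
The paper does not prove this lemma at all---it is imported as a black box from the cited works of Goodrich and Goodrich--Sitchinava--Zhang---and your reconstruction (reduce everything to deterministic sorting, then realize sorting by a constant-depth, $n^{\alpha}$-ary deterministic sample/distribution sort with the regular-sampling bucket-size bound) is essentially the standard argument from those references. The only point to be careful about is that for $\alpha<1/2$ a machine holding $n^{\alpha}$ tuples cannot emit $M-1=n^{1-\alpha}-1$ splitters, so the branching factor must genuinely be $n^{\Theta(\alpha)}$ with $O(1/\alpha)$ levels of recursion rather than a single $M$-way split---which you do note, so the proof stands.
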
 

    \paragraph{Derandomization Framework}
    \fullOnly{
    The derandomization techniques applied in all-to-all communication models are described in more detail in \Cref{sec:derandomization_framework}.}
    \shortOnly{The derandomization techniques applied in all-to-all communication models are described in more detail in the full version of the paper.} We here explain a simple coloring trick that has been frequently used to reduce the number of random bits needed.
    Specifically, if the outcome of a vertex depends only on the random choices of its neighbors at distance at most $t$, then $k$-wise independence among random variables of vertices within distance $t$ is sufficient. Whenever this is the case, we find a coloring that maps vertex IDs to shorter names, i.e., colors, such that ``adjacent'' vertices are assigned different names. To achieve that, we apply Linial's coloring algorithm~\cite{doi:10.1137/0221015} as stated in the following lemma and based on an explicit algebraic construction~\cite{EFF85, 10.1145/1583991.1584032}. For a proof of its \mpc\ adaptation, we refer to~\cite{FGG22P}.
    \begin{lemma} \label{lm:coloring}
        Let $G = (V,E)$ be a graph of maximum degree $\Delta \leq n^{\spaceconst}$ and assume we are given a proper $C$-coloring of $G$. Then, there is a deterministic algorithm that computes a $O(\Delta^2 \log^2 C)$ coloring of $G$ in $O(1)$ \mpc\ rounds using $O(n^\spaceconst)$ local space, $O(n + m)$ global space, and $\widetilde{O}(n \cdot \poly(\Delta))$ total computation. If $C = O(\Delta^3)$, then the number of colors can be reduced to $O(\Delta^2)$.
    \end{lemma}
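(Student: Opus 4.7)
The plan is to instantiate Linial's classical one-shot coloring algorithm on the \mpc\ model using an explicit cover-free set family whose membership oracle is cheap to compute locally.

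First, I would construct (or cite, following \cite{EFF85, 10.1145/1583991.1584032}) an algebraic cover-free family $S_1,\dots,S_C$ over a universe $U$ with $|U| = O(\Delta^2 \log^2 C)$ satisfying the following: for every $c\in[C]$ and every choice of at most $\Delta$ indices $c_1,\dots,c_\Delta\in[C]\setminus\{c\}$, the set $S_c \setminus \bigcup_{i=1}^{\Delta} S_{c_i}$ is nonempty. Concretely, each color $c$ is identified with a low-degree polynomial over a finite field and $S_c$ is taken to be its graph; two polynomials of degree at most $d$ agree on at most $d$ points, which yields the cover-free property with the stated universe size. The essential algorithmic feature is that given $c\in[C]$ and $x\in U$, one can decide whether $x\in S_c$ in $\log^{O(1)} C$ time and space.

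The coloring rule is the standard Linial selection: every vertex $v$ with current color $c(v)$ outputs the lexicographically smallest element $x \in S_{c(v)}$ such that $x \notin S_{c(u)}$ for every neighbor $u \in N(v)$. The cover-free guarantee, applied with the $\deg(v) \leq \Delta$ neighbors, shows that such an $x$ always exists, and if two adjacent vertices $u,v$ chose the same $x$ then $x \in S_{c(u)} \cap S_{c(v)}$, contradicting the selection rule at one of the endpoints. Hence the output is a proper coloring over the palette $U$ of size $O(\Delta^2 \log^2 C)$.

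For the \mpc\ simulation, note that since $\Delta \leq n^{\spaceconst}$, the list of at most $\Delta$ neighbor colors of any vertex $v$ fits on a single machine of memory $O(n^{\spaceconst})$. Using the sorting, filtering, and colored-summation primitives of \Cref{lem:primitives} in $O(1)$ rounds, I would route messages so that each vertex receives the colors of its neighbors in a contiguous block on one machine. That machine then executes the selection rule by a single left-to-right scan over $U$, testing membership via the explicit construction. The per-vertex local work is $O(|U|\cdot\Delta\cdot\log^{O(1)} C) = \poly(\Delta)\cdot\log^{O(1)} n$, summing to $\widetilde{O}(n\cdot\poly(\Delta))$ total computation, while the global space remains $O(n+m)$. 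For the second bullet, when $C = O(\Delta^3)$ one has $\log C = O(\log \Delta)$, and plugging in Linial's original construction over a field of size $\Theta(\Delta)$ yields a cover-free family with $|U| = O(\Delta^2)$; the \mpc\ implementation is identical.

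The main obstacle I anticipate is the bookkeeping for neighborhood gathering when $\Delta$ is close to the per-machine memory bound $n^{\spaceconst}$: a naive routing would require $\Theta(\Delta)$ machines to cooperate on a single vertex's selection step. This is precisely what \Cref{lem:primitives} handles, since it lays out a vertex's adjacency list contiguously across machines in $O(1)$ rounds; once the neighbor colors are sorted in place, the selection rule over $U$ is a streaming computation and therefore parallelizes across the machines that hold pieces of that list via a standard prefix-sum-style aggregation to identify the first surviving element of $S_{c(v)}$.
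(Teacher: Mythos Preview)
The paper does not prove this lemma in-line; it only states that the result is Linial's algorithm instantiated with the explicit algebraic (polynomial-based) cover-free construction of \cite{EFF85, 10.1145/1583991.1584032} and defers the \mpc\ adaptation to \cite{FGG22P}. Your proposal spells out precisely that construction and its \mpc\ implementation correctly, so it is exactly the argument the paper is pointing to.
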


    \fullOnly{    
        \paragraph{Graph Exponentiation}
        A central technique to simulate \local\ algorithms exponentially faster via all-to-all communication is that of letting each node send its $r$-hop neighborhood to each of the nodes in it, so that in one round every node learns the topology of the ball of radius $2r$ around itself~\cite{LW10}. Then, each node can simulate $2r$ rounds by computing the choices made by its $2r$-hop neighbors. The repeated application of this process often leads to an exponential speed up provided that the exchanged information respects bandwidth and both local and global space constraints. Our \mis\ and \mm\ algorithms are based on this technique, which will be explained case by case.%
    }
\section{Deterministic Arboricity-Based Degree Reduction}\label{chp:mismm} 
Our degree-reduction procedure operates on multiple degree classes simultaneously. \Cref{subsec:local_alg} is devoted to our degree-reduction algorithm in the \local\ model, as overviewed next. We first consider a single degree class and then switch our view to multiple classes. Our single-class degree reduction routine consists of two phases. The first phase (\Cref{lem:degree_reduction_extract_single}) is a preparation step where a subgraph with a certain structure is computed. In the second phase, we find a partial solution on this subgraph using \Cref{lem:degree_reduction_pairwise}. The multi-class degree reduction routine has two analogous phases. In \Cref{lem:degree_reduction_extract_all}, we extend the single-class preparation phase to work with multiple degree classes concurrently. We do so by computing a suitable subgraph for each degree class such that subgraphs from different classes are non-conflicting. Then, by finding partial solutions for each degree class on the subgraphs from the previous step, we show that we can make a global progress without requiring additional random bits (\Cref{lem:degree_reduction_multi_single}). In addition, we consider multiple consecutive runs of our multi-class degree reduction routine in \Cref{lem:degree_reduction_multi_multi}, which will be crucial to achieve the \mpc\ speed up via graph exponentiation. 
Finally, in \Cref{sec:mpc_part}, we proceed to derandomize and simulate (multiple) \local\ degree reduction phases efficiently on an \mpc\ with strongly sublinear memory.

\subsection{The \local\ Algorithm}\label{subsec:local_alg}
It turns out that so-called $\beta$-high graphs play an important role in finding partial \mis\ and \mm\ solutions.

    \begin{definition}[$\degreehigh$-high-graph]
    Let $\degreehigh$ be a parameter.
    Let $H$ be a graph on vertices $V(H) = V^{high} \sqcup V^{low}$ where $\sqcup$ denotes $V^{high} \cap V^{low} = \emptyset$. 
    We say that $H$ is a $\degreehigh$-high graph if the following holds:
        \begin{enumerate}
            \item $\deg_{V^{low}}(v) \ge \degreehigh^{4}$ for every $v \in V^{high}$, and \\
            \item $\deg_{V^{high}}(v) \le \degreehigh$ and $\deg_{V^{low}}(v) \le \degreehigh^2$ for every $v \in V^{low}$.
        \end{enumerate}
    \end{definition}

    We use a $\degreehigh$-high graph to find an \mis\ (or a \mm) on low-degree nodes, using a random process based on pairwise independence, such that only a $\degreehigh$-fraction of high-degree nodes remain in expectation, as the following lemma shows.

\begin{lemma}\label{lem:degree_reduction_pairwise}
    Let $H$ be a $\degreehigh$-high graph.
    There is a \local\ algorithm running in $O(1)$ rounds that computes an independent set $\indepset$ (or a matching $\matching$) of $H$ such that in the remaining graph $H' = H \setminus \indepset$ (or $H \setminus\matching$) the following holds
    \[\E[|V^{high}(H')|] \leq \frac{\card{V^{high}(H)}}{\degreehigh^{\Omega(1)}}.\]
    Moreover, the only source of randomness in the algorithm is $O(\log C)$ shared random bits with $\poly(\Delta) \le C \le \poly(n)$, where $C$ is an upper bound on the number of colors (or IDs) that we assume to be assigned to vertices in $V^{low}$ such that any two nodes whose random choices are assumed to be independent are given different colors.
\end{lemma}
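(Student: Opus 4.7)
My plan is to adapt the classical BEPS one-shot degree-reduction sampling routine so that its analysis goes through under only pairwise-independent sampling bits, which is what makes the $O(\log C)$ seed length (and hence later \mpc\ derandomization) possible. For the independent-set case, I sample each $u \in V^{low}$ into a tentative set $S$ with probability $p = \Theta(1/\degreehigh^2)$, reading its sampling bit off the shared random string through the color of $u$. I then take
\[
\indepset := \{\, u \in S : N_{V^{low}}(u) \cap S = \emptyset \,\},
\]
which is automatically an IS of $H[V^{low}]$ and hence of $H$. The goal is to show that each $v \in V^{high}$ loses a $V^{low}$-neighbor to $\indepset$ with probability $1 - 1/\degreehigh^{\Omega(1)}$, so that linearity yields $\E[|V^{high}(H')|] \le |V^{high}(H)|/\degreehigh^{\Omega(1)}$.

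The first moment comes from a pairwise-independent inclusion-exclusion: letting $Y_u = \mathbbm{1}[u \in \indepset]$, I get $\E[Y_u] \ge p - \degreehigh^2 p^2 \ge p/2$ using $\deg_{V^{low}}(u) \le \degreehigh^2$, hence $\mu_v := \E\bigl[\sum_{u \in N_{V^{low}}(v)} Y_u\bigr] = \Omega(\degreehigh^2)$ once we invoke $|N_{V^{low}}(v)| \ge \degreehigh^4$. The second moment is where the argument is delicate: since only the underlying sampling bits (and not the $Y_u$'s themselves) are pairwise independent, for non-adjacent $u, u' \in N_{V^{low}}(v)$ I would bound $\E[Y_u Y_{u'}]$ by a second pairwise-independent inclusion-exclusion and pair each nonzero covariance term with an overlap of the closed $V^{low}$-neighborhoods of $u$ and $u'$, which can only occur for the $O(\degreehigh^4)$ partners within $V^{low}$-distance $\le 2$ of $u$. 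Summing these covariances and applying Chebyshev's inequality, the $|N_{V^{low}}(v)| \ge \degreehigh^4$ versus $\deg_{V^{low}}(u) \le \degreehigh^2$ gap absorbs the slack and yields $\Pr[X_v = 0] = O(1/\degreehigh^{\Omega(1)})$. The coloring role is now clear: I need the sampling bits of any two $V^{low}$-vertices within distance $\le 2$ (of which each $u$ has $O(\degreehigh^4)$) to be drawn from distinct indices of the shared pairwise-independent string, which is exactly what the distinct-color assumption combined with $C \ge \poly(\Delta)$ provides.

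For the matching variant I use the same sampling template but with an additional proposal step: each sampled $u \in V^{low}$ proposes along one of its $V^{high}$-edges, selected deterministically via color order, and $(u,v)$ enters $\matching$ when no $V^{low}$-neighbor of $u$ is also sampled and $v$ receives a unique active proposal. The same second-moment analysis on the $V^{low}$-side of each $v \in V^{high}$, now using $\deg_{V^{high}}(u) \le \degreehigh$ to control proposal collisions at $v$, again shows that $v$ is matched with probability $1 - 1/\degreehigh^{\Omega(1)}$.

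I expect the main obstacle to be precisely this second-moment estimate under pairwise-only independence: one must carefully charge the covariance terms against the factor-$\degreehigh^2$ slack that the $\degreehigh$-high graph definition places between the high-side demand $|N_{V^{low}}(v)| \ge \degreehigh^4$ and the low-side conflict budget $\deg_{V^{low}}(u) \le \degreehigh^2$. This slack is precisely the reason the definition inflates the high-side degree to $\degreehigh^4$ rather than $\degreehigh^2$, and it is what ultimately enables the seed length to stay at $O(\log C)$.
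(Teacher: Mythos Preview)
Your second-moment step contains a genuine gap. You write that under pairwise-independent sampling, nonzero covariance between $Y_u$ and $Y_{u'}$ ``can only occur for the $O(\degreehigh^4)$ partners within $V^{low}$-distance $\le 2$ of $u$,'' i.e., when the closed $V^{low}$-neighborhoods of $u$ and $u'$ overlap. That implication holds under full independence but is \emph{false} under pairwise independence: functions of disjoint blocks of pairwise-independent bits need not be independent (for a toy example take four uniform bits $A,B,C,D$ with $D=A\oplus B\oplus C$; all pairs are independent, yet $A\oplus B$ and $C\oplus D$ are identical). Hence there is no reason the $\Theta(\degreehigh^8)$ ``far'' pairs in $N_{V^{low}}(v)$ contribute zero covariance, and your accounting collapses.

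The paper's fix is different from what you propose and also forces a different sampling rate. It never tries to zero out far-pair covariances; instead it observes that for \emph{every} pair one has $\E[Y_uY_{u'}]\le \Pr[u,u'\in S]=p^2$ (this much pairwise independence does give), while $\E[Y_u]\ge p-\degreehigh^2p^2$, so $\Cov[Y_u,Y_{u'}]\le p^2-(p-\degreehigh^2p^2)^2\le 2\degreehigh^2p^3$ uniformly. Summing this over all $\Theta(\degreehigh^8)$ pairs and dividing by $\mu^2=\Theta(\degreehigh^8p^2)$ leaves a $\Theta(\degreehigh^2p)$ term in the Chebyshev ratio, which is why the paper takes $p=1/\degreehigh^{3}$ rather than your $p=\Theta(1/\degreehigh^{2})$; with your choice this term is $\Theta(1)$ and the bound is useless. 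For the matching case the paper avoids the whole issue by a simpler process: each $u\in V^{low}$ proposes to a single hashed $V^{high}$-neighbor, so the indicators ``$u$ proposes to $v$'' are \emph{themselves} pairwise independent and plain Chebyshev applies directly. Your matching scheme routes through the same ``sampled and isolated'' indicators and therefore inherits the same flawed second-moment argument.
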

\begin{proof}
    We analyze a random sampling approach that builds an independent set $\indepset$ on the vertex set $V^{low}$  (or matching $\matching \subseteq E(H)$).
    For \mis, we sample each node $u\in V^{low}$ with probability $p = 1/\pvalue$ using pairwise independent random variables. Then, $u$ joins $\indepset$ if none of its neighbors is sampled. Consider now an arbitrary node $v$ in $V^{high}$. Let $\{ X_u \}_{u \in N_{V^{low}}(v)}$ be the random variables denoting the event that $u$ joins $\indepset$ and denote $X = \sum_{u \in N_{V^{low}}(v)} X_u$ as their sum. We have
    \begin{align*}\label{eq:arbmisfirstmoment}
        \Pr[X_u = 1] &\ge \Pr\left[u \text{ sampled}\right] - \sum_{\substack{u' \in N_{V^{low}}(u)}}\Pr[u,\, u' \text{ sampled}]  \ge p - \beta^2p^2,
    \end{align*} 
    where the second inequality follows from pairwise independence. It follows that the expected number of neighbors of $v$ in $\indepset$ is at least $\E[X] = \sum_{u \in N_{V^{low}}(v)} \Pr[X_u = 1] \ge \beta^4\cdot \left(p - \beta^2p^2\right) \, \eqdef \,  \mu.$
    Our goal is now to prove that $\Pr\left[X = 0\right] \le \frac{\Var[X]}{\mu^2} \le \frac{5}{\beta}$ by applying Chebyshev's inequality \fullOnly{(\Cref{thm:cheb})}.
    For any two vertices $u, u' \in N(v)$, we have that $\E[X_u X_{u'}]  \le p^2$ by pairwise independence.
    Then,
    \begin{align*}
        \Var[X_u] & = \E[X_u^2] - \E[X_u]^2 \le p - p^2(1 - \beta^2 p)^2 \le p,\\
        \Cov[X_u, X_{u'}] & = \E[X_u X_{u'}] - \E[X_u]\E[X_{u'}] \le p^2 - p^2(1 - \beta^2 p)^2 \le  2\beta^2 p^3,
    \end{align*}
    where we use that $p \le \frac{1}{2\beta^2}$.
    Then, by observing that $\mu^2 = \beta^8 \cdot \left(p - \beta^2p^2\right)^2 \ge \frac 1 2 \beta^8 \cdot p^2$ since $p \le \frac{1}{4\beta^2}$, we get
    \begin{align*}
        \frac{\Var[X]}{\mu^2} &\le \frac{\sum\limits_{w \in N_{V^{low}}(v)} \Var[X_w] + \sum\limits_{w, w' \in N_{V^{low}}(v)} \Cov[X_w, X_{w'}]}{\mu^2} \\ &\le \frac{2}{\beta^4 p} + 4\beta^2p  
        \le \frac{5}{\beta},
    \end{align*}
    where the last inequality follows from our choice of $p=1/\pvalue$. Therefore, the expected number of $V^{high}$-nodes that do not have a neighbor in $\indepset$ is at most $5 |V^{high}| / \beta$, as desired.
    For the case of \mm, we let each $u \in V^{low}$ propose that a randomly chosen edge $\{u,v\}$ with $v \in V^{high}$ is added to the matching. Any $v \in V^{high}$ receiving a proposal accepts one arbitrarily. 
    A node $v \in V^{high}$ has at least $\beta^4$ neighbors $u \in V^{low}$ with $\deg_{V^{low}}(u) < \beta$. Thus, the number of proposals that $v$ receives in expectation is at least $\beta^3$. By Chebyshev's inequality for pairwise independent random variables \fullOnly{(\Cref{cor:chebpw})}, $v$ is unmatched with probability at most $\frac{5}{\beta}$.

    Next, we prove the bound on the amount of randomness required by the above process. Let us assume that nodes in $V^{low}$ are assigned colors (or IDs), $\phi \colon V \rightarrow [C]$, such that nodes whose (random) choices are assumed to be pairwise independent are given different colors.
    \fullOnly{
    The set of colors is mapped to $\{0,1\}^k$, for some integer $k > 0$, using a $2$-wise independent hash family $\mathcal{H}$ of size $O(\log C + \log \degreehigh) = O(\log C)$ using \Cref{lemma-hash}.}
    \shortOnly{The set of colors is mapped to $\{0,1\}^k$, for some integer $k > 0$, using a $2$-wise independent hash family $\mathcal{H}$ of size $O(\log C + \log \degreehigh) = O(\log C)$.}
    For \mis, let  $\mathcal{H} = \{h \colon [C] \mapsto \{0,1\}^{3\log \degreehigh}\}$. Every $h \in \mathcal{H}$ induces an independent set $\mathcal{I}_h$. If a vertex $v \in V^{low}$ with $h(\phi(v)) = 0$ has no neighbors $w \in V^{low}$ with $h(\phi(w)) = 0$, then $v$ is in $\mathcal{I}_h$. Similarly, for \mm, let $\mathcal{H} = \{h \colon [C] \mapsto \{0,1\}^{\log \degreehigh}\}$. Each $v \in V^{low}$ selects an arbitrary neighboring vertex $u \in V^{high}$ with $h(\phi(v)) = u$ by mapping in an arbitrary but consistent manner its neighbors to $\{0, \ldots, \degreehigh - 1\}$. Then, $\mathcal{M}_{h}$ includes each $V^{high}$-node with at least one proposal. By the above analysis, there is one hash function $h^* \in \mathcal{H}$ that achieves the expected result. 
\end{proof}
    Next, we introduce a preparation phase used to construct a $\degreehigh$-high-subgraph of a graph $G$ by appointing a subset of high-degree nodes with many low-degree neighbors. 
\begin{lemma}[Single-Class Preparation \cite{BEPS16}]\label{lem:degree_reduction_extract_single}
    Let $\Delta_i \geq \poly(\arb)$ be a parameter and $V(G, \Delta_i) :=\{v: \deg_G(v) > \highdeg\}$ denote the set of high-degree vertices in $G$. 
    There is a deterministic constant-round \local\ algorithm that finds a $\beta$-high-subgraph $H$ of $G$ with $\beta = \betadeg_i$, such that:
    \begin{enumerate}
        \item $V^{high}(H) \subseteq V(G,\Delta_{i})$.\label{local_single_A}
        \item $\dist(u, V(G,\Delta_{i})) \le 3$ for every $u \in V(H)$, i.e., no vertex outside the $3$-hop neighborhoods of $V(G,\Delta_{i})$ is in $H.$\label{local_single_B}
        \item $|V(G,\Delta_{i}) \setminus V^{high}(H)| \leq \frac{|V(G,\Delta_{i})|}{\Delta^{\Omega(1)}_i}$.\label{local_single_C}
    \end{enumerate}
\end{lemma}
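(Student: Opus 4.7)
The plan is to combine an $H$-partition of $G$ (assumed precomputed as a preprocessing step, with degree parameter $d = \Theta(\arb) \le \beta$) with a purely local designation rule that picks, for each high-degree vertex, a set of "witness'' neighbors in strictly lower layers. Since $\Delta_i \geq \poly(\arb)$ guarantees $d \leq \beta = \Delta_i^{0.03}$, the $H$-partition's out-degree bound will do the work of keeping $V^{low}$ sparse from above.

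Concretely, I would take the following steps. First, call a vertex $u$ \emph{light} if $\deg_G(u) \leq \beta^2$. For each $v \in V(G,\Delta_i)$, I look at the set $L(v)$ of light neighbors of $v$ that lie in strictly lower $H$-partition layers than $v$. If $|L(v)| \geq \beta^4$, put $v$ into $V^{high}$ and put any $\beta^4$ members of $L(v)$ into $V^{low}$; otherwise leave $v$ out. Let $H$ be the bipartite subgraph induced between $V^{high}$ and $V^{low}$. The whole procedure only needs each vertex to look at layer labels and degrees within its $1$-hop neighborhood, so $O(1)$ \local\ rounds suffice. Properties~(\ref{local_single_A}) and~(\ref{local_single_B}) are immediate ($V^{high} \subseteq V(G,\Delta_i)$ by construction, and every vertex of $H$ is either in $V(G,\Delta_i)$ or is a neighbor of one). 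For the $\beta$-high property: each $v \in V^{high}$ has at least $\beta^4$ neighbors in $V^{low}$ by design; each $u \in V^{low}$ is light, so $\deg_{V^{low}}(u) \leq \deg_G(u) \leq \beta^2$, and every $V^{high}$-neighbor of $u$ is in a strictly higher layer, so the $H$-partition guarantees $\deg_{V^{high}}(u) \leq d \leq \beta$.

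The substantive step, and the main obstacle, is property~(\ref{local_single_C}): bounding the number of bad vertices, i.e.\ $v \in V(G,\Delta_i)$ with $|L(v)| < \beta^4$. The plan is a double count of the heavy-edge set
\[
    E_1 \;=\; \bigl\{(v,u) : v \in V^{bad},\ u \text{ in a lower layer than } v,\ \deg_G(u) > \beta^2\bigr\}.
\]
From the $v$-side: each $v \in V^{bad}$ has more than $\deg_G(v) - d > \sqrt{\Delta_i}/2$ lower-layer neighbors (using the $H$-partition), of which fewer than $\beta^4$ are light; hence $v$ contributes at least $\sqrt{\Delta_i}/2 - \beta^4 \geq \sqrt{\Delta_i}/3$ heavy lower-layer edges to $E_1$. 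From the $u$-side: every heavy $u$ has at most $d$ upper-layer neighbors, so contributes at most $d$ edges to $E_1$. Combined with the arboricity bound on the number of heavy vertices restricted to the relevant subgraph on $V(G,\Delta_i) \cup N(V(G,\Delta_i))$ (whose arboricity is still at most $\arb$), this yields
\[
    |V^{bad}| \cdot \tfrac{\sqrt{\Delta_i}}{3} \;\le\; E_1 \;\le\; d \cdot O\!\left(\tfrac{\arb \cdot |V(G,\Delta_i)|}{\beta^2}\right),
\]
which, after using $d = O(\arb)$ and $\arb \leq \Delta_i^{o(1)}$, gives $|V^{bad}| \leq |V(G,\Delta_i)|/\Delta_i^{\Omega(1)}$.

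The delicate part is obtaining the \emph{relative} bound in terms of $|V(G,\Delta_i)|$ rather than a global $n$: a naive count $|B| \leq 2\arb n/\beta^2$ would only give an absolute bound. The fix is to apply the arboricity inequality to the local subgraph of edges incident to $V(G,\Delta_i)$, exploiting that every heavy neighbor of a bad vertex attaches to $V(G,\Delta_i)$ through an up-edge, and that the number of up-edges hitting $V(G,\Delta_i)$ is itself controlled by $d \cdot |V(G,\Delta_i)|$ plus at most $\arb \cdot |V(G,\Delta_i)|$ from the arboricity of the induced subgraph. That charging step is the technical heart, and matches the arboricity-based reasoning used in~\cite{BEPS16}.
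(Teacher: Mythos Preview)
Your construction has a concrete flaw that breaks property~(c). Take a tree rooted at a single vertex $v$ with $D>\sqrt{\Delta_i}$ children $u_1,\dots,u_D$, each $u_j$ having $\beta^2$ leaf children. Then $V(G,\Delta_i)=\{v\}$, but every neighbor of $v$ has degree $\beta^2+1>\beta^2$, so none is ``light'' in your sense; hence $L(v)=\emptyset$, $V^{high}=\emptyset$, and $|V(G,\Delta_i)\setminus V^{high}|=|V(G,\Delta_i)|$. The root cause is your light threshold $\beta^2=\Delta_i^{0.06}\ll\sqrt{\Delta_i}$: a vertex in $V(G,\Delta_i)$ can have \emph{all} its neighbors of moderate degree (strictly between $\beta^2$ and $\sqrt{\Delta_i}$) and be wrongly declared bad, and no up-edge charging can rescue the relative count in this situation. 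The paper's construction is different and avoids this: it first discards from $V(G,\Delta_i)$ those vertices with a majority of neighbors inside $V(G,\Delta_i)$ (bounded via arboricity applied to $G[V(G,\Delta_i)]$), then lets each remaining high vertex keep exactly $2\beta^8$ edges into $V\setminus V(G,\Delta_i)$. This trimming forces the relevant edge set $\tilde E$ to have size $2\beta^8|V(G,\Delta_i)'|$, so the shell $\mathcal{S}$ of low-side endpoints is already bounded relative to $|V(G,\Delta_i)|$. Shell vertices are then filtered by the \emph{local} conditions $\deg_{\tilde E}(u)<\beta$ and $\deg_{\mathcal{S}}(u)<\beta^2$ (degrees in the trimmed structure, not in $G$), and the bad-vertex bound $O(\arb\,|V(G,\Delta_i)|/\beta)$ follows from the arboricity counts of~\cite{BEPS16} applied to that local subgraph---no $H$-partition is used anywhere.

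Separately, an $H$-partition with $d=\Theta(\arb)$ needs $\Theta(\log n)$ \local\ rounds to compute, so it cannot serve as preprocessing for a lemma that claims $O(1)$ rounds, and it would already dominate the $O(\log\log n)$ budget of the overall algorithm in which this lemma is invoked. The paper's procedure is genuinely constant-round because it relies only on degree information visible in the $2$-hop neighborhood.
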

    \fullOnly{The proof can be found in \Cref{sec:missing_proofs}. }
    \shortOnly{The proof is deferred to the full version of the paper.}
    We now extend the above preparation phase to work on multiple degree classes concurrently by \textit{clustering} nodes according to their highest-degree vertex at distance at most $4$.
    Let $i_{min} = \floor{\log \log (\arb^{\Theta(1)})}$ and $i_{max} = \ceil{\log \log \Delta}$ be two parameters. We define $\Delta_i = 2^{2^i}$ for $i \in [i_{min}, i_{max}]$ and the degree classes $V_i(G) = \{v \in V(G) \colon \deg_G(v) \in (\Delta_{i-1},\Delta_i]\}$ for a given graph $G$. We also define the sets $V_{\geq i}(G) = \bigcup_{j \geq i} V_j(G)$ and $V_{> i}(G) = \bigcup_{j > i} V_j(G)$. 
\begin{lemma}[Multi-Class Preparation]\label{lem:degree_reduction_extract_all}
    There is a deterministic \local\ algorithm running in $O(1)$ rounds that constructs a $(\Delta_i^{\Omega(1)})$-high-subgraph $H_i$ of a given graph $G$ for every $i \geq i_{min}$ such that:
    \begin{enumerate}
        \item $V^{high}(H_i) \subseteq V_i(G)$.\label{local_all_A}
        \item $|V_{\ge i}(G) \setminus V^{high}(H_i)| \leq |V_i(G)|/\Delta^{\Omega(1)}_i + \sum_{j > i} \Delta^{O(1)}_j |V_j(G)|$.\label{local_all_B}
        \item $\dist_G(V(H_j),V(H_i)) \geq 2$ for every $j \geq i_{min}$ with $j \neq i$.\label{local_all_C}
    \end{enumerate}
\end{lemma}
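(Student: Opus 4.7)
The plan is to invoke the single-class preparation of \Cref{lem:degree_reduction_extract_single} concurrently on every degree class, but only inside carefully carved territories so that the resulting subgraphs $H_i$ come out pairwise non-adjacent. Each territory is obtained by reserving, for class $i$, only those vertices of $V_i(G)$ whose $7$-hop neighborhood in $G$ contains no vertex of a strictly higher class.

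Formally, define $V_i^{\mathrm{safe}}=V_i(G)\setminus\{u:\dist_G(u,V_{>i}(G))\le 7\}$ for every $i\ge i_{\min}$. Identifying these sets takes $O(1)$ \local\ rounds via $7$-hop neighborhood collection. Then, in parallel for each $i$, I apply the algorithm of \Cref{lem:degree_reduction_extract_single} to $G$, but with the high-degree candidate set shrunk from $V(G,\Delta_i)=V_{\ge i}(G)$ to $V_i^{\mathrm{safe}}\subseteq V_i(G)$. The construction underlying that lemma only requires each candidate to have degree $>\sqrt{\Delta_i}$ in $G$, which holds for every $v\in V_i(G)$ since $\deg_G(v)>\Delta_{i-1}=\sqrt{\Delta_i}$. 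It thus returns a $(\Delta_i^{\Omega(1)})$-high subgraph $H_i$ with $V^{high}(H_i)\subseteq V_i^{\mathrm{safe}}$, with $|V_i^{\mathrm{safe}}\setminus V^{high}(H_i)|\le|V_i^{\mathrm{safe}}|/\Delta_i^{\Omega(1)}$, and with every $u\in V(H_i)$ at distance at most $3$ from $V_i^{\mathrm{safe}}$ in $G$. The whole process finishes in $O(1)$ \local\ rounds.

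Conditions (a) and (c) then follow at once. Condition (a) is $V^{high}(H_i)\subseteq V_i^{\mathrm{safe}}\subseteq V_i(G)$. For (c), if $u\in V(H_i)$ and $v\in V(H_j)$ with $i<j$ were adjacent in $G$, then choosing anchors $u^\star\in V_i^{\mathrm{safe}}$ and $v^\star\in V_j^{\mathrm{safe}}\subseteq V_{>i}(G)$ with $\dist_G(u,u^\star)\le 3$ and $\dist_G(v,v^\star)\le 3$ would yield $\dist_G(u^\star,v^\star)\le 3+1+3=7$, contradicting $u^\star\in V_i^{\mathrm{safe}}$. For (b), I decompose
\[
\bigl|V_{\ge i}(G)\setminus V^{high}(H_i)\bigr|\le\bigl|V_i(G)\setminus V_i^{\mathrm{safe}}\bigr|+\bigl|V_i^{\mathrm{safe}}\setminus V^{high}(H_i)\bigr|+|V_{>i}(G)|,
\]
whose middle term is $\le|V_i(G)|/\Delta_i^{\Omega(1)}$ by the single-class guarantee and whose last term is trivially $\sum_{j>i}|V_j(G)|$.

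The main obstacle is bounding the first term $|V_i(G)\setminus V_i^{\mathrm{safe}}|$, since a naive ball-size estimate around a single $w\in V_j(G)$ is controlled only by the global maximum degree and can exceed $\Delta_j^{O(1)}$. I would resolve it by charging each lost $v\in V_i(G)$ not to an arbitrary nearby $V_{>i}$-vertex, but to the vertex $w^\star$ of maximal class (ties broken by ID) appearing closest to $v$ along the shortest witnessing path from $v$ to $V_{>i}(G)$. By construction, all vertices strictly between $v$ and $w^\star$ on this path lie in classes strictly below $j^\star\eqdef$ class of $w^\star$, so each has degree at most $\Delta_{j^\star-1}=\sqrt{\Delta_{j^\star}}$; together with $\deg_G(w^\star)\le\Delta_{j^\star}$, the number of $v\in V_i(G)$ charged to a single $w^\star$ via such a length-$\le 7$ path is at most $\Delta_{j^\star}\cdot(\sqrt{\Delta_{j^\star}})^{6}=\Delta_{j^\star}^{4}$. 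Summing over $w^\star\in V_{j^\star}$ and $j^\star>i$ yields $|V_i(G)\setminus V_i^{\mathrm{safe}}|\le\sum_{j>i}\Delta_j^{O(1)}|V_j(G)|$, which combined with the other terms completes (b).
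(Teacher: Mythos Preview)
Your strategy is the same as the paper's at the top level---carve out per-class territories that are pairwise non-adjacent and run the single-class preparation inside each---but the implementations diverge. The paper clusters by $D_G(v)=\max_{\dist(u,v)\le 4}\deg_G(u)$, forms the \emph{induced subgraph} $G_i=G[\{u:D_G(u)\in(\Delta_{i-1},\Delta_i],\ \dist(u,V_i(G))\le 3\}]$ (so $\Delta(G_i)\le\Delta_i$), and then invokes \Cref{lem:degree_reduction_extract_single} on $G_i$ verbatim; separation follows because any $u\in G_i$ has $\dist_G(u,V_{>i}(G))\ge 5$. You instead keep the full graph $G$ and only shrink the \emph{candidate set} to $V_i^{\mathrm{safe}}$ at distance $>7$ from higher classes. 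Your charging argument for $|V_i(G)\setminus V_i^{\mathrm{safe}}|$ via the maximal-class vertex on a witnessing path is correct and in fact spells out what the paper asserts in one line.

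The one soft spot is your appeal to \Cref{lem:degree_reduction_extract_single} with a shrunk candidate set. As stated, that lemma is about the specific set $V(G,\Delta_i)=V_{\ge i}(G)$; its Property~(c) bound $|V(G,\Delta_i)\setminus V^{high}(H)|\le|V(G,\Delta_i)|/\Delta_i^{\Omega(1)}$ comes from the BEPS accounting of edges inside $G[V(G,\Delta_i)]$, not merely from ``each candidate has degree $>\sqrt{\Delta_i}$''. To make your version go through you have to reopen that analysis and use that every $v\in V_i^{\mathrm{safe}}$ has all of its $V_{\ge i}(G)$-neighbors actually in $V_i(G)$ (since $\dist_G(v,V_{>i}(G))>7$), which yields $|V_i^{\mathrm{safe}}\setminus V^{high}(H_i)|\le O(\lambda\,|V_i(G)|/\beta)=|V_i(G)|/\Delta_i^{\Omega(1)}$---enough for (b), though not literally the ``$|V_i^{\mathrm{safe}}|/\Delta_i^{\Omega(1)}$'' you wrote. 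The paper sidesteps all of this by passing to $G_i$ first, where \Cref{lem:degree_reduction_extract_single} applies as a black box; that is the main thing its route buys over yours.
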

\begin{proof}
    Let $D_G(v) = \displaystyle\max_{\dist_G(u,v) \le 4} \deg_G(u)$ for every $v \in V(G)$. Then, denote \[G_i = G[\{u \in V(G) \mid D_G(u) \in (\Delta_{i-1},\Delta_i] \text{ and } \dist(u, V_i(G)) \le 3\}]\] as the subgraph of $G$ induced by vertices whose highest-degree vertex in their $3$-hop neighborhood is in $V_i(G)$, for every $i$. 
    Consider now any $j > i$. It holds that $\dist_G(V(G_j),V(G_i)) \geq 2$. To see why, consider any $u \in V(G_i)$ and $v \in V(G_j)$. We have
    \begin{align*}5 &\le dist_G(u,V_{>i}(G)) \leq dist_G(u,V_j(G)) \leq  dist_G(u,v) + dist(v,V_j(G)) \leq dist_G(u,v) + 3\end{align*}
    and therefore $dist_G(u,v) \geq 2$, proving Property~\ref{local_all_C}.
    Also, note that $V(G_i) \cap V_{>i}(G) = \emptyset$ and therefore $\Delta(G_i) \leq \Delta_i$. 
    Now, let $H_i$ be the subgraph of $G_i$ that we obtain by invoking \cref{lem:degree_reduction_extract_single} with input $G_i$ and $\Delta_i$. 
    The graph $H_i$ is a $(\Delta^{\Omega(1)}_i)$-high-graph. 
    Observe that $V_i(G_i) = \{v \in V(G_i) \mid \deg_{G_i}(v) > \sqrt{\Delta_i}\} =: V(G_i,\Delta_i)$. By Property~\ref{local_single_A} of~\cref{lem:degree_reduction_extract_single}, $V^{high}(H_i) \subseteq V_i(G_i)$, and together with $V_i(G_i) \subseteq V_i(G)$, this implies $V^{high}(H_i) \subseteq V_i(G)$, which proves Property \ref{local_all_A} of this lemma. Moreover, Property~\ref{local_single_C} of~\cref{lem:degree_reduction_extract_single} together with the fact that $i_{min} =\Theta(\log \log \arb)$ gives
    \[|V_i(G_i) \setminus V^{high}(H_i)| \leq  \frac{|V_i(G_i)|}{\Delta_i^{\Omega(1)}} \leq \frac{|V_i(G)|}{\Delta_i^{\Omega(1)}}.\]
    Then, consider the remaining vertices $v \in V_i(G) \setminus V_i(G_i)$. Since $\dist_G(v,V_i(G)) = 0$, it holds that $D_G(v) > \Delta_i$. Thus,
    \begin{align*}|V_{\ge i}(G) \setminus V_i(G_i)|& \leq |\{v \in V(G) \mid \dist_G(v,V_{>i}(G)) \le 4\}| \leq \sum_{j > i}\Delta^{O(1)}_j|V_j(G)|,\end{align*}
    which combined with the above bound completes the proof of Property~\ref{local_all_B}. Finally, as $G_i$'s are node-disjoint, we can compute all $H_i$'s simultaneously in $O(1)$ \local\ rounds.
\end{proof}
Next, we show how to use the subgraphs constructed above to run a single stage of our algorithmic routine on all degree classes simultaneously.
\begin{lemma}[Multi-Class Single-Stage Algorithm]\label{lem:degree_reduction_multi_single}
    There is a \local\ algorithm that computes an independent set $\indepset$ (or a matching $\matching$) of a given graph $G$ such that in the remaining graph $G' = G \setminus \indepset$ (or $G \setminus\matching$) with strictly positive probability the following holds: 
    \[|V_{\ge i}(G')| \leq \frac{|V_{i}(G)|}{\Delta_i^{\Omega(1)}} + \sum_{j > i} \Delta_j^{O(1)} |V_j(G)|,\]
    for every $i \in [i_{min}, i_{max}]$. Moreover, the only source of randomness in the algorithm is $O(\log C)$ shared random bits with $\poly(\Delta) \le C \le \poly(n)$, where $C$ is an upper bound on the number of colors (or IDs) that we assume to be assigned to vertices such that any two nodes at distance at most $4$ in $G$ are given different colors.
\end{lemma}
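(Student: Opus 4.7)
The plan is to invoke \Cref{lem:degree_reduction_extract_all} to obtain the $(\Delta_i^{\Omega(1)})$-high subgraphs $H_i$ for $i\in[i_{min},i_{max}]$, and then to run the pairwise-independent routine of \Cref{lem:degree_reduction_pairwise} in parallel on every $H_i$, driven by a single shared source of random bits. For each $i$, let $\indepset_i$ (resp.\ $\matching_i$) be the partial independent set (resp.\ matching) produced on $H_i$, and output $\indepset=\bigcup_i \indepset_i$ (resp.\ $\matching=\bigcup_i\matching_i$). Because the input $C$-coloring is proper at distance $4$ in $G$, it in particular separates any two $V^{low}(H_i)$-vertices that share a common neighbor in $V^{high}(H_i)$, so the Chebyshev analysis of \Cref{lem:degree_reduction_pairwise} applies unchanged on each $H_i$. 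Property~\ref{local_all_C} guarantees $\dist_G(V(H_i),V(H_j))\ge 2$ for $i\ne j$, hence the pieces $\indepset_i$ are mutually non-adjacent in $G$ and $\indepset$ is a legal IS of $G$ (and analogously $\matching$ is a legal matching).

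Next I would fix a class $i$ and bound $|V_{\ge i}(G')|$. Since the algorithm only removes vertices and edges, $V_{\ge i}(G')\subseteq V_{\ge i}(G)\cap V(G')$. Partition $V_{\ge i}(G)=V^{high}(H_i)\sqcup\bigl(V_{\ge i}(G)\setminus V^{high}(H_i)\bigr)$: Property~\ref{local_all_B} bounds the second summand deterministically by $|V_i(G)|/\Delta_i^{\Omega(1)}+\sum_{j>i}\Delta_j^{O(1)}|V_j(G)|$. For the first summand, Property~\ref{local_all_C} implies that no vertex of $V^{high}(H_i)$ is adjacent in $G$ to any vertex of $V(H_j)$ with $j\ne i$, so these vertices can be removed only by the run on $H_i$ itself; \Cref{lem:degree_reduction_pairwise} then yields $\E\bigl[|V^{high}(H_i)\cap V(G')|\bigr]\le |V^{high}(H_i)|/\Delta_i^{\Omega(1)}\le |V_i(G)|/\Delta_i^{\Omega(1)}$.

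To promote these per-class expectation bounds to a positive-probability statement holding simultaneously for all classes, I apply Markov per class and union-bound over $i\in[i_{min},i_{max}]$: for any $c_i>0$, the event ``$|V^{high}(H_i)\cap V(G')|$ exceeds $c_i$ times its expectation'' has probability at most $1/c_i$, and choosing for instance $c_i=4(i-i_{min}+1)^2$ makes $\sum_i 1/c_i<1$ while keeping each $c_i\le O((\log\log\Delta)^2)$, which is well inside the $\Delta_i^{\Omega(1)}$ slack of the target bound (after adjusting the hidden exponent). The randomness budget remains $O(\log C)$: a single $2$-wise independent hash from a family of size $O(\log C)$ has output length $O(\log\Delta)$, which can be sliced into $O(\log\log\Delta)$ class-specific substrings that feed \Cref{lem:degree_reduction_pairwise} on each $H_i$.

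The main obstacle is the simultaneous-success step, specifically reconciling shared randomness across classes with the per-class pairwise-independent analysis: reusing one hash function introduces correlations between the decisions in different classes, yet the per-class second-moment calculation only requires pairwise independence among $V^{low}$-vertices sharing a common $V^{high}$-neighbor, all of which lie inside a single $H_i$, so the cross-class correlation is harmless. The second subtlety is absorbing the union-bound factor $c_i$ into the $\Delta_i^{\Omega(1)}$ slack uniformly in $i$, which is why the construction uses doubly-exponentially spaced classes $\Delta_i=2^{2^i}$ and why the hidden constant in $i_{min}=\Theta(\log\log\arb)$ has to be chosen large enough so that even $\Delta_{i_{min}}^{\Omega(1)}$ dominates $(\log\log\Delta)^2$.
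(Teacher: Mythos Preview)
Your proposal is correct and follows the same approach as the paper. The paper's execution is slightly simpler in two places: it applies Markov with a $\Delta_i^{\Omega(1)}$ multiplicative threshold so that the per-class failure probability is already $\Delta_i^{-\Omega(1)}$ and the union bound $\sum_i \Delta_i^{-\Omega(1)}\le 1/2$ follows directly from the doubly-exponential spacing (avoiding your auxiliary weights $c_i$ and the need to absorb $(\log\log\Delta)^2$ into the slack), and it reuses a single hash function across all classes without slicing, which---as you yourself observe in the last paragraph---is harmless because the per-class second-moment analysis only needs pairwise independence among $V^{low}$-vertices inside the same $H_i$.
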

\begin{proof}
    Let us analyze the independent set case, the argument for the matching case is essentially the same.
    First, we compute in $O(1)$ rounds, for every $i \in [i_{min}, i_{max}]$, a subgraph $H_i$ of $G$ using the algorithm of \cref{lem:degree_reduction_extract_all}.
    Then, we simultaneously run the algorithm of \cref{lem:degree_reduction_pairwise} on each $H_i$ to compute an independent set $\indepset_i$, which is independent also in $G$ since $H_i$ is obtained from $G$ without discarding any edge, i.e., $H_i = G[V(H_i)]$. Let $\indepset = \bigcup_{i} \indepset_i$, which is valid since any two subgraphs $H_i, H_j$ are at distance at least two by Property~\ref{local_all_C} of \Cref{lem:degree_reduction_extract_all}.
    Moreover, by Property~\ref{local_all_A}~and~\ref{local_all_B} of \cref{lem:degree_reduction_pairwise}, it is enough to show that $\card{V^{high}(H_i) \setminus (\indepset_i \cup N_G(\indepset_i))} \le \card{V^{high}(H_i)}/\Delta_i^{\Omega(1)}$ for each $i$. By an application of Markov's inequality, \cref{lem:degree_reduction_pairwise} fails in achieving the above guarantee for a fixed $i$ with probability at most $\Delta^{-\Omega(1)}_i$. By a union bound, the total failure probability is upper bounded by $\sum_{i} \frac{1}{\Delta_i^{\Omega(1)}} \le 1/2$ since we can assume that $\Delta_{i_{min}} = \poly(\arb) = \Omega(1)$. Therefore, all the at most $i_{max}$ runs of \cref{lem:degree_reduction_pairwise} succeed with strictly positive probability. This means that there is a sequence of $O(\log C)$ random bits that makes all subgraphs $H_i \setminus \indepset_i$ good. Note also that nodes whose choices are pairwise independent in \cref{lem:degree_reduction_pairwise} are given distinct colors (or IDs) by the assumption in the statement of this lemma.
\end{proof}
    Running the above algorithm for multiple stages in a sequential fashion leads to the following lemma. \fullOnly{Its proof can be found in \Cref{sec:missing_proofs}}\shortOnly{Its proof can be found in the full version of the paper.}. Intuitively, to achieve the stated upper bound we show that higher degree classes affect lower degree classes only asymptotically, since the sum of the sizes of our quadratically-spaced degree classes turns out to be dominated by a geometric series. This fact is also used in \cite{MPC-MIS-loglogn}.
\begin{lemma}[Multi-Class Multi-Stage Algorithm]\label{lem:degree_reduction_multi_multi}
    Let $k \in \mathbb{N}$ and assume that $k \ge s$ for an absolute constant $s$. There is a \local\ algorithm running in $O(k)$ rounds that computes an independent set $\indepset$ (or a matching $\matching$) of a given graph $G$ such that in the remaining graph $G' = G \setminus \indepset$ (or $G \setminus\matching$) with strictly positive probability the following holds: 
    \[|V_{\ge i}(G')| \leq |V_{\ge i}(G)|/\Delta_i^{\Omega(k)},\]
    for each $i \in [i_{min}, i_{max}]$. Moreover, $O(k \cdot \log C)$ shared random bits are the only source of randomness in the algorithm with $C$ defined as above.
\end{lemma}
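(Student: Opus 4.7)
The plan is to iterate the multi-class single-stage algorithm of \Cref{lem:degree_reduction_multi_single} $k$ times, invoking it with a fresh block of $O(\log C)$ shared random bits per stage. Each stage costs $O(1)$ \local\ rounds, so the total round complexity is $O(k)$ and the total random-bit budget is $O(k \log C)$. Because each stage succeeds with strictly positive probability over its own bits, conditioned on the residual graph after previous stages, we can fix the bits stage by stage to obtain a deterministic $O(k \log C)$-bit sequence on which all $k$ stages succeed simultaneously. The $4$-hop-distance coloring supplied as input is reused across stages: distances only grow in vertex-induced subgraphs, so any coloring separating pairs within distance $4$ in $G$ still separates such pairs in each residual graph.

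Let $G^{(t)}$ denote the graph after $t$ stages and write $n_i^{(t)} := |V_{\ge i}(G^{(t)})|$. \Cref{lem:degree_reduction_multi_single} provides constants $c_1, c_2 > 0$ such that
\[ n_i^{(t+1)} \;\le\; \frac{n_i^{(t)}}{\Delta_i^{c_1}} + \sum_{j > i} \Delta_j^{c_2}\, n_j^{(t)}. \]
I would prove by induction on $t$ that, for some absolute constants $c \in (0, c_1)$ and $s_0 = O(1)$, the bound $n_i^{(t)} \le |V_{\ge i}(G)|/\Delta_i^{c(t - s_0)}$ holds for every $t \ge 0$ and every $i \in [i_{min}, i_{max}]$, interpreted vacuously whenever $t \le s_0$. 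The inductive step exploits $\Delta_j = \Delta_i^{2^{j-i}}$ together with $|V_{\ge j}(G)| \le |V_{\ge i}(G)|$ to rewrite each spillover term as $\Delta_i^{2^{j-i}(c_2 - c(t - s_0))}|V_{\ge i}(G)|$. By choosing $s_0$ large enough in terms of $c_1, c_2, c$, all exponents become strictly negative with doubling magnitude in $j-i$, so the tail is a doubly-exponentially-decaying series dominated by its first term at $j = i+1$; combined with the $\Delta_i^{-c_1 - c(t - s_0)}|V_{\ge i}(G)|$ contribution from the first summand, it is absorbed into $\Delta_i^{-c(t + 1 - s_0)}|V_{\ge i}(G)|$. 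Throughout the warm-up window $t \le s_0$ the monotone bound $n_i^{(t)} \le n_i^{(0)} = |V_{\ge i}(G)|$ holds trivially, and $s_0$ is chosen large enough to absorb the warm-up loss into the constants. Applying the claim with $k \ge s := 2 s_0$ yields $n_i^{(k)} \le |V_{\ge i}(G)|/\Delta_i^{c(k - s_0)} \le |V_{\ge i}(G)|/\Delta_i^{c k / 2}$, which is the required $\Delta_i^{\Omega(k)}$-factor reduction.

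The main technical obstacle is taming the \emph{spillover} term $\sum_{j > i} \Delta_j^{c_2} n_j^{(t)}$, whose coefficients $\Delta_j^{c_2}$ are doubly exponential in $j - i$ and naively threaten to wipe out the progress at class $i$. The key observation --- the same mechanism exploited in the analysis of~\cite{MPC-MIS-loglogn} --- is that the doubly-exponential spacing $\Delta_j = \Delta_i^{2^{j-i}}$ also amplifies the inductive factor $\Delta_j^{-c(t - s_0)}$ doubly exponentially, so it dominates $\Delta_j^{c_2}$ once $t$ exceeds an absolute constant. Past this warm-up, the spillover behaves like a rapidly convergent geometric tail and every degree class continues to shrink by a $\Delta_i^{\Omega(1)}$ factor per stage, giving the simultaneous $\Delta_i^{\Omega(k)}$ reduction across all $i$.
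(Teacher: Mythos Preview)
Your proposal is correct and follows essentially the same approach as the paper: iterate the single-stage lemma of \Cref{lem:degree_reduction_multi_single} $k$ times with fresh $O(\log C)$-bit seeds per stage, then prove by induction on the stage counter that the doubly-exponential spacing $\Delta_j = \Delta_i^{2^{j-i}}$ turns the spillover $\sum_{j>i}\Delta_j^{c_2}n_j^{(t)}$ into a geometrically dominated tail once $t$ exceeds an absolute constant. You are in fact more explicit than the paper about the warm-up offset $s_0$, about why the $4$-hop coloring remains valid in every residual graph, and about how the stage-by-stage success probabilities compose; the paper's version of the induction is terser and carries several evident typos (e.g.\ $G^{(\ell)}$ where $G$ is meant, and $2^j-2^i$ where $2^{j-i}$ is meant), but the underlying argument is identical.
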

\subsection{\mpc\ Simulation and Derandomization}\label{sec:mpc_part}
    Let us now turn our attention to the low-memory \mpc\ model. In the following, we assume that $\Delta = O(n^\delta)$, for any $\delta > 0$, and explain how to lift this assumption last.
    First, we show that the derandomization of one stage of our \local\ single-class degree reduction routine from \cref{lem:degree_reduction_pairwise} and \cref{lem:degree_reduction_extract_single} leads to the following result
    . On a high-level, we translate \cref{lem:degree_reduction_extract_single} to the \mpc\ setting and efficiently derandomize the pairwise independent random process of \cref{lem:degree_reduction_pairwise}, using the power of all-to-all communication and the coloring trick. \fullOnly{Its proof is deferred to \Cref{sec:missing_proofs}.}\shortOnly{Its proof is deferred to the full version of the paper.}
    \begin{lemma}[Single-Class Reduction]\label{lem:degree_reduction_mpc_single_round}
        For any fixed $i \in [i_{min}, i_{max}]$, there is a deterministic low-space \mpc\ algorithm running in $O(c)$ rounds that computes an independent set $\indepset$ (or a matching $\matching$) of a given graph $G$ such that in the remaining graph $G' = G \setminus \indepset$ (or $G \setminus\matching$) the following holds:
            \[|V_{\ge i}(G')| \leq |V_{\ge i}(G)|/\Delta^{c}_{i}.\]
    \end{lemma}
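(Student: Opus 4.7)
The plan is to implement one single-class degree-reduction step---\cref{lem:degree_reduction_extract_single} followed by \cref{lem:degree_reduction_pairwise}---deterministically on a low-space \mpc\ in $O(1)$ rounds, and then iterate this primitive $c$ times on the residual graph so that the per-iteration $\Delta_i^{\Omega(1)}$-factor reduction amplifies to the claimed $\Delta_i^c$ factor after reparameterizing $c$ by the hidden constant. Because we are working under the assumption $\Delta = O(n^\delta)$ for a sufficiently small constant $\delta < \spaceconst$, the $3$-hop ball of any vertex has size $\poly(\Delta) = n^{O(\delta)} \le n^\spaceconst$ and thus fits on a single machine. Consequently, the \local\ preparation of \cref{lem:degree_reduction_extract_single} can be simulated directly in $O(1)$ \mpc\ rounds via a constant number of graph-exponentiation steps and the sorting, filtering, and colored-summation primitives from \cref{lem:primitives}: each machine locally identifies whether its owned vertex lies in $V(G,\Delta_i) = V_{\ge i}(G)$ (noting $\sqrt{\Delta_i} = \Delta_{i-1}$) and computes its incident edges of the $\beta$-high subgraph $H$ with $\beta = \Delta_i^{0.03}$.

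The substantive step is to derandomize \cref{lem:degree_reduction_pairwise}. The algorithm there is driven by a single hash function drawn from a $2$-wise independent family $\mathcal{H}$ of size $\poly(C)$, where $C$ is an upper bound on the number of colors separating the vertices of $V^{low}(H)$ whose choices are assumed independent. Since the experiment uses only distance-$1$ dependencies in $H$, a proper coloring of $H$ suffices, and I would invoke \cref{lm:coloring} to shrink $C$ from $\poly(n)$ to $\poly(\Delta) = n^{O(\delta)}$, making the seed length $O(\log \Delta)$ and $|\mathcal{H}| = n^{O(\delta)}$. I then apply the standard \mpc\ method of conditional expectation: split the seed into $O(1)$ chunks of length $O(\log n^\delta)$ each, and for each chunk enumerate its $n^{O(\delta)}$ candidate values in parallel. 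The pessimistic estimator---the conditional expected number of $V^{high}(H)$ vertices that fail to obtain a neighbor in $\indepset$ (or a matched edge)---decomposes as a sum of per-vertex terms, and each $v \in V^{high}(H)$ can compute its own contribution locally because its entire $H$-neighborhood already sits on its machine. A colored summation then aggregates one scalar per candidate in $O(1)$ rounds, and we fix the chunk to the value that keeps the estimator below the target $|V^{high}(H)|/\Delta_i^{\Omega(1)}$.

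Combining this with Property~\ref{local_single_C} of \cref{lem:degree_reduction_extract_single}, which already bounds $|V_{\ge i}(G) \setminus V^{high}(H)|$ by $|V_{\ge i}(G)|/\Delta_i^{\Omega(1)}$, a single iteration achieves $|V_{\ge i}(G \setminus \indepset)| \le |V_{\ge i}(G)|/\Delta_i^{\Omega(1)}$ in $O(1)$ \mpc\ rounds, and $c$ such iterations give the stated bound. The main obstacle I foresee is the memory-efficient evaluation of the pessimistic estimator: naively broadcasting a candidate hash seed to every edge of $H$ would blow the global memory past $O(n^{1+\eps} + m)$. The workaround is to have each vertex evaluate all $n^{O(\delta)}$ candidates for its own chunk locally from its cached $1$-hop $H$-neighborhood, so only one scalar per (vertex, candidate) pair travels through the system; the global memory then remains at $O((n+m) \cdot n^{O(\delta)})$, which is absorbed by the $O(n^{1+\eps} + m)$ budget of \cref{thm:main_superlinear} for sufficiently small $\delta$.
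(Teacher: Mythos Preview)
Your overall plan matches the paper's: simulate \cref{lem:degree_reduction_extract_single} in $O(1)$ \mpc\ rounds, derandomize \cref{lem:degree_reduction_pairwise} via a small coloring of $V^{low}$, and iterate $O(c)$ times. However, there is a genuine gap in the derandomization. You write that ``the experiment uses only distance-$1$ dependencies in $H$, [so] a proper coloring of $H$ suffices.'' This is not correct: the Chebyshev analysis in \cref{lem:degree_reduction_pairwise} needs $\E[X_uX_{u'}]\le p^2$ and $\Cov[X_u,X_{u'}]\le 2\beta^2p^3$ for \emph{every pair} $u,u'\in N_{V^{low}}(v)$ of a fixed $v\in V^{high}$. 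Such $u,u'$ are at distance~$2$ in $H$ (through $v$) and need not be adjacent. If they receive the same color, the hash assigns them identical seeds, their sampling indicators become perfectly correlated, and the second-moment bound collapses. A proper coloring of $H$ is therefore insufficient; you must color a conflict graph that also separates every pair of $V^{low}$-nodes sharing a $V^{high}$-neighbor. The paper builds this conflict graph explicitly---each $V^{low}$-node ships its $V^{low}$-neighborhood to the at most $\beta^3$ high nodes within distance~$2$, and each high node inserts a clique on the vertices it received---and then applies \cref{lm:coloring} to this graph of degree $O(\beta^{10})\le\sqrt{\Delta_i}$, obtaining $C=O(\Delta_i^2\log^2 n)$ so that $|\mathcal{H}|$ still fits on one machine.

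A secondary difference concerns global memory. You collect full $3$-hop balls for every vertex, costing $n\cdot\poly(\Delta)=n^{1+O(\delta)}$ total space, and explicitly lean on the $O(n^{1+\eps}+m)$ budget of \cref{thm:main_superlinear}. The paper is tighter: it argues the conflict-graph messages total at most $\sum_{v\in V(G,\Delta_i)'}\beta^{5}\le\sum_{v}\deg_G(v)=O(m)$, so this lemma runs with \emph{linear} global memory---which is what its downstream use in the preprocessing step of \cref{thm:main_deg_red} actually requires.
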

    \fullOnly{
    The above deterministic single-class reduction is applied to each degree class sequentially in the next lemma, whose proof can be found in \Cref{sec:missing_proofs}. It will be used as a preprocessing step in our final degree-reduction algorithm.}
    \shortOnly{The above deterministic single-class reduction is applied to each degree class sequentially in the next lemma, whose proof can be found in the full version of the paper. It will be used as a preprocessing step in our final degree-reduction algorithm}
\begin{lemma}[Preprocessing \mpc]\label{lem:degree_reduction_mpc_preprocessing}
    Let $c \in \mathbb{N}$ be a parameter. There exists a deterministic low-memory \mpc\ algorithm running in $O(c \log \log \Delta)$ rounds that computes an independent set $\indepset$ (or a matching $\matching$) of a given graph $G$ such that in the remaining graph $G' = G \setminus \indepset$ (or $G \setminus\matching$) the following holds for every $i \in [i_{min}, i_{max}]$:
        \[|V_{\ge i}(G')| \leq |V_{\ge i}(G)|/\Delta^{c}_{i}.\]
\end{lemma}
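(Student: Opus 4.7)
The plan is to apply the single-class reduction of \Cref{lem:degree_reduction_mpc_single_round} once per degree class, processing the classes sequentially from the largest index $i_{max}$ down to $i_{min}$, and taking the final independent set (respectively matching) to be the union of the partial solutions produced at each step. Since every single-class reduction runs on the graph currently remaining after previous removals, the union is automatically a valid independent set (or matching) in $G$. The number of degree classes is $i_{max} - i_{min} + 1 = O(\log \log \Delta)$, and each invocation costs $O(c)$ rounds by \Cref{lem:degree_reduction_mpc_single_round}, yielding a total round complexity of $O(c \log \log \Delta)$ as required.

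The only thing to verify is that the guarantees obtained at each step compose into the statement asked for every $i \in [i_{min}, i_{max}]$ simultaneously. Let $G = G^{(i_{max}+1)}$ denote the input graph and let $G^{(i)}$ denote the graph remaining after processing classes $i_{max}, i_{max}-1, \ldots, i$. First I would observe the monotonicity property that is the heart of the argument: removing a vertex set together with its neighborhood can only delete vertices and decrease degrees, hence
\[
    V_{\ge j}(G^{(i)}) \subseteq V_{\ge j}(G^{(i+1)}) \subseteq \cdots \subseteq V_{\ge j}(G)
\]
for every $j$ and every $i$. In particular, $|V_{\ge j}(\cdot)|$ is non-increasing along the sequence of graphs produced by the algorithm.

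Next I would chain this monotonicity with the per-class guarantee. When class $i$ is processed, \Cref{lem:degree_reduction_mpc_single_round} applied to $G^{(i+1)}$ produces $G^{(i)}$ satisfying
\[
    |V_{\ge i}(G^{(i)})| \le |V_{\ge i}(G^{(i+1)})|/\Delta_i^{c} \le |V_{\ge i}(G)|/\Delta_i^{c},
\]
where the last inequality uses the monotonicity above. By the same monotonicity, this bound is preserved by every subsequent step $i-1, i-2, \ldots, i_{min}$, so in the final graph $G' := G^{(i_{min})}$ we still have $|V_{\ge i}(G')| \le |V_{\ge i}(G)|/\Delta_i^{c}$. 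Since this is established for every $i \in [i_{min}, i_{max}]$ by applying the argument at the moment class $i$ is processed, the desired uniform guarantee holds.

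The main subtlety, rather than an obstacle, is the choice to process classes in \emph{decreasing} order of $i$: the opposite order would work only for the single largest-degree class in a strong sense, since after processing a low class $i$ the later (high-class) reductions would not generally improve the bound on $V_{\ge i}$ further — although here that is fine as well, the decreasing order makes the monotonicity argument completely transparent, and matches the structure used in \Cref{lem:degree_reduction_mpc_single_round}. No additional randomness or derandomization machinery is needed at this level, since each call to \Cref{lem:degree_reduction_mpc_single_round} is already deterministic.
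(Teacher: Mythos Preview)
Your proposal is correct and follows exactly the same approach as the paper: apply \Cref{lem:degree_reduction_mpc_single_round} once per degree class, sequentially from $i_{max}$ down to $i_{min}$, and take the union of the partial solutions. In fact, the paper's own proof is considerably terser---it states the algorithm but does not spell out the monotonicity argument you give to justify that the per-class guarantee persists through later steps; your write-up makes that step explicit and is therefore more complete.
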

    The next lemma is a key technical step to achieve our end result. It shows how to efficiently derandomize our multi-class multi-stage \local\ algorithm using knowledge of the $k$-hop neighborhood for $k$ large enough. In particular, this knowledge will be used to compute a coloring of $G^4$ and to derandomize the \local\ algorithm \cref{lem:degree_reduction_multi_multi} running in $O(k)$ \local\ rounds in a constant number of \mpc\ rounds. This derandomization is performed by brute-forcing all possible bit sequences on each single machine. \fullOnly{Its proof is deferred to \Cref{sec:missing_proofs}.}\shortOnly{Its proof is deferred to the full version of the paper.}
\begin{lemma}\label{lem:mpc_degree_reduction_multi_multi}
    Let $k \in \mathbb{N}$. There is a deterministic low-memory \mpc\ algorithm that in $O(1)$ rounds computes an independent set $\indepset$ (or a matching $\matching$) of a given graph $G$ with maximum degree $\Delta_{sup}$ such that in the remaining graph $G' = G \setminus \indepset$ (or $G \setminus\matching$) the following holds for every $i \in [i_{min}, i_{max}]$
    \[|V_{\ge i}(G')| \leq |V_{\ge i}(G)|/\Delta_i^{\Omega(k)},\]
    provided that the following conditions are satisfied:
    \begin{enumerate}
        \item Each node $v \in V(G)$ has stored its $k$-hop neighborhood in $G$ on a single machine.
        \item $k \ge c'$ for some constant $c'$ and $k \leq \frac{\log(n)}{\log(\Delta_{sup} \cdot \log^{(k)} n))}$.
    \end{enumerate}
\end{lemma}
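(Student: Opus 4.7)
The plan is to invoke \Cref{lem:degree_reduction_multi_multi}, which yields a \local\ algorithm $\mathcal{A}$ running in $O(k)$ rounds whose only source of randomness is $O(k \log C)$ shared bits, where $C$ is an upper bound on the number of colors in a proper coloring of $G^4$ (vertices at distance at most $4$ get distinct colors). I would simulate and derandomize $\mathcal{A}$ in $O(1)$ \mpc\ rounds by exploiting the graph-exponentiation-style assumption in Condition~1.

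As a preprocessing step, I would compute such a coloring $\phi\colon V \to [C]$ with $C = \poly(\Delta_{sup})$ in $O(1)$ \mpc\ rounds by applying \Cref{lm:coloring} to the auxiliary graph $G^4$, whose maximum degree is at most $\Delta_{sup}^4$. Since each node already knows its $k$-hop neighborhood with $k \ge c' \ge 4$, constructing $G^4$ takes $O(1)$ rounds. Then each machine hosting a node $v$ can simulate all $O(k)$ rounds of $\mathcal{A}$ locally for any fixed seed $s$, because $v$'s outcome in $\mathcal{A}$ depends only on $v$'s $k$-hop neighborhood and on $\phi$ restricted to it.

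The derandomization proceeds by brute-forcing all $2^{O(k \log C)} = \Delta_{sup}^{O(k)}$ seeds: for each seed $s$, each machine determines whether its node $v$ belongs to $V_{\ge i}(G'_s)$ for every $i \in [i_{min}, i_{max}]$ and emits an indicator tuple keyed by $(s,i)$. A colored summation via \Cref{lem:primitives} then yields $|V_{\ge i}(G'_s)|$ for all $(s,i)$ in $O(1)$ rounds. By \Cref{lem:degree_reduction_multi_multi}, at least one seed $s^*$ satisfies $|V_{\ge i}(G'_{s^*})| \le |V_{\ge i}(G)|/\Delta_i^{\Omega(k)}$ for every $i$ simultaneously, so a simple selection over the aggregated counts identifies such a seed, which is broadcast to every machine to produce the final output.

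The main obstacle is ensuring that the seed enumeration and the associated per-machine work remain within the low-space regime. Condition~2, namely $k \le \log n / \log(\Delta_{sup} \cdot \log^{(k)} n)$, is precisely calibrated for this purpose: it forces $\Delta_{sup}^{O(k)} \le n^{O(1)}$, and combined with the $k$-hop neighborhood of size at most $\Delta_{sup}^k$ being storable on a single machine by Condition~1, it guarantees that simulating $\mathcal{A}$ for every seed fits into local memory $O(n^{\alpha})$ and that the colored summation uses at most $n^{O(1)}$ global space through the primitives of \Cref{lem:primitives}.
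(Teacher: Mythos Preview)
Your overall strategy coincides with the paper's: compute a distance-$4$ coloring, simulate the \local\ algorithm of \Cref{lem:degree_reduction_multi_multi} for $\Theta(k)$ stages using the stored $k$-hop neighborhoods, brute-force all seeds, and aggregate via \Cref{lem:primitives}. The gap is in the coloring step and the resulting seed-space bound.

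A single invocation of \Cref{lm:coloring} on $G^4$ does \emph{not} yield $C=\poly(\Delta_{sup})$. Starting from the $n$-coloring by IDs, one Linial step gives $C=O(\Delta_{sup}^{8}\log^2 n)$, so the seed length is $O(k\log C)=O\bigl(k(\log\Delta_{sup}+\log\log n)\bigr)$ and the number of seeds is $(\Delta_{sup}\cdot\log n)^{O(k)}$, not $\Delta_{sup}^{O(k)}$. Condition~2 only guarantees $(\Delta_{sup}\cdot\log^{(k)} n)^{k}\le n$; it does \emph{not} bound $(\log n)^{k}$. For instance, with $\Delta_{sup}=O(1)$ and $k=\Theta(\log n)$ (which Condition~2 permits once $k\ge\log^* n$), your seed space becomes $(\log n)^{\Theta(\log n)}=n^{\Theta(\log\log n)}$, which is superpolynomial and certainly does not fit in $O(n^{\alpha})$ local memory.

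The paper avoids this by simulating $O(k)$ iterations of \Cref{lm:coloring} (all locally, using the $k$-hop knowledge) so that the palette shrinks to $C=\poly(\Delta_{sup})\cdot\log^{(k)} n$; this iterated-log residue is precisely the reason the $\log^{(k)} n$ term appears in Condition~2. With that $C$, the seed length is $b=O\bigl(\kappa\log(\Delta_{sup}\cdot\log^{(k)} n)\bigr)$, and the paper further picks $\kappa=\Theta(k)$ with a small enough hidden constant so that $2^{b}\le n^{\alpha}$ (local memory, not merely $n^{O(1)}$). Once you fix the coloring to be iterated and adjust the seed-space accounting accordingly, your argument goes through and matches the paper's.
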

We are now ready to present our deterministic low-memory \mpc\ $\poly(\arb)$-degree reduction.
\maindegred*
    \begin{proof}[Proof of \Cref{thm:main_deg_red}]
        The first step of the algorithm is to apply the preprocessing of \Cref{lem:degree_reduction_mpc_preprocessing}, which takes $O(\log \log \Delta)$ rounds. Then, for $\ell = 0,1,\ldots,O(\log \log n)$ constant-round iterations, we perform the following steps: (i) run our \mpc\ multi-class multi-stage algorithm (\Cref{lem:mpc_degree_reduction_multi_multi}) for $c = O(1)$ times; and (ii) perform one step of graph exponentiation.
        Then, the returned independent set (or matching) is given by the union of the partial solutions computed in each iteration at step (i). 
        To execute step (i), any two consecutive runs of \Cref{lem:mpc_degree_reduction_multi_multi} are interleaved by a constant-round coordination step where we update the $2^\ell$-hop neighborhood of each (survived) node as follows. Suppose that $u \in V$ knows its $2^\ell$-hop neighborhood at the beginning of the first run in graph $G^{(\ell,1)}$. Notice that $G^{(\ell,2)}$ is obtained from $G^{(\ell,1)}$ by removing $\indepset_{h^*} \cup N(\indepset_{h^*})$. Therefore, to update the $2^\ell$-hop neighborhood of $u$ in $G^{(\ell,2)}$, each node $v \in \indepset_{h^*} \cup N(\indepset_{h^*})$ communicates that it is part of the current solution to all nodes in its $2^\ell$-hop neighborhood in $G^{(\ell,1)}$. This process is then repeated another $c-1$ times (the matching case is equivalent).
        
        The crux of the analysis is to prove that $V_{\ge i}(G^{(\ell_i)}) = \emptyset$ by the end of iteration $\ell_i = \floor{\log \alpha \log_{\Delta_i} n} - 1$. Recall that by the preprocessing step we can assume that $V_{\ge i}(G^{(0)}) \le V_{\ge i}(G)/\Delta_i^t$ for any constant $t > 0$. This allows us to conclude that $V_{\ge i}(G^{(\ell_i)}) = \emptyset$ for any $\ell_i = O(1)$. For the other values of $\ell_i$, the proof is by induction on $i$. Observe that $\ell_i < \ell_j$ for any $j < i \in [i_{min}, i_{max}]$ and let $k := 2^{\ell_i} = (\alpha \log_{\Delta_i} n) \ge c'$ since $\ell_i = \Omega(1)$. Fix an arbitrary $i$. After iteration $\ell_i$, the algorithm has performed $\ell_i$ graph exponentiation steps, implying that each node knows its $k$-hop neighborhood. One can verify that also $k \leq \frac{\log(n)}{\log(\Delta(G^{(\ell)}) \cdot \log^{(k)} n))}$ holds. Therefore, by $c$ repeated invocations of \Cref{lem:mpc_degree_reduction_multi_multi} in step (i), we get $|V_{\ge i}(G^{\ell_i+1})| \leq |V_{\ge i}(G^{\ell_i})|/\Delta_i^{c \cdot \Omega(k)} = 0$ for $c$ suff. large, since $k = \Omega(\alpha \log_{\Delta_i} n)$. 
        
        This proves that after $\ell_{i_{min}} = O(\log \log n)$ rounds our degree reduction is completed. Moreover, the local memory constraint is also respected. To see why, observe that at any iteration $\ell \in (\ell_{i+1}, \ell_{i}]$, the size of the $k$-hop neighborhood is at most $\Delta_i^{k+1} = \localspace$ because all higher degree classes are empty by the end of the $(\ell_{i+1})$-th iteration, i.e., $\Delta_i$ is the maximum degree in $G^{(\ell_{i+1}+1)}, \ldots, G^{(\ell_{i})}$.
        Finally, we prove that the global memory usage is linear in the input size. Let $D_\ell(v,d) = \displaystyle\max_{\dist(u,v) \le d, \text{ in }G^{(\ell)}} \deg_{G^{(\ell}}(u)$. The total memory required for the graph exponentiation process at iteration $\ell$ is upper bounded by $\sum_{v \in V(G^{\ell})} D_\ell(v,k)^k \le \sum_{i} V_i(G^{\ell}) \Delta_i^{2k+1} $ . This inequality can be seen as a charging scheme where $v$ charges its highest-degree $k$-hop neighbor, which has degree at most $\Delta_i$ and thus $v$'s neighborhood requires at most $\Delta_i^{k+1}$ words for being stored. Moreover, the total number of nodes that charge $v \in V_i$ is at most $\Delta_i^k$. By our preprocessing step and step (i) of the algorithm, it holds that $V_{\ge i}(G^{\ell+1}) \le  V_{\ge i}(G)/\Delta_i^{2k+1}$. Observe that for small values of $k = O(1)$, we may not be able to run \Cref{lem:mpc_degree_reduction_multi_multi} due to the condition $k \ge c'$ but we can choose $t$ such that $V_{\ge i}(G^{\ell}) \le V_{\ge i}(G)/\Delta_i^{t} \le V_{\ge i}(G)/\Delta_i^{2k+2}$ for $k = O(1)$. Thus,
        \begin{align*}
            \sum_{i} V_i(G^{\ell}) \Delta_i^{2k+1} \le \sum_{i} \frac{V_{\ge i}(G)}{\Delta_i} = O(m).
        \end{align*}
        This proves that the global memory constraint is respected for vertices in the $k$-hop neighborhood of $V_{\ge i_{min}}(G)$. A vertex with degree at most $\poly(\arb)$ whose highest degree vertex in its $8$-hop neighborhood is at most $\poly(\arb)$ will remain in $G$ and its degree will not change throughout the execution of our algorithm, thus we can safely \textit{freeze} it, i.e., it will not run the algorithm and thus the overall global space constraint is also respected.
    \end{proof}
Finally, we combine the above degree reduction with the $O(\log \Delta + \log \log n)$-round algorithm of~\cite{MPC-MIS-det-general} to prove \cref{thm:main_superlinear}. \fullOnly{We include its proof together with the explanation of how to lift the assumption on the maximum degree in \Cref{sec:missing_proofs}.}
\shortOnly{The proof together with the explanation of how to lift the assumption on the maximum degree can be found in the full version of the paper.}
\section{\mis\ and \mm\ with Near-Linear Global Memory}\label{sec:mis_near_global}
    In the sublinear regime of \mpc\ with near-linear global memory, i.e., $\tilde O(n+m)$, we obtain the following three running time bounds, each of which is asymptotically superior for different values of $\arb$.
    \mainlinear*

    \subsection{High-Arboricity Case}\label{sec:higharbcase}
    In this section, we show a $O(\log n)$ runtime for high-arboricity graphs, which will be helpful to prove the $O(\log \arb \log \log_{\arb} n)$ result. When the input graph $G$ has arboricity $\arb = n^{\Omega(1)}$, any $O(\log \arb)$-round algorithm can spend $O(\log n)$ rounds for computing a solution. 
    The main challenge in converting classical $O(\log n)$-round Luby's algorithms to low-space \mpc\ algorithms is that a machine may not be able to store the whole neighborhood of a node in its memory. To overcome such limitation, Czumaj, Davies, and Parter~\cite{MPC-MIS-det-general} devise a constant-round sparsification procedure that reduces the maximum degree to $O(n^\delta)$, for any constant $\delta>0$, while preserving a partial solution that decreases the problem size by a constant factor. This allows to convert each randomized step of Luby's into a deterministic step. 
    Crucially, the derandomization of each phase of Luby's requires storing the two-hop neighborhoods on a single machine to check whether a node (or edge) survives to the next phase. This results in a $\Omega(n^{1+2\delta})$-global space usage, as each vertex has up to $O(n^{2\delta})$ many 2-hop neighbors. 
    
    Achieving $O(n+m)$-global space would require computing a partial solution that produces asymptotically the same results and can be computed using only knowledge of the one-hop neighborhood. To this end, we introduce a \textit{pessimistic estimator} that allows to derandomize a variant of Luby's algorithm in the same asymptotic number of rounds and using linear global memory. 
    \fullOnly{
    In the following, we provide a high-level description of the algorithm in the \mis\ case and refer to \Cref{sec:higharbmis} for its proof covering both cases.}
    \shortOnly{In the following, we provide a high-level description of the algorithm in the \mis\ case and refer to the full version for its proof covering both cases.}
    
    We consider a single iteration of the algorithm, where the objective is to find a partial solution that decreases the problem size by a constant factor. Let \textit{good} nodes be a subset of nodes such that: (i) each good node has constant probability of being removed in this iteration, and (ii) good nodes are incident to a constant fraction of edges in the remaining graph. 
    
    The first step is to apply the sparsification procedure of~\cite{MPC-MIS-det-general}, which works with $O(n+m)$ total space. It returns a subgraph that includes good nodes and some other nodes incident to them such that each node has degree $O(n^\delta)$ and the set of good nodes maintain their properties up to constant factors. 
    
    The second step computes an independent set on the sparsified graph as follows. It randomly samples nodes incident to good nodes with probability proportional to their degrees. Then, it adds to the independent set each sampled node that has no sampled neighbors. 
    To derandomize this process, we find a good random seed using a pessimistic estimators on the number of removed edges that is computed as follows. If a good node $v$ has exactly one sampled neighbor, then we optimistically add to the estimator its degree as if $v$ gets removed. Moreover, if a sampled node $u$ incident to $v$ has a sampled neighbor and thus is not independent, then $u$ subtracts the degree of $v$ on its behalf. Note that the degree of $v$ may be subtracted multiple times resulting in a pessimistic overall estimation. In the analysis, we show that, by restricting the set of (potentially sampled) neighbors for each good node $v$, the expected solution produced by this estimator removes a constant fraction of the edges.
    
    \subsection{Medium-Arboricity Case}
    
    This section is devoted to the $O(\log \arb \log \log_{\arb} n)$-bound  of \Cref{thm:mainlinear}. Whenever $\arb = o(2^{\log n / \log \log n})$, this algorithm achieves sublogarithmic running time. On our graph of remaining degree $\poly(\arb)$ by \Cref{thm:main_deg_red}, we run the following algorithm. For $i = 0, 1, \ldots, \log \log_\Delta n^\alpha + O(1)$:
    \begin{enumerate}
        \item Run the derandomized version of Luby's algorithm  for $2^i \cdot \Theta(\log \Delta)$ \local\ rounds using the derandomization of \Cref{sec:higharbcase} for $i = 0$ and that of~\cite[Section 5]{MPC-MIS-det-general} otherwise.
        \item Each vertex performs one graph exponentiation step until $\Delta^{2^i} = \localspace$.
    \end{enumerate}
    The proofs of \Cref{sec:mpc_part} and \cite[Section 5]{MPC-MIS-det-general} subsume the proof of the above algorithm. The key observation is that at any iteration $i$ the first step takes $O(\log \Delta) = O(\log \arb)$ \mpc\ rounds and ensures a decrease of the number of active vertices and edges by a $\Omega(\Delta^{2^i})$-factor and thus $\log \log_\Delta n^\alpha + O(1)$ iterations suffice.
    
    \subsection{Low-Arboricity Case}\label{sec:low-arb}
    The following lemma proves the last two running time bounds of \Cref{thm:mainlinear}, completing its proof. Recall that by \Cref{thm:main_deg_red} we can assume that our graph has remaining degree $\poly(\arb)$.
    \begin{lemma}\label{lm:low-arb}
        There is a deterministic low-memory \mpc\ algorithm that computes an \mis\ and an \mm\ of $G$ in time $O(\arb^{2+\eps} + \log \log n)$ and $O(\arb^{1+\eps} + \log \log n)$, respectively, for any constant $\eps > 0$, using linear total space. The same algorithm runs in time $O(\log \log n)$ when $\arb$ is $O(\log^{1/2-\delta} \log n)$ and $O(\log^{1-\delta} \log n)$ resp., for any constant $\delta > 0$.
    \end{lemma}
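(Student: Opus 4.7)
The plan is to combine the $\poly(\arb)$-degree reduction of \Cref{thm:main_deg_red} with classical coloring-based algorithms for $\mis$ and $\mm$. First, I would invoke \Cref{thm:main_deg_red} on the input graph $G$, which in $O(\log \log n)$ deterministic low-space $\mpc$ rounds with linear global memory produces a partial independent set (resp. matching) whose removal yields a residual graph $G'$ of arboricity at most $\arb$ and maximum degree $\Delta' = \poly(\arb)$. In the low-arboricity regime $\arb = \log^{O(1)} \log n$, we therefore have $\Delta' = \poly(\log \log n)$, so every one-hop neighborhood fits comfortably in a single machine of memory $\localspace$.

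Next, the plan reduces $\mis$ and $\mm$ on $G'$ to a coloring problem. For $\mm$, I would compute an $O(\arb^{1+\eps})$-edge coloring of $G'$ by adapting the arboricity-based edge-coloring approach of Barenboim and Elkin: first construct an $H$-partition $H_1,\dots,H_L$ of $G'$ with parameter $d = O(\arb)$ via the deterministic peeling procedure recalled in the preliminaries, then color the edges layer by layer using a Panconesi-Rizzi-style subroutine on each layer, which only needs to inspect each vertex's $O(\arb)$ higher-layer neighbors. For $\mis$, I would analogously compute an $O(\arb^{2+\eps})$-vertex coloring of $G'$ via the Barenboim-Elkin vertex-coloring algorithm. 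Once a $K$-coloring is in hand, iterating over color classes and greedily admitting uncovered vertices (resp. edges) yields a maximal solution in $K$ additional rounds.

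The main obstacle is verifying that these $\local$ coloring procedures can be simulated deterministically in low-space $\mpc$ with linear global memory. The key observation is that since $\Delta' = \poly(\arb)$, each step of the $\local$ algorithm operates on $O(1)$-hop neighborhoods of total size $\poly(\arb)$, and can therefore be simulated in $O(1)$ $\mpc$ rounds via the sorting and collection primitives of \Cref{lem:primitives}. The $H$-partition itself can be constructed by the same $O(\log \log n)$-round peeling routines developed earlier in the paper (absorbed into the $O(\log \log n)$ additive term), and the global space required for storing directed edges into higher layers is $O(n \arb)$, hence linear in the input. Putting the pieces together, $\mm$ is computed in $O(\arb^{1+\eps} + \log \log n)$ rounds and $\mis$ in $O(\arb^{2+\eps} + \log \log n)$ rounds.

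Finally, for the specialized bounds, when $\arb = O(\log^{1/2-\delta} \log n)$ a sufficiently small choice of $\eps > 0$ (depending on $\delta$) makes $\arb^{2+\eps} = O(\log \log n)$, yielding $\mis$ in $O(\log \log n)$ rounds; when $\arb = O(\log^{1-\delta} \log n)$, an analogous choice of $\eps$ gives $\arb^{1+\eps} = O(\log \log n)$ and hence $\mm$ in $O(\log \log n)$ rounds, completing the lemma.
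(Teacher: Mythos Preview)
Your overall strategy---degree reduction via \Cref{thm:main_deg_red}, then an $H$-partition, then coloring-based symmetry breaking---is exactly the paper's. The $\mis$ half is essentially right: the paper likewise obtains an $O(\arb^{2+\eps})$-vertex coloring (via Barenboim--Elkin's \textsc{Arb-Linial} algorithm on the $H$-partition orientation) and then iterates over color classes.

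There is, however, a real gap in your $\mm$ argument. You claim an $O(\arb^{1+\eps})$-\emph{edge} coloring of the residual graph $G'$, but no such coloring can exist: any proper edge coloring needs at least $\Delta(G')$ colors, and after the reduction $\Delta(G') = \arb^{c}$ for a fixed constant $c$ determined by \Cref{thm:main_deg_red} that you do not get to choose. Iterating over edge-color classes would therefore cost $\arb^{c}$ rounds, not $\arb^{1+\eps}$. The paper avoids edge coloring entirely: the $H$-partition with degree $d = \arb^{1+\eps}$ yields a $d$-forest decomposition (equivalently, an acyclic orientation with out-degree $d$), and the Panconesi--Rizzi maximal matching algorithm, given such an orientation, runs directly in $O(d) = O(\arb^{1+\eps})$ \local\ rounds, independently of $\Delta(G')$. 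That is the missing idea.

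A secondary issue: you take the $H$-partition parameter $d = O(\arb)$, which produces $L = \Theta(\log n)$ layers. Graph exponentiation on a $\poly(\arb)$-degree graph lets you simulate only $O(\log_{\arb} n)$ \local\ rounds within $O(\log\log n)$ \mpc\ rounds, so you cannot finish the peeling when $\arb = \omega(1)$. The paper's choice $d = \arb^{1+\eps}$ brings $L$ down to $O(\tfrac{1}{\eps}\log_{\arb} n)$, which is simulable, and is also the actual source of the $\eps$ in both final exponents.
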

    \begin{proof}
        We first compute a degree $d = (\arb^{1+\epsilon})$ $H$-partition of the remaining graph $G$ in $O(\log \log n)$ rounds. Let $k$ be the largest integer such that $\Delta^{2^k} \le n^\alpha$, where $\Delta = \poly(\arb)$. For $i = 0,\ldots,k+O(1)$:
        \begin{enumerate}
            \item Nodes simulate $s \cdot 2^i$ rounds of the $H$-partition algorithm for suff. large $s = O(1)$.
            \item If $i < k$, then each node performs one step of graph exponentiation.
        \end{enumerate}
        The correctness and runtime and memory bounds of the above \mpc\ $H$-partition algorithm follow from~\cite[Lemma 4.2]{BFU18}, which is based on the observation that a $\Omega(\Delta)$ progress is achieved in $t = O(1)$ \local\ rounds by the properties of the $H$-partition. The \mis\ result follows from computing a $d^2 = O(\arb^{2+\varepsilon})$ coloring in $O(\log^* n)$ rounds by applying the algorithm \textsc{Arb-Linial's Coloring} of Barenboim and Elkin~\cite{BE10}.
        For the \mm\ case, the $d$ forest-decomposition produced by the $H$-partition is given in input to the maximal matching algorithm by Panconesi and Rizzi~\cite{PR01}, which runs in $O(d) = O(\arb^{1+\varepsilon})$ rounds. 
    \end{proof}

\section{Deterministic Arb-Coloring via All-to-All Communication}\label{chp:col}
    In this section, we study the complexity of deterministic arboricity-dependent coloring in all-to-all communication models, specifically, \congested\ and linear-memory \mpc, where each machine has $O(n)$ memory. In these models, Czumaj, Davis, and Parter~\cite{CDP20cc1} give a deterministic algorithm that solves $(\Delta+1)$ coloring, and its list-coloring variant, in $O(1)$ time. Henceforth, we refer to it as CDP's algorithm. 
    Towards our constant-round arboricity-dependent coloring algorithm, in the next corollary, we show that CDP's algorithm provides a $O(\arb)$ coloring of $\arb$-arboricity graphs in $O(\log \arb)$ rounds and a coloring with $O(\arb^{1+\eps})$ colors in $O(1)$ time.
    \begin{corollary}\label{cor:arbcol_basic}
        Given a graph $G$ with arboricity $\arb$ and a parameter $d$ satisfying $2\arb < d \le \arb^{1+\eps}$, for constant $\eps > 0$, a legal $O(d)$ vertex coloring of $G$ can be computed deterministically in $O(\log_{d/\arb} \arb)$ rounds of \congested\ and linear-memory \mpc\ with optimal global memory.
    \end{corollary}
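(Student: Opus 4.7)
The plan is to compute the first $k=\Theta(\log_{d/\arb}\arb)$ layers of an $H$-partition of $G$ with threshold $d$ by a classical peeling procedure, gather the small residual subgraph on a single machine, and finish by list-coloring the layers top-down via CDP's deterministic $(\Delta+1)$-list-coloring algorithm.

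First I would iteratively peel off low-degree vertices: at iteration $i$, let $H_i$ be the set of vertices of degree at most $d$ in the current remaining graph $G_{i-1}$, and set $G_i:=G_{i-1}\setminus H_i$. Since every subgraph of $G$ inherits arboricity at most $\arb$, the sum of degrees in $G_{i-1}$ is bounded by $2\arb\cdot |V(G_{i-1})|$; Markov's inequality then gives $|V(G_i)|\le(2\arb/d)\cdot|V(G_{i-1})|$. Iterating $k$ times yields a residual graph $R:=G_k$ with $|V(R)|\le n\,(2\arb/d)^k=O(n/\arb)$ and hence $|E(R)|\le\arb\cdot|V(R)|=O(n)$. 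Each iteration is $O(1)$ rounds by the primitives of \Cref{lem:primitives}, and total global memory stays at $\Theta(n+m)$.

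Next I would collect $R$ onto a single machine in linear-memory \mpc\ (in \congested, one invokes Lenzen's routing, which fits within the stated regime since $|E(R)|=O(n)$) and color $R$ greedily in its degeneracy order with a palette of $2\arb\le d$ colors in $O(1)$ rounds. Finally I would color the layers in the order $H_k,H_{k-1},\ldots,H_1$ using a global palette $\{1,\dots,d+1\}$. When $H_i$ is processed, every $v\in H_i$ knows the colors of its neighbors in $R\cup H_{i+1}\cup\cdots\cup H_k$; writing $a$ for the count of these already-colored neighbors, the defining property of the peeling ($v$ had degree at most $d$ in $G_{i-1}$) yields $a+\deg_{G[H_i]}(v)\le d$, and hence the list $L(v)$ of colors still available to $v$ satisfies $|L(v)|=d+1-a\ge \deg_{G[H_i]}(v)+1$. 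This is precisely the precondition required by CDP's $(\Delta+1)$-list-coloring algorithm applied to $G[H_i]$, which terminates in $O(1)$ rounds per layer; summing over the $k$ layers gives $O(k)=O(\log_{d/\arb}\arb)$ additional rounds.

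The main obstacle I anticipate is the bookkeeping in the coloring phase: one must verify that CDP's list-coloring routine can be invoked as a strict black-box on each induced subgraph $G[H_i]$ with the \emph{non-uniform} lists derived from ``palette minus already-used colors,'' rather than on a uniform palette. This reduces to the list-coloring variant established in~\cite{CDP20cc1}, and the size inequality $|L(v)|\ge \deg_{G[H_i]}(v)+1$ shown above is exactly its precondition. A secondary (minor) point is that the number of peeling iterations genuinely depends on $\log\!\bigl(d/(2\arb)\bigr)$, which matches $\Theta(\log_{d/\arb}\arb)$ whenever $d/\arb$ is bounded away from $2$ by a constant, as is the case throughout the allowed range $2\arb<d\le\arb^{1+\eps}$.
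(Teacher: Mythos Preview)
Your approach mirrors the paper's (peel $H$-partition layers, color the residual centrally, then list-color layers top-down with CDP), but you diverge at $H_1$: you run list-coloring on it from the shared $(d{+}1)$-palette, whereas the paper hands $H_1$ a \emph{fresh} palette of $d{+}1$ colors and invokes plain $(\Delta{+}1)$-coloring there (ending with $3d+2$ colors rather than your $d+1$). The paper's stated reason is the ``optimal global memory'' clause: for $\ell\ge 2$ one has $|H_\ell|\le (2\arb/d)\,n$, so storing $O(d)$-size lists costs $O(\arb n)=O(m)$; but $|H_1|=\Theta(n)$, and materializing lists of size up to $d{+}1$ for all of $H_1$ would naively cost $\Theta(nd)$ words, which is not $O(n+m)=O(\arb n)$ since $d>\arb$. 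You verify linear global memory only for the peeling stage and never for the coloring stage, so as written that clause of the corollary is unproven. The fix is easy---truncate each $L(v)$ to exactly $\deg_{G[H_i]}(v)+1$ colors before handing it to CDP, so that $\sum_{v\in H_i}|L(v)|=O(|H_i|+|E(G[H_i])|)$---but it must be said, because it is precisely why the paper splits off $H_1$.

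A minor note on your closing caveat: the observation that the peeling rate is governed by $2\arb/d$ rather than $\arb/d$ is correct, but your resolution (``$d/\arb$ is bounded away from $2$ \ldots as is the case throughout the allowed range'') is false, since the hypothesis $2\arb<d$ permits $d/\arb$ arbitrarily close to $2$. The paper's proof glosses over the same point.
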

    \begin{proof}
        The algorithm consists of three stages. The first stage computes the first $\last = O(\log_{d/\arb} \arb)$ layers of the $H$-partition of $G$ with degree $d$, and removes nodes in these $\last$ layers from the graph. The number of remaining nodes whose index layer is higher than $\last$ is at most $(\arb/d)^\last \cdot n$ by the definition of the $H$-partition. Therefore, at most $\lambda \cdot \left(\frac{\arb}{d}\right)^\last n = O(n)$ edges remain and the remaining graph can be gathered in a single machine and colored with $d+1$ colors locally.
        The second stage of the algorithm colors vertices in layers $\last,\ldots,2$ in a sequential fashion using $d$ additional colors. Consider an arbitrary layer $\ell \in [2,\last]$. Each node has a palette of $2d+1$ colors and at most $d$ neighbors already colored. Let each node discard colors blocked by its neighbors and observe that layer $\ell$ induces an instance of $(\Delta+1)$-list coloring solvable in $O(1)$ rounds by CDP's algorithm. The global memory required to store the color palettes is $n_\ell \cdot O(d)$, where $n_\ell$ is the number of nodes in layer $\ell$. Since $n_\ell \le \frac{\arb n}{d}$ for $\ell \ge 2$, the algorithm requires $\frac{\arb n}{d} \cdot O(d) = O(m)$ total space.
        In the third stage of the algorithm, we color nodes in the first layer using the constant-round $(\Delta+1)$-coloring algorithm of CDP and a fresh palette of $d+1$ colors, so that the global memory remains linear. The final coloring is a legal vertex coloring with $3d + 2 = O(d)$ colors, as desired.
    \end{proof}

    Next, we prove the main result of this section and its extension to \congested.  
    
    \mainarbcolalltoall*
         
    The first step of the algorithm is a preprocessing step. We compute the first $\last = O(1)$ layers of the $H$-partition of $G$ with degree $\Delta \eqdef \arb^{1 + \delta}$, for $0 < \delta \le \frac{2\eps}{1-2\eps}$. The number of edges induced by unlayered nodes is at most $O(n)$ by arboricity properties. Thus, we can gather and color unlayered vertices with $2\arb$ colors onto a single machine. In the rest of the algorithm, we discuss how to color the subgraph $H_j$ induced by the $j$-th layer, for $j \in [\last]$, with a fresh palette of $O(\arb)$ colors. 

    To color each remaining layer we perform a graph partitioning routine. Vertices are randomly partitioned in many subgraphs $B_1, \ldots, B_\ell$ of maximum \textit{out}-degree $O(\frac{\arb}{\ell})$ and size $O(\frac{n}{\ell})$. We show that such a partition exists by using the concept of pessimistic estimators. Then, a good partition is found by defining the global quality of a partition as a function of quantities computable by individual machines. However, doing so requires two steps. First, we select among all possible partitions those whose parts have \textit{all} linear size. Second, for each partition, we can now locally check how many \textit{bad nodes} are in each of its parts and sum these numbers up to find a good partition. Once a good partition is found, each of its $\ell$ subgraphs can be $O(\frac{\arb}{\ell})$-colored while the subgraph induced by bad nodes can be colored with $2\arb$ colors.
    
    \paragraph{The Graph Partitioning Algorithm:} Let $\ell = \arb^{0.6}$ be a parameter. The partition $V = B_1 \cup \cdots \cup B_\ell$ is obtained placing each node $v$ in a part selected according to a $k$-wise independent random choice. Denote $deg^{out}_B(v)$ as the outdegree of vertex $v$ in the subgraph induced by the vertices $B \subseteq V$. We require that the output of the partitioning algorithm satisfies the following two properties. For the sake of the analysis, let us imagine that we are given an orientation of the edges of $G$ with outdegree $2\arb$. 
 
    \subparagraph{i) Size of Each Part:} Bin $i$ in $[\ell]$ is \textit{good} if $\card{B_i} = O(\frac{n}{\ell})$. Observe that $\E[\card{B_i}] = \frac{n}{\ell} = \Omega(n^{0.4})$. Consequently, the probability that $\card{B_i}$ is at most $2\cdot \E[\card{B_i}]$ is upper bounded by $n^{-3}$, for $k \ge 16$, by applying \Cref{lem:kwise_bound} with $X_1,\ldots,X_n$ as i.r.v.\ for the events that the $i$-th vertex is in $B_i$. By a union bound over $\ell$ bins, with probability at least $1 - n^{-2}$, all bins contain fewer than $\frac{2n}{\ell}$ nodes.
        
    \subparagraph{ii) Outdegree of Each Vertex:} A node $v$ in bin $i$ is \textit{good} if $\deg^{out}_{B_i}(v) = O(\frac{\arb}{\ell})$. It is required that all $v \in V$ but at most $n/\arb^{3}$ nodes are good in expectation. We apply \Cref{lem:kwise_bound} with $X_1,\ldots,X_{deg^{out}(v)}$ as i.r.v.\ for the events that each outneighbor of $v$ is placed in the same bin as $v$. These variables are $(k-1)$-wise independent and each has expectation $\ell^{-1}$. Accordingly, the expected outdegree $\E[X]$ of $v$ is at most $\frac{2\arb}{\ell} = \Omega(\arb^{0.4})$. Therefore, its outdegree is larger than $\frac{4\arb}{\ell}$ with probability less than $\arb^{-3}$, for $k \ge 17$, and we can conclude the required property.

    \paragraph{Finding a Good Partition:} Let $\mathcal{H}$ be a $k$-wise independent family of hash functions $h \colon [n] \mapsto [\ell]$. Each hash function $h$ induces a partition of the graph into $\ell$ parts $B_1(h),\ldots,B_\ell(h)$. Similarly to~\cite{CDP20cc1}, we analyze the likelihood that a randomly chosen hash function $h$ induces any bad bin and many bad nodes by defining the following pessimistic estimator $P(h)$:
    \begin{equation*}
        P(h) = \card{\{\text{bad nodes under $h$}\}} + \card{\{\text{bad bins under $h$}\}}\cdot n.
    \end{equation*}
    Combining the bounds from the previous paragraph gives: 
    \begin{equation*}
        \E[P(h)] = \E[\card{\{\text{bad nodes}\}}] + \E[\card{\{\text{bad bins}\}}] \cdot n \le \frac{n}{\arb^3} + \frac{1}{n} \le \frac{2n}{\arb^3}.
    \end{equation*}
    Therefore, by the probabilistic method, there is one hash function $h^*$ that gives no bad bins and at most $\frac{2n}{\arb^3}$ bad nodes. 
    However, $P(h)$ cannot be easily computed without the orientation used for the analysis. 
    Due to that, we relax the above conditions and identify first a subset $\mathcal{H}' \subseteq \mathcal{H}$ such that $h^* \in \mathcal{H}'$ and each $h \in \mathcal{H}'$ gives a graph partitioning in which each subgraph has size $O(n)$. Second, we find a good partitioning by analyzing each hash function, i.e., induced subgraph, concurrently.
    
    \subparagraph{i) Select Candidate Functions:} Observe that $h^*$ is in $\mathcal{H}'$. Indeed, the bound on the size of each part together with the bound on the outdegree of each vertex gives
        \begin{equation*}
            \card{E(G[B_i(h)])} = \sum_{v \in B_i(h)} \deg_{B_i(h)}(v) \le \frac{2n}{\ell}\cdot\frac{4\arb}{\ell} + \frac{2n}{\arb^{3}}\cdot\Delta = O(n).
        \end{equation*}
    To deterministically find $\mathcal{H}'$, we assign each hash function $h \in \mathcal{H}$ to a machine $x_h$. Let each node compute its degree under $h$ and send it to $x_h$. The machine $x_h$ is then able to determine whether each subgraph induced by $h$ has size $O(n)$ by summing up the degrees of vertices in each particular bin. This process requires a total space of $O(\arb n \card{\mathcal{H}})$.
        
    \subparagraph{ii) Find a Good Function:} For every $h \in \mathcal{H}'$ and $i \in [\ell]$, we compute a $\frac{10\arb}{\ell}$-coloring of the subgraph $G[B_i(h)]$ and sum up the number of nodes that remain uncolored after this coloring step across all bins for each $h$. We prove that there is a function $h^* \in \mathcal{H}'$ that leaves only $O(n/\arb)$ nodes uncolored, which can be colored locally with a new palette of size $2\arb$. 
    Since $\card{G[B_i(h)]} = O(n)$, this computation can be run offline for each subgraph on a single machine with $O(\arb n \card{\mathcal{H}})$ total space. 
        
    Consider a subgraph $G[B_i(h)]$ for $h \in \mathcal{H}'$. This subgraph may have both good and bad nodes. 
    We run the $H$-partition algorithm with degree $d = \frac{10\arb}{\ell} - 1$ until when there is no vertex of degree at most $d$ left. 
    Once this process finishes, we obtain a partial $H$-partition $\mathcal{X}$ of $B_i(h)$ that can be $(d+1)$-colored. Then, nodes in $B_i'(h) \eqdef B_i(h) \setminus \mathcal{X}$ induce an uncolored subgraph of minimum degree at least $d+1$. This implies that $\card{E(G[B'_i(h)])} \ge \frac{5\arb}{\ell} \cdot \card{B'_i(h)}$. On the other hand, good nodes in $B_i'(h)$ contribute with at most $\frac{4\arb}{\ell}\cdot \card{B'_i(h)}$ edges and bad nodes with at most $\frac{2n}{\arb^3}\cdot \Delta$ edges. Thus,
    \begin{equation*}
        \frac{5\arb}{\ell} \cdot \card{B'_i(h)} \le \frac{4\arb}{\ell}\cdot \card{B'_i(h)} + \frac{2n}{\arb^3}\cdot \Delta \iff  \card{B'_i(h)} \le \frac{2\ell \Delta n}{\arb^4} \le \frac{n}{\arb^2},
    \end{equation*}
    which proves the existence of such good $h^* \in \mathcal{H}'$. We then find a partition whose bins can be $\frac{10\arb}{\ell}$-colored (with different palettes) leaving at most $\ell \cdot \frac{n}{\arb^2}$ nodes uncolored across all parts. Hence, uncolored vertices induce $\arb \cdot \frac{\ell n}{\arb^2} = O(n)$ edges in the original graph.

    \paragraph{Reducing the Total Space:} The space requirement of $O(\arb n \card{\mathcal{H}})$ can be made on the order of $n \cdot \poly(\arb)$ by reducing the size of $\mathcal{H}$ to be polynomial in the arboricity. To this end, we first note that w.l.o.g.\ one can assume that $\arb \le n^{\frac 1 2 - \eps}$, for any constant $\eps$, $0 < \eps < \frac 1 2$, as otherwise $\poly(n) = \poly(\arb)$. Second, we observe that $k$-wise independence is required only among the random variables associated with vertices that share a neighbor. An assignment of nodes to IDs such that no two nodes with a shared neighbor are assigned the same ID corresponds to a legal coloring of the subgraph $H_j^2$. In the subgraph $H_j^2$, each vertex has degree at most $\Delta^2 = \arb^{2 + 2\delta} = O(n)$, by our choice of $\delta$ and the assumption that $\arb \le n^{\frac 1 2 - \eps}$. Applying CDP's algorithm in parallel on each $H_j^2$ provides a $\poly(\arb)$ coloring in $O(1)$ time, as desired.\\

    Finally, we extend the above algorithm to \congested.
    \begin{corollary}\label{cc_extension_cor}
        When $\arb = O(n^{\delta}$), for constant $\delta > 0$, such that the $n \cdot \poly(\arb)$-global memory usage of \Cref{thm:main_arbcol_alltoall} can be made on the order of $n^2$, then the same algorithm works in \congested.
    \end{corollary}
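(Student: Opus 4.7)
The plan is to leverage Lenzen's routing primitive, which underpins the linear-memory \mpc--\congested\ connection referenced in the preliminaries via~\cite{L13, BDH18}, to simulate the algorithm of \Cref{thm:main_arbcol_alltoall} in \congested. The hypothesis $\arb = O(n^\delta)$ is chosen precisely so that $n \cdot \poly(\arb) = O(n^2)$, matching the aggregate per-round bandwidth of the clique, which is $\Theta(n^2 \log n)$ bits. Since each round of the \mpc\ algorithm moves at most as much data as the global memory, routing it in $O(1)$ \congested\ rounds is in principle feasible.

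Concretely, I would assign the $\poly(\arb) \leq n$ virtual machines of the \mpc\ computation to distinct vertices of the clique (with a few unused vertices if necessary). The preprocessing phase of \Cref{thm:main_arbcol_alltoall} (computing $O(1)$ layers of the $H$-partition of degree $\arb^{1+\delta}$ and then coloring the residual linear-size graph on one node) is handled directly, since $H$-partition layers are computable by the standard \cite{BE10} peeling procedure in \congested, and the $O(n)$ residual edges can be gathered onto a single node via Lenzen's routing.

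For the per-layer graph-partitioning step, the $k$-wise independent hash family $\mathcal{H}$ of size $\poly(\arb) = O(n)$ is distributed across the clique by assigning every $h \in \mathcal{H}$ to a designated node $x_h$. Each vertex $v$ sends, for every $h \in \mathcal{H}$, its (bin, contribution) message to $x_h$. The total traffic is $O(n \cdot |\mathcal{H}|) = O(n^2)$ words, balanced among senders and receivers, so Lenzen's routing delivers all of it in $O(1)$ rounds. Each $x_h$ then decides locally whether $h \in \mathcal{H}'$. A second round of Lenzen routing collects each induced subgraph $G[B_i(h)]$ (of size $O(n)$) onto a designated node for every $h \in \mathcal{H}'$, again with $O(n \cdot |\mathcal{H}'|) = O(n^2)$ total traffic. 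The local $H$-partition coloring computations, the evaluation of the pessimistic estimator, and the selection of $h^*$ are then performed offline on these machines, after which the chosen color assignment is broadcast and the residual uncolored $O(n)$-edge subgraph is gathered and colored on a single node.

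The main obstacle is verifying that each step respects the load constraints of Lenzen's routing: no node should send or receive more than $\widetilde{O}(n)$ words in a single round. This is guaranteed by the uniform distribution of the $|\mathcal{H}| \leq n$ hash functions across nodes (so each node is responsible for $O(1)$ hash functions on average), together with the per-vertex traffic of $O(|\mathcal{H}|) = O(n)$ messages per round. With load balance in place, every $O(1)$-round block of the linear-memory \mpc\ algorithm translates into $O(1)$ \congested\ rounds, and the overall round complexity is preserved.
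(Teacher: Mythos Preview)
Your proposal is correct and rests on the same idea as the paper: simulate the linear-memory \mpc\ algorithm in \congested\ via Lenzen's routing, which is feasible because the hypothesis forces the global memory (and hence the per-round communication volume) to be $O(n^2)$, matching the aggregate bandwidth of the clique. The paper's own proof is considerably shorter, invoking the \mpc--\congested\ equivalence as a black box rather than re-deriving the simulation of each phase; your step-by-step breakdown of the preprocessing, hash-function distribution, and subgraph gathering is consistent with that black-box simulation but more detailed than strictly necessary.
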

    \begin{proof}
        This result follows from the equivalence between linear-memory \mpc\ and \congested. Specifically, each \mpc\ step can be simulated in a constant number of \congested\ rounds using Lenzen's routing~\cite{L13}. Since each of the $n$ nodes in \congested\ can communicate up to $O(n)$ words and our algorithm uses $O(n^2)$ total memory by the assumption on $\arb$, then \Cref{thm:main_arbcol_alltoall} can be run in \congested.
    \end{proof}

\bibliographystyle{alpha}
\bibliography{references}
\clearpage
\appendix
\section{Arboricity Properties of Graphs and H-partition}\label{sec:graph_arboricity}
     The notion of arboricity characterizes the sparsity of a graph and has the following properties. Nash-Williams~\cite{NW61, NW64} showed that the arboricity equals the density of the graph, i.e., the ratio of the number of edges to that of nodes in any subgraph.
    \begin{lemma}[Arboricity Properties \cite{BE13, BEPS16}] \label{lem:arb}
    Let $G$ be a graph of $m$ edges, $n$ nodes, and arboricity $\arb$. Then
        \begin{enumerate}
            \item $m < \arb n$.
            \label{arb1}
            \item The arboricity of any subgraph $G' \subseteq G$ is at most $\arb$.
            \item The number of nodes with degree at least $t\ge \arb+1$ is less than $\arb n/(t-\arb)$.
            \label{arb2}
            \item The number of edges whose endpoints both have degree at least $t\ge \arb+1$ is less than $\arb m/(t-\arb)$.
            \label{arb3}
        \end{enumerate}
    \end{lemma}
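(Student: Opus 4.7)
The plan is to derive all four parts from two equivalent characterizations of a $\arb$-arboricity graph $G$: the definitional one, that $E(G)$ partitions into $\arb$ edge-disjoint forests $F_1, \ldots, F_\arb$, and the folklore orientation view, that $G$ admits an orientation with maximum out-degree at most $\arb$ (obtained by rooting each $F_i$ arbitrarily and orienting every edge away from the root of its forest). Parts 1 and 2 fall out of the forest decomposition, while parts 3 and 4 rely on the orientation.

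For part 1, each forest has at most $n-1$ edges, so $m = \sum_{i=1}^{\arb} |E(F_i)| \le \arb(n-1) < \arb n$. For part 2, the decomposition restricts to any subgraph $G'$ via $F'_i = F_i \cap E(G')$, and each $F'_i$ remains a forest since subgraphs of forests are forests.

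For part 3, fix the orientation with max out-degree $\arb$. Every $v$ with $\deg_G(v) \ge t$ has in-degree at least $t - \arb$. Summing over $V_t := \{v : \deg_G(v) \ge t\}$ and using that the total in-degree over the whole graph equals $m$ gives $(t - \arb)|V_t| \le \sum_{v \in V_t} \mathrm{indeg}(v) \le m$; combined with part 1, this yields $|V_t| \le m/(t-\arb) < \arb n/(t-\arb)$.

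For part 4, note that the edges with both endpoints of degree at least $t$ are exactly the edges of the induced subgraph $G[V_t]$, which has arboricity at most $\arb$ by part 2. Applying part 1 to $G[V_t]$ gives $|E_t| < \arb|V_t|$. The one subtle step, and really the main obstacle, is that chaining this directly with the stated part-3 bound $|V_t| < \arb n/(t-\arb)$ would only yield $\arb^2 n/(t-\arb)$, which is too weak (since part 1 is $m < \arb n$, going the wrong direction). The right move is to use the sharper intermediate bound $|V_t| \le m/(t-\arb)$ obtained en route to part 3, which plugs in to give $|E_t| < \arb \cdot m/(t-\arb)$ as claimed.
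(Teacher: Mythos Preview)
Your proof is correct. The paper does not actually supply its own proof of this lemma; it is stated with attribution to \cite{BE13, BEPS16} (after a one-line remark about Nash--Williams' density characterization) and then used as a black box. Your route via the forest decomposition for parts~1 and~2 and the induced $\arb$-out-degree orientation for parts~3 and~4 is the standard elementary argument found in those references, and your observation that part~4 needs the intermediate bound $|V_t|\le m/(t-\arb)$ rather than the final form of part~3 is exactly the right point.
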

    An $H$-partition (see \Cref{def:hpartition}) can be quickly constructed as follows, making use of the above properties.
    Each layer is computed in a constant number of rounds by applying the classical greedy peeling algorithm as follows: Repeatedly, vertices of remaining degree at most $d$ are peeled off from the graph and form a new layer. Since the number of vertices of degree larger than $d$ is at most $\frac{2\arb}{d} \cdot n$, the \textit{size} $L$ of $\hpart$, i.e., the overall number of layers, is at most $\ceil{\log_{d/2\arb} n} = O(\log_{\frac{d}{\arb}} n)$. Given an $H$-partition, a $d$ forest decomposition, i.e., an acyclic orientation of the edges such that every vertex has outdegree at most $d$, can be obtained as follows. An edge $(u,v)$ is oriented toward the vertex with a higher index layer in the given $H$-partition or, if they are in the same layer, toward the vertex with a greater ID. By construction, the orientation is acyclic and each vertex has at most $d$ outneighbors. Now, let each node $v$ assign a distinct label from the set $\{1,\ldots,d\}$ to each of its outgoing edges. The set of edges labeled by label $i$ forms a forest. For a more detailed proof, we refer to~\cite{BE13}.

    \begin{remark}
        While our algorithms may seem to require knowing $\arb$, by employing the standard technique of~\cite{KSV13} to run the algorithm with doubly-exponentially increasing estimates for $\arb$ in parallel, combined with global communication, we can find an estimate for $\arb$ that produces the same asymptotic result incurring only a $\tilde O(1)$ factor overhead in the global space and total computation.
    \end{remark}

\section{Derandomization via All-to-All Communication}\label{sec:derandomization_framework}
    A $k$-wise independent family of hash functions is defined as follows:
    \begin{definition}[$k$-wise independence]
        Let $N, k, \ell \in \mathbb{N}$ with $k \le N$. A family of hash functions $\mathcal{H} = \{h : [N] \rightarrow \{0, 1\}^{\ell}\}$ is $k$-wise independent if for all $I \subseteq \{1,\ldots,n\}$ with $|I| \leq k$, the random variables $X_i := h(i)$\footnote{$h(\cdot)$ denotes the length-$\ell$ bit sequence by the corresponding integer in $\{0,\ldots,2^\ell-1\}$.} with $i \in I$ are independent and uniformly distributed in $\{0, 1\}^\ell$, when $h$ is chosen uniformly at random from $\mathcal{H}$. If $k = 2$ then $\mathcal{H}$  is called pairwise independent.
    \end{definition}
    Given a randomized process that works under $k$-wise independence, we apply the following lemma to construct a $k$-wise independent family of hash functions $\mathcal{H}$.
    \begin{lemma}[\cite{ABI86, CG89, EGL+98}]\label{lemma-hash}
        For every $N, k, \ell \in \mathbb{N}$, there is a family of $k$-wise independent hash functions $\mathcal{H} = \{h : [N] \rightarrow \{0, 1\}^{\ell}\}$ such that choosing a uniformly random function $h$ from $\mathcal{H}$ takes at most $k(\ell + \log N) + O(1)$ random bits, and evaluating a function from $\mathcal{H}$ takes time $\poly(\ell, \log N)$ time.
    \end{lemma}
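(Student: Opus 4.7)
The plan is to give the classical polynomial-based construction over a finite field $\mathbb{F}_q$, which is the standard way to achieve $k$-wise independence with a near-optimal seed length. First I would fix $t = \max(\ell, \lceil \log N \rceil)$ and work in the field $\mathbb{F}_q$ with $q = 2^t$, so that $[N]$ embeds into $\mathbb{F}_q$ and the additive group of $\mathbb{F}_q$ admits a natural projection onto $\{0,1\}^\ell$ (as $\{0,1\}^t$ viewed as a $t$-bit string, keeping the first $\ell$ bits). The hash family is then
\[
\mathcal{H} = \Bigl\{\, h_{a_0,\dots,a_{k-1}}(x) = \bigl(\textstyle\sum_{i=0}^{k-1} a_i x^i\bigr) \bmod_{\ell\text{ bits}} \ \Big| \ (a_0,\dots,a_{k-1}) \in \mathbb{F}_q^k \,\Bigr\},
\]
where $\bmod_{\ell\text{ bits}}$ denotes the projection to the first $\ell$ bits of the $t$-bit string representing the field element.

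Next I would verify the three required properties. For the seed length, sampling each of the $k$ coefficients uniformly from $\mathbb{F}_q$ takes $t \le \ell + \log N + O(1)$ random bits, so the total is $k(\ell + \log N) + O(1)$ as claimed. For $k$-wise independence, fix any distinct $i_1,\dots,i_k \in [N]$, viewed as distinct points of $\mathbb{F}_q$; Lagrange interpolation shows that the evaluation map $(a_0,\dots,a_{k-1}) \mapsto (h(i_1),\dots,h(i_k))$ is a linear bijection from $\mathbb{F}_q^k$ to $\mathbb{F}_q^k$, hence a uniform random choice of the coefficients induces a uniform distribution over $\mathbb{F}_q^k$ on the $k$-tuple of outputs. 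Composing coordinatewise with the bit-projection $\mathbb{F}_{2^t} \to \{0,1\}^\ell$, which is a group homomorphism with preimages of equal size, preserves joint uniformity, yielding the $k$-wise independent distribution on $(\{0,1\}^\ell)^k$. For the evaluation time, Horner's rule computes $h(x)$ with $O(k)$ additions and multiplications in $\mathbb{F}_{2^t}$; implementing $\mathbb{F}_{2^t}$ via an irreducible polynomial of degree $t$ over $\mathbb{F}_2$, each field operation costs $\poly(t) = \poly(\ell, \log N)$ bit operations, giving the stated $\poly(\ell, \log N)$ evaluation time (up to the $k$ factor, which is standard to absorb).

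The only real subtlety—and hence the main thing to be careful about—is the reduction step from $\mathbb{F}_q$ to $\{0,1\}^\ell$: if one picks $q$ to be an arbitrary prime power only slightly larger than $\max(N,2^\ell)$, then taking $h(x) \bmod 2^\ell$ need not be uniform. Choosing $q = 2^t$ makes this clean because $\mathbb{F}_{2^t}$ as an additive group is isomorphic to $(\mathbb{Z}/2\mathbb{Z})^t$, and projection onto any $\ell$ coordinates is a surjective homomorphism whose fibers all have identical size $2^{t-\ell}$, so uniformity and thus joint $k$-wise independence are preserved exactly. A secondary detail is finding an explicit irreducible polynomial of degree $t$ over $\mathbb{F}_2$ in order to carry out field arithmetic; deterministic constructions exist in time $\poly(t)$, which is absorbed into the claimed evaluation bound.
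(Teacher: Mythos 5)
The paper states this lemma purely as a black-box citation to the listed references and gives no proof, so there is no in-paper argument to compare against; your construction is exactly the standard one underlying those citations (degree-$(k-1)$ polynomials over $\mathbb{F}_{2^t}$ with $t=\max(\ell,\lceil\log N\rceil)$, truncated to $\ell$ bits), and the argument is correct: the Vandermonde/interpolation bijection gives joint uniformity of the $k$ evaluations over $\mathbb{F}_q^k$, and coordinatewise truncation is a surjective additive homomorphism with equal-size fibers, so exact $k$-wise independence is preserved. Two bookkeeping remarks, neither a gap in your reasoning: the seed length is really $kt\le k(\ell+\log N)+k$ rather than $+O(1)$, and the Horner evaluation cost is $O(k)\cdot\poly(\ell,\log N)$; both match the lemma only when $k$ is small, which is how the paper uses it.
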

    For a more systematic treatment of limited independence and its applications, we refer the reader to~\cite{alon2016probabilistic, CW79, LW06, MR95, Rag88, WC79}. We now define an objective function that is a sum of functions calculable by individual machines and is at least (or at most) some value $Q$. Its expected value over the choice of a random hash function $h \in \mathcal{H}$ is then
    \begin{equation*}
        \E_{h \in \mathcal{H}}\left[q(h) \: \eqdef \sum_{\text{machines } x} q_{x}(h)\right] \ge Q.
    \end{equation*}
    By the probabilistic method, this implies the existence of a hash function $h^* \in \mathcal{H}$ for which $q(h^*)$ is at least (or at most) $Q$. If the family of hash functions fits into the memory of a single machine, then each machine $x$ can locally compute $q(h)$ (which is a function of the vertices that machine $x$ is assigned) for each $h$. Then, we sum the values $q(h)$ computed by every machine for each $h$ and find one good hash function $h^*$ in $O(1)$ \mpc\ rounds. When the family of hash functions cannot be collected onto a single machine but its size is $\poly(n)$, i.e., each function can be specified using $O(\log n)$ bits, we can still find one good hash function in $O(1)$ \mpc\ rounds as follows. 
    
    Let $\tau = O(\log n)$ denote the length of the random seed and $\chi = \log \memory = \alpha \log n$. We proceed in $\ceil{\frac{\tau}{\chi}}$ iterations. We search for a good function $h^*$ by having machines agree on a $\chi$-bit chunk at a time. Let $\mathcal{H}^{\langle i \rangle}$ be the subset of hash functions from $\mathcal{H}$ considered at iteration $i$. Clearly, $\mathcal{H}^{\langle 0 \rangle} = \mathcal{H}$ and $\mathcal{H}^{\langle i \rangle} \supseteq \mathcal{H}^{\langle i+1 \rangle}$. We will ensure that $\E_{h \in \mathcal{H}^{\langle i \rangle}}\left[q(h)\right] \ge \E_{h \in \mathcal{H}^{\langle i-1 \rangle}}\left[q(h)\right]$ until $\mathcal{H}^{\langle i \rangle}$ consists of a single element defining $h^*$. 
    In phase $i$, for $1 \le i \le \ceil{\frac{\tau}{\chi}}$, we~determine $\mathcal{H}^{\langle i \rangle}$ by fixing the bits at positions $1 + (i-1)\ceil{\frac{\tau}{\chi}}, \ldots, i\ceil{\frac{\tau}{\chi}}$. We assume that each machine knows the prefix $\mathbf{b}$ chosen in previous iterations. Next, each machine considers the extension of $\mathbf{b}$ by all possible seeds of length $\chi$. Let $\mathcal{H}_1, \ldots, \mathcal{H}_{2^\chi}$ be a partition of $\mathcal{H}^{\langle i-1 \rangle}$ such that $\mathcal{H}_j$ contains all hash functions $h \in \mathcal{H}^{\langle i-1 \rangle}$ whose bit sequence consists of $b$ as prefix followed by the bit representation $b_j$ of $j$. 
    
    For every $\mathcal{H}_j$, each machine computes locally $\E_{h \in \mathcal{H}_j}\left[q(h)\right]$ for all $1 \le j \le 2^{\chi}$. This computation can be performed \textit{locally} by taking each hash function $h \in \mathcal{H}_j$ and computing $q_{x}(h)$, which is the target function for the nodes (or edges) that machine $x$ is responsible for. Once we know $q_{x}(h)$ for every $h \in \mathcal{H}_j$, then $\E_{h \in \mathcal{H}_j}\left[q_{x}(h)\right]$ is simply the average of all these numbers. 
    A more efficient way to compute $\E_{h \in \mathcal{H}_j}\left[q_{x}(h)\right]$ is by letting each machine compute the expected value of $q_{x}(h)$ for each of its vertices (or edges) conditioned on $\mathbf{b} \circ b_j$ (note that each node $u$ can compute this probability if it depends only on its neighbors and, by knowing their IDs, $u$ can simulate their choices).
    Using \Cref{lem:primitives}, we compute $\sum_{\text{machines } x} \E_{h \in \mathcal{H}_j}\left[q_{x}(h)\right]$ by summing up the individual expectations. Note that by linearity of expectations, it holds that $\sum_{\text{machines } x} \E_{h \in \mathcal{H}_j}\left[q_{x}(h)\right] = \E_{h \in \mathcal{H}_j}\left[q(h)\right]$.
    Then, we are ensured that there is an index $k$ such that $\E_{h \in \mathcal{H}_k}\left[q(h)\right] \ge Q$, i.e., picking a u.a.r. function $h$ from $\mathcal{H}_k$ gives an objective that is at least (or at most) $Q$. Thus, we select $\mathcal{H}^{\langle i \rangle} = \mathcal{H}_k$ and extend the bit prefix $\mathbf{b} \gets \mathbf{b} \circ b_k$ with the bit representation $b_k$ of $k$. In the next iteration, we repeat the same process until when one good hash function $h^*$ is found. Since $\lceil \frac{\tau}{\chi} \rceil = O(1/\alpha) = O(1)$, this distributed implementation of the method of conditional expectation runs in $O(1)$ \mpc\ rounds.

    \paragraph{Concentration Inequalities} To show that a randomized process requires only $k$-wise independence, we use the following results. For $k = 2$, a bound that is usually satisfactory is obtained from Chebyshev’s inequality. For larger $k$, we apply the subsequent tail inequality, which depends on $k$.
    \begin{theorem}[Chebyshev]\label{thm:cheb}
        Let $X_1,\ldots,X_n$ be random variables taking values in $[0,1]$. Let $X = X_1 + \ldots + X_n$ denote their sum and $\mu \le \E[X]$. Then
        \begin{equation*}
            \Pr\left[|X - \E[X]| \ge \E[X]\right] \le \frac{\Var[X]}{\mu^2}.
        \end{equation*}
    \end{theorem}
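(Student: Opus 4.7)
The plan is to derive this variant of Chebyshev's inequality from the classical one by a simple monotonicity step. First, I would apply Markov's inequality to the nonnegative random variable $(X - \E[X])^2$ at threshold $\E[X]^2$. This gives the standard Chebyshev bound
\begin{equation*}
    \Pr\left[|X - \E[X]| \ge \E[X]\right] = \Pr\left[(X - \E[X])^2 \ge \E[X]^2\right] \le \frac{\E[(X - \E[X])^2]}{\E[X]^2} = \frac{\Var[X]}{\E[X]^2}.
\end{equation*}
Note that this step does not use the hypothesis that the $X_i$ take values in $[0,1]$ or even that $X$ is a sum; it only uses that $\E[X]$ is finite and nonzero (which we tacitly assume, otherwise the statement is vacuous or trivial).

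Next, I would use the hypothesis $\mu \le \E[X]$, together with the nonnegativity of $X$ (which follows from the $X_i$ taking values in $[0,1]$, so that $\E[X] \ge 0$ and thus both sides of the inequality $\mu \le \E[X]$ may be squared). Squaring yields $\mu^2 \le \E[X]^2$, and therefore $1/\E[X]^2 \le 1/\mu^2$. Substituting this into the bound obtained in the previous step gives
\begin{equation*}
    \Pr\left[|X - \E[X]| \ge \E[X]\right] \le \frac{\Var[X]}{\E[X]^2} \le \frac{\Var[X]}{\mu^2},
\end{equation*}
which is the claimed inequality.

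There is no real obstacle here: the only subtlety is that $\mu$ is stated as a lower bound on $\E[X]$ rather than as $\E[X]$ itself, so the monotonicity step is what lets us replace $\E[X]^2$ by the possibly smaller quantity $\mu^2$ in the denominator. The sum structure and the bounded range of the summands are not used in the proof itself; they are presumably included because this form of Chebyshev is exactly how the lemma is invoked in the pairwise independent analysis of \Cref{lem:degree_reduction_pairwise}, where one has an easy handle on a lower bound $\mu$ for $\E[X]$ but not necessarily the exact expectation.
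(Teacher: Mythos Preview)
Your proof is correct. The paper does not actually supply a proof of this theorem: it is stated as the classical Chebyshev inequality and used as a black box, so there is nothing to compare against beyond noting that your Markov-on-$(X-\E[X])^2$ derivation, followed by the monotonicity step $\mu \le \E[X] \Rightarrow \mu^2 \le \E[X]^2$, is exactly the standard argument.
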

    \begin{corollary}\label{cor:chebpw}
        If $X_1,\ldots,X_n$ are pairwise independent, then $\Var[X] = \sum_{i=1}^n \Var[X_i] \le \E[X]$ and
        \begin{equation*}
            \Pr\left[|X - \E[X]| \ge \E[X]\right] \le \frac{\sum_{i=1}^n \Var[X_i]}{\E[X]^2} \le \frac{1}{\E[X]} \le \frac{1}{\mu}.
        \end{equation*}
    \end{corollary}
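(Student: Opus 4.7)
The plan is to derive the corollary directly from Chebyshev's inequality (Theorem~\ref{thm:cheb}) together with the pairwise-independence hypothesis. The argument splits into two parts: first establish the variance bound $\Var[X] \le \E[X]$, and then substitute it into the tail bound.

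For the variance identity, I would expand
\[
\Var[X] = \sum_{i=1}^n \Var[X_i] + \sum_{i \ne j} \Cov[X_i, X_j].
\]
Pairwise independence implies $\E[X_i X_j] = \E[X_i]\E[X_j]$ for every $i \ne j$, so each covariance term vanishes, yielding $\Var[X] = \sum_{i=1}^n \Var[X_i]$. Next, I would exploit that each $X_i \in [0,1]$: pointwise $X_i^2 \le X_i$, so $\E[X_i^2] \le \E[X_i]$, and hence $\Var[X_i] = \E[X_i^2] - \E[X_i]^2 \le \E[X_i]$. Summing over $i$ gives $\Var[X] \le \sum_{i=1}^n \E[X_i] = \E[X]$, which is exactly the first inequality in the corollary statement.

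With the variance bound in hand, I would apply Theorem~\ref{thm:cheb} with deviation $\E[X]$ to obtain $\Pr[\,|X - \E[X]| \ge \E[X]\,] \le \Var[X]/\E[X]^2$. Plugging in $\Var[X] \le \E[X]$ collapses the right-hand side to $1/\E[X]$, and the final inequality $1/\E[X] \le 1/\mu$ follows from the assumption $\mu \le \E[X]$. Chaining these gives the stated inequalities.

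There is no real obstacle here; the single subtle point is recognising that the $[0,1]$-valued hypothesis is precisely what makes $\Var[X_i] \le \E[X_i]$, which is in turn what lets $\Var[X]/\E[X]^2$ simplify to $1/\E[X]$. Everything else is an immediate consequence of Chebyshev and the cancellation of covariances under pairwise independence.
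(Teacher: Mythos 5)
Your proposal is correct and follows the same (essentially unique) route the paper intends: the corollary is stated without a separate proof precisely because it is the immediate combination of the vanishing of covariances under pairwise independence, the bound $\Var[X_i]=\E[X_i^2]-\E[X_i]^2\le \E[X_i]$ for $[0,1]$-valued variables, and \Cref{thm:cheb}. Nothing is missing.
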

    \begin{lemma}[Lemma 2.3 of~\cite{BR94}]\label{lem:kwise_bound}
        Let $k \ge 4$ be an even integer. Let $X_1,\ldots,X_n$ be random variables taking values in $[0,1]$. Let $X = X_1 + \ldots + X_n$ denote their sum and let $\mu \le \E[X]$ satisfying $\mu \ge k$. Then, for any $\epsilon > 0$, we have
        \begin{equation*}
            \Pr\left[|X - \E[X]| \ge \epsilon \cdot \E[X]\right] \le 8 \left( \frac{2k}{\epsilon^2 \mu}\right)^{k/2}.
        \end{equation*}
    \end{lemma}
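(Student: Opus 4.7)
The plan is to apply the $k$-th moment method. Since $k$ is even, $|X-\E X|^k = (X-\E X)^k$, and Markov's inequality gives
\begin{equation*}
\Pr\bigl[|X-\E X|\ge \epsilon\,\E X\bigr] \;\le\; \frac{\E\bigl[(X-\E X)^k\bigr]}{(\epsilon\,\E X)^k}.
\end{equation*}
The entire task therefore reduces to an upper bound on the $k$-th central moment $M_k := \E[(X-\E X)^k]$ in terms of $k$ and $\mu \le \E X$.

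Write $Y_i = X_i - \E X_i$, so that $\E Y_i = 0$, $|Y_i|\le 1$, and $X-\E X = \sum_i Y_i$. Expanding multinomially,
\begin{equation*}
M_k \;=\; \sum_{(i_1,\dots,i_k)\in[n]^k} \E\bigl[Y_{i_1}\cdots Y_{i_k}\bigr].
\end{equation*}
First I would use $k$-wise independence to discard most tuples. For any tuple in which some index $i_j$ occurs exactly once, we may factor that coordinate out of the expectation (this uses only $k$-wise independence) and obtain $\E[Y_{i_j}]=0$. So only tuples whose multiset of indices has every distinct value appearing at least twice contribute to $M_k$.

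Next, group the surviving tuples by their set-partition shape on $[k]$: a surviving tuple corresponds to a set partition $\pi$ of $[k]$ into $j$ blocks $B_1,\dots,B_j$, all of size $\ge 2$ (so $j\le k/2$), together with an ordered choice of $j$ distinct indices $a_1,\dots,a_j\in[n]$. Because $k$-wise independence factorizes the relevant joint expectations and because $|Y_i|\le 1$ forces $|\E[Y_{a_\ell}^{|B_\ell|}]|\le \E[Y_{a_\ell}^2]=\Var[X_{a_\ell}]\le \E[X_{a_\ell}]$ (using $X_{a_\ell}\in[0,1]$), the contribution of a fixed $\pi$ is bounded by
\begin{equation*}
\sum_{a_1,\dots,a_j}\,\prod_{\ell=1}^{j} \E[Y_{a_\ell}^{|B_\ell|}] \;\le\; \Bigl(\sum_{i=1}^{n}\E[X_i]\Bigr)^{j} \;=\; (\E X)^j.
\end{equation*}

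Finally, I would combine this per-shape bound with a count $N(k,j)$ of the set partitions of $[k]$ into $j$ blocks of size $\ge 2$, yielding $M_k \le \sum_{j=1}^{k/2} N(k,j)\,(\E X)^j$. This is the step I expect to be the main obstacle: one needs a clean combinatorial estimate of $N(k,j)$ (the sharp bound $N(k,j)\le \binom{k}{2j}(2j-1)!!\,j^{k-2j}$, or equivalently recognizing that pairings dominate once $\mu\ge k$) in order to pull the sum into the closed form $M_k \le C\,(k\,\E X)^{k/2}$ for an absolute constant $C$. Plugging this into the Markov bound, noting $\E X \ge \mu$, and packaging the constants gives the stated tail bound $8(2k/(\epsilon^2\mu))^{k/2}$; the hypothesis $\mu\ge k$ is exactly what lets the $j=k/2$ term dominate the geometric-type sum so that the final constant collapses to $8$.
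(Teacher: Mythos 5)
This lemma is imported by the paper as a black box (it is Lemma~2.3 of Bellare--Rompel \cite{BR94}); the paper offers no proof of its own, so the only meaningful comparison is with the cited source. Your route --- Markov on the $k$-th central moment, multinomial expansion, killing every tuple containing a singleton index via $k$-wise independence and $\E[Y_i]=0$, and bounding the surviving tuples by set partitions into blocks of size at least $2$ --- is exactly the standard argument and is essentially the one in \cite{BR94}. Each of the reduction steps you carry out is correct: the factorization is legitimate because every surviving tuple involves at most $k$ distinct variables, and the chain $|\E[Y_a^{m}]|\le \E[Y_a^2]\le \E[X_a]$ for $m\ge 2$ correctly converts the per-partition contribution into $(\E X)^j$.

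The one genuine gap is the step you yourself flag: the proposal never actually establishes $M_k\le C\,(k\,\E X)^{k/2}$, and the crude estimates you suggest do not quite deliver the stated constants. For instance, bounding $N(k,j)\le\binom{k}{2j}(2j-1)!!\,j^{k-2j}$ and then estimating each term of $\sum_{j\le k/2}N(k,j)(\E X)^j$ termwise (using $\E X\ge\mu\ge k$ to push the exponent of $\E X$ up to $k/2$) yields something like $(4k\,\E X)^{k/2}$ per term with $k/2$ terms, which overshoots the target $8(2k\,\E X)^{k/2}$. To close this you need the sharper accounting of \cite{BR94} (their Lemma~A.5 proves $M_k\le 8(k\,\E X+k^2)^{k/2}$, after which $\mu\ge k$ gives $k\,\E X+k^2\le 2k\,\E X$ and Markov finishes), or you must accept a weaker absolute constant in place of $8$ and $2k$ --- which would in fact suffice for every application of this lemma in the paper, since it is only ever invoked with $\Omega(1)$ slack in the exponent. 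So: right approach, matching the source, but the quantitative heart of the proof is left unproven.
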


\section{Missing Proofs of Arboricity-Based Degree Reduction}\label{sec:missing_proofs}
    \begin{proof}[Proof of \Cref{lem:degree_reduction_extract_single}]
        Let $\mathcal{J} = \{v \in V(G,\Delta_{i}) \;|\; \deg_{V(G,\Delta_{i})}(v) \ge \deg_G(v)/2\}$ denote high-degree nodes having a majority of their neighbors in $ V(G,\Delta_{i})$.  
        Then, let each node $v \in  V(G,\Delta_{i})' \:\eqdef\:  V(G,\Delta_{i}) \setminus \mathcal{J}$, which has $\deg_{V\setminus V(G,\Delta_{i})}(v) \ge \deg_G(v)/2 \ge \highdeg/2$, discard some edges arbitrarily such that $\deg_{V\setminus V(G,\Delta_{i})}(v) = 2\beta^8$. Let $\tilde{E}$ be the set of remaining edges crossing the cut $(V(G,\Delta_{i}),V\setminus V(G,\Delta_{i}))$. Let $\Shell = \{u \mid v\in V(G,\Delta_{i})' \mbox{ and } \{u,v\} \in \tilde{E}\}$ be the low-degree neighborhood of $V(G,\Delta_{i})'$. We define good $\Shell$-nodes and good $V(G,\Delta_{i})'$-nodes as follows.
        \begin{align*}
            &\Shellgood = \left\{u \in \Shell \mid \deg_{\tilde{E}}(u) < \beta, \deg_{\Shell}(u) < \beta^2\right\},\text{ and}\\
            &V(G,\Delta_{i})_{\textsf{Good}} = \left\{v \in V(G,\Delta_{i})' \mid \deg_{(\Shellgood, \tilde{E})}(v) \ge \beta^8 \right\}.
        \end{align*}
        Then, each node $v \in V(G,\Delta_{i})_{\textsf{Good}}$ arbitrarily selects a a subset $\goodset{v}$ of size $\beta^4$ of $\Shellgood$-nodes it is incident to. Thus, the $\degreehigh$-high-subgraph of $G$ is given by the vertices $\left( V(G,\Delta_{i})_{\textsf{Good}} \sqcup \Shellgood\right) =: (V^{high}, V^{low})$ and the two set of edges $$E_1 = \{(u,v) \in E \mid v \in V(G,\Delta_{i})_{\textsf{Good}}, u \in \goodset{v}\}$$ and $$E_2 = \{(u,u') \in E \mid u, u' \in \bigcup_{v \in V(G,\Delta_{i})_{\textsf{Good}}} T(v)\}.$$ Finally, the number of \textit{bad} nodes that are in $\mathcal{J}$ or $V(G,\Delta_{i})' \setminus V(G,\Delta_{i})_{\textsf{Good}}$ is upper bounded by $\frac{5 \arb |V(G,\Delta_{i})|}{\beta-\arb} \le |V(G,\Delta_{i})|/\Delta^{\Omega(1)}_i$ by the analysis of \cite{BEPS16} and our choice of $\beta = \Delta^{\Omega(1)}$ with $\Delta_i \geq \poly(\arb)$. 
    \end{proof}
    \begin{proof}[Proof of \Cref{lem:degree_reduction_multi_multi}]
        We apply the algorithm of \Cref{lem:degree_reduction_multi_single} for $k$ times in a sequential fashion. 
        Let $G^{(0)} = G$ and $G^{(\ell)}$ denote the graph obtained after the $\ell$-th execution of \Cref{lem:degree_reduction_multi_single}, for $0 \le \ell \le k$. 
        We construct $G^{(\ell)}$ by running the algorithm on $G^{(\ell-1)}$ and removing nodes in the computed independent set and their neighbors (or nodes incident to the matching) from $G^{(\ell-1)}$. Then, $G' = G^{(k)}$.
        The overall running time is $O(k)$ since each run takes $O(1)$ rounds. We let each run use $O(\log C)$ different random bits so that each run succeeds independently with strictly positive probability by \Cref{lem:degree_reduction_multi_single}. It then follows that the probability that all the runs succeed is strictly positive and that the total number of shared random bits required is $O(k \cdot \log C)$. It remains to prove the upper bound on the size of $|V_{\ge i}(G')|$ in the resulting graph $G'$. 
        The proof is by induction on $\ell$. The base case with $\ell = 0$ clearly holds. Assume that the statement holds for $\ell = k-1$ and fix an arbitrary $i \in [i_{min}, i_{max}]$. By \Cref{lem:degree_reduction_multi_single} for some constant $c > 0$, we obtain
        \begin{align*}
            |V_{\ge i}(G^{(\ell + 1)})| &\le \frac{|V_{i}(G^{(\ell)})|}{\Delta_i^c} + \sum_{j > i} \Delta_j^{O(1)} |V_j(G^{(\ell)})| 
            \\ & \le \frac{|V_{\ge i}(G^{(\ell)})|}{\Delta_i^{c\cdot k}} \\ & \quad + \sum_{j > i} \Delta_j^{O(1)} \frac{|V_{\ge j}(G^{(\ell)})|}{\Delta_j^{c\cdot(k-1)}}\\
            &\le \frac{|V_{\ge i}(G^{(\ell)})|}{\Delta_i^{c\cdot k}} + \\ & \quad |V_{\ge i}(G^{(\ell)})| \sum_{j > i} \Delta_i^{-(c\cdot(k-1) - O(1))(2^j - 2^i)} \\
            &\le \frac{|V_{\ge i}(G^{(\ell)})|}{\Delta_i^{\Omega(k)}}\left(1 + \sum_{j \ge i} \Delta_i^{(2^i - 2^j + 1)} \right) = \frac{|V_{\ge i}(G^{(\ell)})|}{\Delta_i^{\Omega(k)}}.
        \end{align*}
    \end{proof}
    \begin{proof}[Proof of \Cref{lem:degree_reduction_mpc_single_round}]
        We present an $O(1)$-time algorithm that achieves a $\Delta^{\Omega(1)}_i$ reduction of $|V_{\ge i}(G)|$. The lemma then follows from repeating it $O(c)$ times. 
        The first step of the algorithm applies \cref{lem:degree_reduction_extract_single} with input $G$ and $\Delta_i$ to obtain a $\degreehigh$-high-graph $H$ with $\degreehigh = \Delta_i^{\Omega(1)}$. Assuming that each node can store its one-hop neighbors on a single machine since $\Delta = O(n^\delta)$, \cref{lem:degree_reduction_extract_single} can be simulated on a low-memory \mpc\ in $O(1)$ rounds. Observe that $V_{\ge i}(G) = \{v \in V(G) \mid \deg_{G}(v) > \sqrt{\Delta_i}\} =: V(G,\Delta_i)$ of \cref{lem:degree_reduction_extract_single}. By Property~\ref{local_single_A}~and~\ref{local_single_C} of~\cref{lem:degree_reduction_extract_single}, $|V(G,\Delta_{i}) \setminus V^{high}(H)| \leq \frac{|V(G,\Delta_{i})|}{\Delta^{\Omega(1)}_i}$ and $\card{V^{high}(H)} \le \card{V_{\ge i}(G)}$.
        Thus, it remains to prove that we can deterministically find an independent set $\indepset$ (or a matching $\matching$) of $H$ such that in the remaining graph $H'$, $|V^{high}(H')| \leq |V^{high}(H)|/\Delta^{\Omega(1)}_{i}$ holds. To this end, we prove that the algorithm of~\Cref{lem:degree_reduction_pairwise} can be derandomized on a low-memory \mpc\ in $O(1)$ rounds. In the following, we omit specifying that $V^{low}$ and $V^{high}$ are with respect to $H$.
        We first explain how to compute a $C$-coloring of $V^{low}$ such that nodes whose random choices are assumed to be independent are given distinct colors. We color a \textit{conflict} graph on nodes in $V^{low}$ in $O(1)$ rounds constructed as follows. Let every $V^{low}$-node send its $V^{low}$-neighborhood to each of the at most $\beta^3$ nodes in $V^{high}$ at distance at most $2$. The total information exchanged in this step is at most  $\sum_{v \in V(G,\Delta_{i})'} \beta^{(1+2+2)} \le \sum_{v \in V(G,\Delta_{i})'} \deg_G(v)$. Then, every $v \in V^{high}$ adds an edge between each pair of vertices in the neighborhoods received. The number of edges that each node $v$ adds is $O(\beta^{10}) \ll \sqrt{\Delta_i} = O(\deg_G(v))$.
        Thus, a $O(\Delta_i^2 \log^2 n)$-coloring of the conflict graph can be computed in $O(1)$ rounds by~\Cref{lm:coloring}. This yields $O(\log C) = O(\delta \log \Delta_i + \log \log n)$ and thus the family of hash functions $\mathcal{H}$ used in~\Cref{lem:degree_reduction_pairwise} can be stored on a single machine. For each $v \in V^{high}$, using the exchanged information and by communicating the assigned color, the machine that $v$ is assigned to can check whether $v$ has no neighbors in $\mathcal{I}_h$ (or matching proposals) for every hash function $h$. We can then aggregate these numbers across machines for every hash function using~\Cref{lem:primitives} in $O(1)$ time and find one good function $h^* \in \mathcal{H}$, which gives the expected result.
    \end{proof}
    \begin{proof}[Proof of \Cref{lem:degree_reduction_mpc_preprocessing}]
        This lemma follows by applying \cref{lem:degree_reduction_mpc_single_round} once for each $i \in [i_{min}, i_{max}]$, which takes $O(c \log \log \Delta)$ rounds overall. 
        We define a sequence of graphs $G_{i_{min}},\ldots, G_{i_{max}}$ with $G_{i_{max}} = G$. Let $\mathcal{I}_i$ be the independent set that we obtain  by invoking \cref{lem:degree_reduction_mpc_single_round} with inputs $G_i$ and $c$.
        Let $G_{i-1} := G_i \setminus \mathcal{I}_i$ and define the independent set $\mathcal{I} = \bigcup_{i \in [i_{min}, i_{max}]} \mathcal{I}_i$.
        This finishes the description of the algorithm and we can indeed compute $\mathcal{I}$ in $O(\log \log \Delta)$ rounds deterministically on a low-memory \mpc. The procedure for computing a matching $\mathcal{M}$ is equivalent. 
    \end{proof}
    \begin{proof}[Proof of \Cref{lem:mpc_degree_reduction_multi_multi}]
        The first step of the algorithm computes a coloring of $G$ with $C = \poly(\Delta_{sup}) \cdot \log^{(k)} n$ colors such that vertices at distance at most $4$ are assigned distinct colors. This can be achieved by simulating $O(k)$ executions of \Cref{lm:coloring} in $O(1)$ \mpc\ rounds using knowledge of the $k$-hop neighborhood. In particular, this provides a proper coloring for simulating \cref{lem:degree_reduction_multi_multi} with input parameters $G$ and $\kappa \in \mathbb{N}$, where $\kappa = \Theta(k)$ is chosen such that: (i) the number of random bits required $b = O(\kappa \cdot \log (\Delta_{sup} \cdot \log^{(k)} n))$ gives $2^b = \localspace$, i.e., the $\kappa$ families of pairwise independent hash functions used by \cref{lem:degree_reduction_multi_multi} fit into the memory of a single machine; (ii) the \local\ running time $O(\kappa)$ of \Cref{lem:degree_reduction_multi_multi} is at most $k$, i.e., each node can simulate it using its $k$-hop neighborhood without any communication; and (iii) $\kappa$ is larger than the absolute constant $s$ of \Cref{lem:degree_reduction_multi_multi}. All conditions can be met at the same time using the above assumption on $k$. 
    
        We note that the following multi-stage derandomization is inspired by that of~\cite[Section 5.2]{MPC-MIS-det-general}.
        Let $\mathcal{H}$ be a $2$-wise independent hash family as explained in \Cref{lem:degree_reduction_pairwise}. By the probabilistic method, there is a sequence of hash functions $h_1, \ldots, h_{\kappa} \subset \mathcal{H}^k$ that achieves the result of \Cref{lem:degree_reduction_multi_multi}. Moreover, by the choice of $\kappa$, all possible sequences $h_1, \ldots, h_{\kappa}$ can be stored and evaluated on any single machine.
        Each high-degree node considers all possible sequences of hash functions $\mathbf{h} = \langle h_1,\ldots,h_{\kappa}\rangle$ and simulates \Cref{lem:degree_reduction_multi_multi} according to $\mathbf{h}$. Let $\indepset^{\mathbf{h}}$ be the independent set found. Each node can check whether it is incident to $\indepset^{\mathbf{h}}$ and compute its degree in $G \setminus \indepset^{\mathbf{h}}$. By aggregating these numbers for each class $V_{\ge i}(G)$, we can compute $V_{\ge i}(G \setminus \indepset^{\mathbf{h}})$ and check whether $V_{\ge i}(G') \le V_{\ge i}(G) / \Delta_i^{\Omega(i)}$ for each $i$. Thus, we select one sequence $\mathbf{h}^*$ defining $\indepset^{\mathbf{h}^*}$ which ensures that the number of nodes in each class satisfies the above condition. The same approach can be easily extended to maximal matching by running the matching version of \Cref{lem:degree_reduction_pairwise}.
    \end{proof}
    \mainsuperlinear*
    \begin{proof}
        Observe first that we can assume that $\arb = O(n^{\delta/4})$, as otherwise $\Delta \ll \poly(\arb)$ and we can thus directly apply the $O(\log \Delta + \log \log n)$-round algorithm~\cite{MPC-MIS-det-general}. Our $O(\log \log n)$-round $\poly(\arb)$-degree reduction can be applied on any graph with maximum degree $O(n^{\delta})$, for some constant $\delta \in (0,1)$. Thus, when the maximum degree of the graph under consideration is $O(n^{\delta})$, we apply our degree reduction followed by~\cite{MPC-MIS-det-general} on the remaining graph of $\poly(\arb)$-maximum degree to obtain the claimed complexity. Otherwise, our algorithm works as follows. We construct a $H$-partition of degree $d = n^{\delta/3}$ consisting of $H_1, \ldots, H_{\last}$ layers, i.e., disjoint set of vertices, with $O(\log_{d/\arb} n) = \last = O(1)$ in $O(L) = O(1)$ rounds. Recall that the maximum degree within each layer is at most $d = O(n^{\delta})$.
        For \mis, we can simply process each layer sequentially. For each $i \in [\last]$, we compute a maximal independent set $\indepset$ of the subgraph induced by $H_i$ and remove nodes in $\indepset \cup N(\indepset)$.
        For \mm, we compute a matching $\matching = \bigcup_{i=1}^{\last} \matching_i$ such that $G \setminus \matching$ contains no vertices with degree larger than $d^3$ in $H_i$, i.e., each high-degree vertex is either removed or all but at most $d^3$ of its incident edges are in $\matching$. To achieve that, we run the process explained in \Cref{sec:mmveryhighdeg}.
    \end{proof}
    \begin{lemma}\label{sec:mmveryhighdeg}
        Every vertex in $H_i$ with degree at least $d^3$ gets removed or all but at most $d^3$ of its incident edges are removed.
    \end{lemma}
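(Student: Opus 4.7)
The plan is to define $\mathcal{M}_i$ as a maximal matching of the subgraph of $G$ formed by the edges incident to the currently-high-degree vertices of $H_i$, and then to deduce the claim directly from maximality. Concretely, let $G'$ denote the graph at the start of processing layer $H_i$ (that is, $G$ with the endpoints of $\mathcal{M}_1\cup\cdots\cup\mathcal{M}_{i-1}$ removed), let $S_i=\{v\in H_i : \deg_{G'}(v)\ge d^3\}$ be the high-degree vertices of $H_i$ that are still in play, and let $G^\ast_i$ be the subgraph of $G'$ consisting of all edges with at least one endpoint in $S_i$. I would take $\mathcal{M}_i$ to be an arbitrary maximal matching of $G^\ast_i$.

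Correctness then follows in a line. Pick any vertex $v\in H_i$ with $\deg_G(v)\ge d^3$. If $v$ was already matched by some earlier $\mathcal{M}_j$ with $j<i$, the first clause of the lemma holds. Otherwise $v\in V(G')$; if $\deg_{G'}(v)<d^3$ already, then at most $d^3$ of its incident edges survive, so we are done. In the remaining case $v\in S_i$, either $v$ is matched by $\mathcal{M}_i$ (and we are done), or by maximality of $\mathcal{M}_i$ every edge $(v,u)\in E(G^\ast_i)$ has $u$ incident to $\mathcal{M}_i$ (else $(v,u)$ could be added without conflict, contradicting maximality). Since $G^\ast_i$ contains every edge of $G'$ incident to $v$, every surviving neighbor of $v$ in $G'$ is matched by $\mathcal{M}_i$ and thus removed in $G\setminus\mathcal{M}$, so $v$ has zero surviving incident edges---far stronger than the claimed bound.

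The main obstacle I anticipate is implementing this step within the surrounding $O(\log\arb+\log\log n)$-round budget while respecting the low-space \mpc\ constraints. The crucial enabler is \Cref{lem:arb}: the number of vertices of degree at least $d^3$ in $G'$ is at most $\arb\cdot n/(d^3-\arb)=O(n/d^2)$, so $|S_i|$ is small, and $G^\ast_i\subseteq G$ inherits arboricity at most $\arb$ and thus has at most $O(\arb\cdot n)$ edges. The plan is to compute the maximal matching of $G^\ast_i$ by invoking the deterministic low-space maximal matching algorithm of~\cite{MPC-MIS-det-general} on $G^\ast_i$, and then to verify that, because there are only $L=O(1)$ layers, the runtime and memory aggregated over all layers fit inside the prescribed $O(\log\arb+\log\log n)$-round and $O(n^{1+\eps}+m)$-memory budgets. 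The subtle check, which I expect will require the most care, is that extracting each $G^\ast_i$ and writing down the matching output can itself be performed in $O(1)$ rounds without blowing up either the local or global memory, exploiting the arboricity-based bound on $|E(G^\ast_i)|$.
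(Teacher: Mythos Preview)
Your correctness argument via maximality is clean and valid: if $\mathcal{M}_i$ is a maximal matching of $G^\ast_i$, then any unmatched $v\in S_i$ must have all its $G'$-neighbors matched, so all its incident edges disappear. No objection there.

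The gap is in the implementation. You propose to obtain $\mathcal{M}_i$ by calling the deterministic \mm\ algorithm of~\cite{MPC-MIS-det-general} on $G^\ast_i$, but that algorithm runs in $O(\log \Delta(G^\ast_i)+\log\log n)$ rounds, and $\Delta(G^\ast_i)$ is \emph{not} controlled by $\arb$ or by $|S_i|$: a vertex $v\in S_i$ has $\deg_{G^\ast_i}(v)=\deg_{G'}(v)$, which can be $\Theta(n)$ (think of a star, where $\arb=1$). So the call costs $\Theta(\log n)$ rounds, which does not fit in the $O(\log\arb+\log\log n)$ budget when $\arb$ is sub-polynomial. This is not a technicality you can patch with the arboricity bound on $|E(G^\ast_i)|$: the entire purpose of this lemma is to knock the maximum degree down from potentially $\Theta(n)$ to $O(n^\delta)$ \emph{before} any $\Delta$-dependent subroutine is invoked, so bootstrapping through a $\log\Delta$-round black box defeats the point.

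The paper takes a very different route that runs in $O(1)$ \mpc\ rounds per layer. It exploits the $H$-partition structure directly: every $u\in\bigcup_{j<i}H_j$ has at most $d$ neighbors in $H_i$, so if each such $u$ proposes to a uniformly random $H_i$-neighbor and each $v\in H_i$ accepts one proposal, then a vertex $v\in H_i$ with $\deg\ge d^3$ has at least $d^3-d\ge 2d^2$ lower-layer neighbors, each proposing to $v$ with probability $\ge 1/d$, hence $v$ receives $\Omega(d)$ proposals in expectation. Under $k$-wise independence (with $k=O(1/\delta)$) this concentrates to failure probability $\le n^{-3}$; the process is then derandomized in $O(1)$ rounds by grouping $v$'s incident edges into blocks of size $2d^2=O(n^\delta)$ that fit on single machines and applying the method of conditional expectations to guarantee every block contains a marked edge. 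The point is that one carefully designed round of random proposals, not a full maximal matching, already matches every high-degree vertex.
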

    \begin{proof}
            Every node $u \in \bigcup_{j=1}^{i-1} H_j$ marks one edge $e = \{u,v\}$ with $v \in H_i \cap N(u)$ at random. Each node $v \in H_i$ selects one incident marked edge arbitrarily. We prove that this process can be derandomized to obtain a suitable matching.
        Let $v$ be a node in layer $i$ with degree at least $d^3$. Since $v$ has at most $d$ neighbors in $H_i$, there are at least $d^3 - d \gg 2d^2$ vertices in lower layers. Let $X_u$ be a family of $k$-wise independent random variables with $X_u = 1$ if $u$ marks the edge $\{u,v\}$ and $X_u = 0$ otherwise, for each $u \in U(v) \eqdef N(v) \cap \bigcup_{j=1}^{i-1} H_j$. Define $X = \sum_{u \in U(v)} X_u$ and observe that  $\E[X] \ge 2d$ since $\Pr[X_u = 1] \ge 1/d$ and $\card{U(v)} \ge 2d^2$. By applying \Cref{lem:kwise_bound} with $k = 21/\delta$, we have $\Pr[X = 0] \le 8 \left(\frac{1}{d}\right)^{k/2} \le n^{-3}$. Now, let $\mathcal{H} = \{h \colon V \mapsto [d]\}$ be a family of $k$-wise independent hash functions specified by $k\cdot(\log n + \log d) + O(1) = O(\log n)$ bits by \Cref{lemma-hash}. Each function $h \in \mathcal{H}$ defines a matching $\matching(h)$ that includes each node $v \in H_i$ with at least one marked edge, i.e., if $h(u) = v$ for $u \in U(v)$.  
        For each $v \in H_i$ of degree at least $d^3$, our derandomization scheme groups the edges of $v$ in blocks of size exactly $2d^2 = O(n^\delta)$, possibly deleting some spare edges. Each machine will now ensure that its assigned blocks are marked, i.e., each block has at least one marked edge under $h$.     
        The above concentration bound together with a union bound over the set of all blocks ensures that a randomly chosen $h$ has all blocks \textit{marked} with probability at least $n^{-1}$. Thus, each machine $x$, which is responsible for some blocks of edges $B_j(v)$ incident to node $v$, can compute locally the number of edges $\{u,v\} \in B_j(v)$ such that $h(u) = v$ for each hash function $h \in \mathcal{H}$ and determine whether $B_j(v)$ is marked. By the distributed method of conditional expectation (\Cref{sec:derandomization_framework}) with the objective function $q_{x}(h) = \mathbbm{1}\{B_j(v) \text{ is marked for every $v$}\}$, we find a function $h^*$ that makes all blocks marked in $O(1)$ time.
    \end{proof}
        
\section{Algorithm for High-Arboricity Graphs}\label{sec:higharbmis}
    \subsection{Maximal Matching Case}
    Our starting point is the maximal matching sparsification procedure of~\cite[Section 3]{MPC-MIS-det-general}, which is summarized in the following lemma and works with $O(n+m)$ total space.
    \begin{lemma}[\cite{MPC-MIS-det-general}]\label{lm:spar_match}
        There is a deterministic low-memory \mpc\ constant-round sparsification procedure that returns a subset of nodes $B \subseteq V$ and a subset of edges $E' \subseteq E$ such that:
        \begin{enumerate}
            \item For every $v \in V$ it holds that $\deg_{E'}(v) = O(n^\delta)$, for any $\delta$, $0 < \delta < 1$. \label{prop:spar_match1}
            \item Every node $v \in B$ either satisfies $\sum_{\{u,v\} \in E'} \frac{1}{\deg_{E'}(\{u,v\})} \ge \frac{1}{27}$, or is incident to an edge $\{u,v\} \in E'$ whose degree in $E'$ is $0$. \label{prop:spar_match2}
            \item For the subset $B$ of $V$ it holds that $\sum_{v \in B} \deg(v) \ge \frac{\delta |E|}{8}$. \label{prop:spar_match3}
        \end{enumerate}
    \end{lemma}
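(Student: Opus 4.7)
The plan is to construct $(B, E')$ by a degree-based super-classification of $V$ followed by a deterministic per-vertex bucket-truncation, whose randomness is a pairwise-independent hash later derandomized via the distributed method of conditional expectation from \Cref{sec:derandomization_framework}.

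First I would classify vertices into $O(1/\delta)$ super-classes $V_k = \{v : n^{k\delta} \le \deg_G(v) < n^{(k+1)\delta}\}$. This partitions the edges of $G$ into $O(1/\delta^2)$ types according to the endpoint super-classes, so by pigeonhole some type $E_{i,j}$ with $i \le j$ carries at least $\Omega(\delta^2 |E|)$ edges; after tuning constants this yields Property~\ref{prop:spar_match3}. Take $B$ to be the $V_i$-endpoints of these edges. Each $v \in B$ arbitrarily partitions its $E_{i,j}$-incident edges into buckets of size $n^\delta$, and a pairwise-independent hash $h$ selects one bucket per $v$; the union of picked buckets is $E'$. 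By construction every $v \in B$ has degree $n^\delta$ in $E'$, and for each $u \in V_j$ the expected $E'$-degree is $O(n^\delta)$ with Chebyshev-controlled variance under pairwise independence, giving Property~\ref{prop:spar_match1} after a constant-factor truncation of the tail.

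The core step is Property~\ref{prop:spar_match2}. For each $v \in B$ a double-counting argument shows $\E_h\!\left[\sum_{\{u,v\} \in E'} 1/\deg_{E'}(\{u,v\})\right] = \Omega(1)$: vertex $v$ contributes $n^\delta$ edges to $E'$, and each endpoint denominator is $O(n^\delta)$ in expectation. By Markov's inequality, a constant fraction of $v \in B$ satisfy the sum-inequality. Those that fail it must have selected a bucket containing some $u \in V_j$ of anomalously small $E'$-degree, which triggers the isolated-edge disjunct. I would then define the pessimistic estimator $P(h) = |\{v \in B : v \text{ satisfies neither disjunct}\}|$, bound $\E_h[P(h)] \le |B|/2$ using the previous two observations, and derandomize $h$ in $O(1)$ \mpc\ rounds by the seed-chunking procedure of \Cref{sec:derandomization_framework}, since $P(h)$ decomposes into locally computable per-vertex contributions.

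The main obstacle is nailing down the explicit $1/27$ constant in Property~\ref{prop:spar_match2}: pairwise independence plus Chebyshev only yields \emph{some} constant, so matching the stated threshold requires careful calibration of the bucket size, the super-class width $\delta$, and the Markov cutoff separating good from bad $v$'s. A secondary challenge is ensuring $P(h)$ depends only on one-hop information, so the derandomization fits in $O(n+m)$ global memory; this is essentially automatic given that $v$'s status is determined by its chosen bucket and the $E'$-degrees of its $V_j$-neighbors, but demands careful bookkeeping in the \mpc\ implementation so that each machine can locally tabulate the conditional contribution of its assigned $v \in B$ for every candidate seed extension.
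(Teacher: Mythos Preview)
This lemma is not proved in the present paper: it is quoted from \cite{MPC-MIS-det-general} (Czumaj--Davies--Parter, Section~3) and used as a black box, so there is no in-paper proof to compare your proposal against.

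For what it is worth, your sketch is broadly in the right spirit --- the construction in \cite{MPC-MIS-det-general} does proceed via $O(1/\delta)$ degree super-classes and a per-vertex edge-bucketing step --- but a couple of points in your outline would not go through as written. First, pigeonhole over $O(1/\delta^2)$ edge types yields only an $\Omega(\delta^2|E|)$ guarantee, not the $\delta|E|/8$ of Property~\ref{prop:spar_match3}; the source argument uses a more careful accounting to get the linear dependence on $\delta$. Second, your derivation of Property~\ref{prop:spar_match2} has a gap: from ``a constant fraction of $v\in B$ satisfy the sum inequality'' via Markov you cannot conclude that the remaining $v$'s fall into the isolated-edge disjunct, and the specific constant $1/27$ in the source comes from a direct deterministic counting argument rather than a Chebyshev/Markov tail. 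If you want the actual proof, consult Section~3 of \cite{MPC-MIS-det-general} directly.
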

    
    We use the subsets $E'$ of edges and $B$ of vertices returned by \Cref{lm:spar_match} to construct a matching $\matching \subseteq E'$ that is incident to $\Omega(\delta |E|)$ edges. We define for each $v \in B$ a subset of \textit{relevant} edges $S(v) \subseteq E'(v)$ such that $\sum_{e \in S(v)} \frac{1}{\deg_{E'}(\{u,v\})} \in \left[\frac{1}{27}, 1\right]$, by Property~\ref{prop:spar_match2}. 
    Let $\mathcal{H}$ be a family of pairwise independent hash functions and let $h \in \mathcal{H}$ map each edge $e = \{u,v\}$ in $E'$ to a value $z_e \in [n^3]$. We say that $e$ is \textit{sampled} and joins the set of sampled edges $S_h$ iff $z_e < \frac{n^3}{3\deg_{E'}(e)}$. 
    Let $S_h(v) \eqdef S(v) \cap S_h$ denote the set of edges in $S(v)$ that are sampled. Define $\matching_{h}$ to be the set of sampled edges with no neighboring edges sampled. Observe that, for a node $v$, if the following two conditions hold: \textit{(i)} $\card{S_h(v)} = 1$ and \textit{(ii)} $e = \{u,v\} = \{e' \in S_h \mid u \in e'\}$, i.e., $e$ is the only sampled edge in $u$'s neighborhood, then $v$ is matched under $h$. For simplicity, let us remove from consideration all nodes $v \in B$ with $S(v) = \{e\}$. 
    We now define the following pessimistic estimator $P(h)$:
    \begin{align*}
        P(h) &= \sum_{v \in B \colon \card{S_h(v)} = 1} \deg(v) - \sum_{v \in B} \deg(v) \cdot
          \left(\sum_{\{u,v\} \in S_h(v)} \mathbbm{1}\{\exists_{e' \in S_h \sim e} e' \ni u\}\right).
    \end{align*}
    To compute $P(h)$, every $v \in B$ checks $\card{S_h(v)}$ and, for each edge $e = \{u,v\} \in S_h(v)$, node $u$ checks whether there is a $e' = \{u,w\} \in S_h$ with $w \neq v$. This shows that $P(h)$ can be computed by storing the one-hop neighborhoods of vertices on single machines.  We now lower bound $\E[P(h)]$. Let $A_e$ be the event $e \in S_h$ and $B_e$ be the bad event $\left\{ \exists_{e' \in E' \sim e} A_{e'} \right\}$ that $e$ is neighboring a sampled edge.
    \begin{align*}
        \E[P(h)] &= \sum_{v \in B} \deg(v) \cdot \Pr\left[\card{S_h(v)} = 1\right] - \sum_{v \in B} \deg(v) \cdot \left(\sum_{e \in S(v)} \Pr\left[A_e \land B_e \right]\right)\\
        &= \sum_{v \in B} \deg(v) \cdot \left(\Pr\left[\card{S_h(v)} = 1\right] -  \sum_{e \in S(v)} \Pr\left[A_e \land B_e \right]\right).
    \end{align*}
    The probability of the event $A_e$ is
    $
        \frac{1}{3\deg_{E'}(e)} - \frac{1}{n^3} \le \Pr\left[A_e\right] \le \frac{1}{3\deg_{E'}(e)}.
    $
    Conditioned on $A_e$ and using pairwise independence of $z_e$ and $z_{e'}$, we obtain
    \begin{align*}
        \Pr\left[ B_e \mid A_e\right] &\le \sum_{e' \in E' \sim e} \Pr\left[A_{e'}\right] 
        \le \deg_{E'}(e) \cdot \frac{1}{3\deg_{E'}(e)} = \frac 1 3.
    \end{align*}
    Then, by applying the inclusion-exclusion principle and noticing that $\Pr[A_e, A_{e'}] = \Pr[A_e] \cdot \Pr[A_{e'}]$ by pairwise independence, we get
    \begin{align*}
        \Pr\left[\card{S_h(v)} = 1\right] 
        &\ge \sum_{e \in S(v)} \Pr\left[A_e\right] - \sum_{e \in S(v)} \sum_{e' \in S(v)} \Pr\left[A_e\right] \cdot \Pr\left[A_{e'}\right]\\
        &\ge \sum_{e \in S(v)} \Pr\left[A_e\right] \left(1 - \sum_{e' \in S(v)} \Pr\left[A_{e'}\right]\right). 
    \end{align*}
    Observe that $\Pr\left[A_e \land B_e \right] = \Pr\left[A_e\right] \Pr\left[B_e \mid A_e\right] \le \frac{1}{3} \Pr\left[A_e\right]$ by the calculations above. Thus,
    \begin{align*}
        & \Pr\left[\card{S_h(v)} = 1\right] - \sum_{e \in S(v)} \Pr\left[A_e \land B_e \right] 
       \\& = \sum_{e \in S(v)} \Pr\left[A_e\right] \left(\frac{2}{3} - \sum_{e' \in S(v)} \Pr\left[A_{e'}\right]\right)\\
        &\ge \sum_{e \in S(v)} \Pr\left[A_e\right] \left(\frac{2}{3} - \frac 1 3 \sum_{e' \in S(v)} \frac{1}{\deg_{E'}(e')}\right) \ge \frac{1}{3} \cdot \sum_{e \in S(v)} \Pr\left[A_e\right]\\
        &\ge \frac{1}{3} \cdot \sum_{e \in S(v)} \left(\frac{1}{3\deg_{E'}(e)} - \frac{1}{n^3}\right)
    \\ & \ge \frac{1}{9} \cdot \sum_{e \in S(v)} \frac{1}{\deg_{E'}(e)} - \frac{1}{3n^2} 
        \ge \frac{1}{243} - \frac{1}{3n^2},
    \end{align*}
    where the last inequality follows from Property~\ref{prop:spar_match2}. Note that we can assume $\frac{1}{243} - \frac{1}{3n^2} \ge \frac{1}{250}$ for $n$ large enough. This yields $\E[P(h)] \ge \frac{1}{250} \sum_{v \in B} \deg(v) \ge \frac{\delta |E|}{2000}$, as desired.
    Summarized, our algorithm applies the sparsification procedure of \Cref{lm:spar_match} followed by the  derandomization approach of \Cref{sec:derandomization_framework} with objective function $P(h)$. By the method of conditional expectation, we select a hash function $h^*$ such that $P(h^*) \ge \frac{\delta}{2000} \card{E}$, i.e., a matching that decreases the number of edges by $\frac{\delta}{4000} \card{E}$. Repeating it $O(\log n)$ times ensures the matching maximality.
    \subsection{MIS Case}
    \begin{lemma}[MIS sparsification {\cite[Section 4]{MPC-MIS-det-general}}]\label{lm:spar_mis}
        There is a deterministic low-memory \mpc\ constant-round sparsification procedure that returns subsets of nodes $B, Q \subseteq V$ and a subset of edges $E' \subseteq E$ such that:
        \begin{enumerate}
            \item For every $v \in B \cup Q$ it holds that $\deg_{E'}(v) = O(n^\delta)$ and that $\deg_{Q}(v) = O(n^\delta)$, for any $\delta$, $0 < \delta < 1$. \label{prop:spar_mis1}
            \item Every node $v \in B$ either satisfies $\sum_{u \in N_{E'}(v)} \frac{1}{\deg_{Q}(u)} \ge \frac{\delta}{10}$, or it has a neighbor $u \in Q$ whose degree in $Q$ is $0$. \label{prop:spar_mis2}
            \item For the subset $B$ of $V$ it holds that $\sum_{v \in B} \deg(v) \ge \frac{\delta |E|}{8}$. \label{prop:spar_mis3}
        \end{enumerate}
    \end{lemma}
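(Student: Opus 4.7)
The plan is to adapt the template used for the matching sparsification of \Cref{lm:spar_match}, replacing edge-based quantities with the node-weighted ones that appear in the statement. First, I would bucket the vertices by degree into classes $V_i = \{v : 2^i \leq \deg_G(v) < 2^{i+1}\}$ for $i = 0, \ldots, \lceil \log n \rceil$, and identify a single bucket $V_{i^*}$ whose total incident degree is $\Omega(\delta |E|)$. Since $2|E| = \sum_i \sum_{v \in V_i} \deg(v)$ and there are only $O(\log n)$ buckets, an averaging argument together with a slight rebalancing via the parameter $\delta$ yields such a dominant bucket, and I would take $B$ to be (a subset of) $V_{i^*}$. This immediately gives Property~\ref{prop:spar_mis3}.

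Next, I would define $Q$ to capture candidate ``low-degree'' neighbors of $B$ on which the eventual randomized MIS step will operate. Concretely, set $Q$ to consist of nodes whose degree in $G$ lies in a controlled window chosen so that, after a deterministic edge-thresholding stage, $\deg_Q(v) = O(n^\delta)$ for every $v \in B \cup Q$. The construction of $E'$ is then a deterministic friend-selection procedure: each $v \in B \cup Q$ greedily retains at most $n^\delta$ of its incident edges, prioritizing edges to low $\deg_Q$-neighbors so as to preserve the weighted connectivity $\sum_{u \in N_{E'}(v)} 1/\deg_Q(u)$. Each of these steps fits in $O(1)$ low-space \mpc\ rounds by the primitives of \Cref{lem:primitives}.

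The main obstacle is showing Property~\ref{prop:spar_mis2}: that the cap $\deg_{E'}(v) \leq O(n^\delta)$ does not destroy the weighted sum $\sum_{u \in N_{E'}(v)} 1/\deg_Q(u) \geq \delta/10$ for the nodes $v \in B$ that are not already satisfied by having an isolated-in-$Q$ neighbor. The argument would be a careful charging scheme: nodes $v \in B$ that fail the weighted-sum condition must have had most of their weighted mass concentrated on $u \in Q$ with $\deg_Q(u) > n^\delta$, but such $u$'s can only absorb a bounded amount of weight before being truncated, and the slack can be traded against the $\delta/8$ in Property~\ref{prop:spar_mis3}. The disjunction in Property~\ref{prop:spar_mis2} (isolated-in-$Q$ neighbor) is precisely designed to absorb the boundary case where a $v \in B$ loses all of its $Q$-edges during sparsification.

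Finally, I would verify Property~\ref{prop:spar_mis1} by observing that the degree cap is enforced by construction for $\deg_{E'}$, and that the window defining $Q$ together with the truncation step gives the matching bound $\deg_Q(v) = O(n^\delta)$. Throughout, the procedure uses only local aggregation, sorting, and prefix-sum primitives (\Cref{lem:primitives}) and global memory $O(n+m)$, staying within the low-memory \mpc\ budget.
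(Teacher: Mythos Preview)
This lemma is not proved in the paper at all: it is quoted verbatim from \cite[Section~4]{MPC-MIS-det-general} and used as a black box, exactly like \Cref{lm:spar_match} in the matching case. There is therefore no ``paper's own proof'' to compare against; the paper's contribution in this subsection is the pessimistic estimator $P(h)$ built \emph{on top of} the sparsification, not the sparsification itself.

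As for your sketch on its own merits, the very first step already fails. You partition vertices into classes $V_i=\{v:2^i\le\deg_G(v)<2^{i+1}\}$ for $i=0,\ldots,\lceil\log n\rceil$ and then appeal to averaging to find a bucket carrying $\Omega(\delta|E|)$ total degree. But there are $\Theta(\log n)$ buckets, so averaging yields only $\Omega(|E|/\log n)$; the phrase ``a slight rebalancing via the parameter $\delta$'' cannot absorb a $\log n$ loss, since $\delta$ is a fixed constant in $(0,1)$. Consequently Property~\ref{prop:spar_mis3} is not established, and everything downstream (the definition of $Q$, the charging argument for Property~\ref{prop:spar_mis2}) is built on a set $B$ that may be a $1/\log n$ fraction too small.

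The construction in \cite{MPC-MIS-det-general} avoids this by not picking a single degree bucket at all. It uses the classical Luby-style notion of \emph{good} vertices (those with a constant fraction of neighbors of comparable or smaller degree), which deterministically account for a constant fraction of all edges, and then sparsifies around those. If you want to salvage a bucketing approach, you would need buckets based on powers of $n^{\delta}$ (so only $O(1/\delta)$ of them), not powers of $2$; only then does the averaging give the $\delta|E|/8$ you need.
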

   
    We use the subsets $E'$ of edges and $B,Q$ of vertices returned by \Cref{lm:spar_mis} to find an independent set $\indepset \subseteq Q$ such that $\indepset \cup N(\indepset)$ is incident to a linear number of edges. For each $v \in B$, we define a subset of \textit{relevant} nodes by $S(v) \subseteq N_{E'}(v)$, with $\sum_{u \in S(v)} \frac{1}{\deg_{Q}(u)} \in [\delta/10, 1]$ by Property \ref{prop:spar_mis2}. 
    Let $\mathcal{H}$ be a family of pairwise independent hash functions and let $h \in \mathcal{H}$ map each node $v$ in $Q$ to a value $z_v \in [n^3]$. We say that $v$ is \textit{sampled} and joins the set of sampled nodes $S_h$ iff $z_v < \frac{n^3}{3\deg_{Q}(v)}$. Let $S_h(u) \eqdef S(u) \cap S_h$ denote the set of nodes in $S(u)$ that are sampled for each $u \in B$. 
    Define $\indepset_{h}$ to be the set of sampled nodes such that no neighbor is sampled. Observe that, for a node $v \in B$, if the following two conditions hold: \textit{(i)} $S_h(v) = \{u\}$ and \textit{(ii)} $S_h \cap N_{Q}(u) = \emptyset$, then $v$ is incident to $\indepset_{h}$. For simplicity, let us remove from consideration all vertices adjacent to a vertex $u$ with degree $0$ in $Q$. To find a good $\indepset_{h}$, we define the following pessimistic estimator $P(h)$:
    \begin{align*}
        P(h) &= \sum_{v \in B \colon \card{S_h(v)} = 1} \deg(v) - \sum_{v \in B} \deg(v) \cdot \left(\sum_{u \in S_h(v)} \mathbbm{1}\{\exists_{u' \in N_Q(u)} u' \in S_h\}\right).
    \end{align*}
    To compute $P(h)$, each node $v \in B$ checks $\card{S_h(v)}$ and, for each edge $e = \{u, v\}$ with $u \in S_h(v)$, node $u$ checks whether $u$ is neighboring a sampled node. Let $A_u$ be the event that $u \in S_h$ and $B_u$ denote the bad event $\left\{ \exists_{u' \in N_Q(u)} u' \in S_h \right\}$ that $u$ neighbors a sampled node. We have
    \begin{align*}
        \E[P(h)] &= \sum_{v \in B} \deg(v) \cdot \Pr\left[\card{S_h(v)} = 1\right] - \sum_{v \in B} \deg(v) \cdot \left(\sum_{u \in S(v)} \Pr\left[A_u \land B_u \right]\right)\\
        &= \sum_{v \in B} \deg(v) \cdot \left(\Pr\left[\card{S_h(v)} = 1\right] -  \sum_{e \in S(v)} \Pr\left[A_u \land B_u \right]\right).
    \end{align*}
    The probability of the event $A_u$ is
        $
            \frac{1}{3\deg_{Q}(u)} - \frac{1}{n^3} \le \Pr\left[A_u\right] \le \frac{1}{3\deg_{Q}(u)}
        $.
    Conditioned on $A_u$ and by independence of $z_u, z_{u'}$, the probability of $B_u$ is         
    \begin{align*}
            \Pr\left[B_u \mid A_u\right] &\le \sum_{u' \in N_Q(u)} \Pr\left[A_{u'}\right] \le \deg_{Q}(u) \cdot \frac{1}{3\deg_{Q}(u)} = \frac 1 3.
    \end{align*}
    Then, by applying the inclusion-exclusion principle and observing that $\Pr\left[A_u, A_{u'}\right] = \Pr\left[A_u\right] \cdot \Pr\left[A_{u'}\right]$ by pairwise independence, we get
    \begin{align*}
        \Pr\left[\card{S_h(v)} \ge 1\right] &\ge \sum_{u \in S(v)} \Pr\left[A_u\right] - \sum_{\substack{u, u' \in S(v),\\ u \neq u'}} \Pr\left[A_u\right] \cdot \Pr\left[A_{u'}\right]\\
        &\ge \sum_{u \in S(v)} \Pr\left[A_u\right] \left(1 - \sum_{u' \in S(v)} \Pr\left[A_{u'}\right]\right). 
    \end{align*}
    Observe that $\Pr\left[A_u \land B_u \right] = \Pr\left[A_u\right] \cdot \Pr\left[B_u \mid u \in S_h\right] \le \frac 1 3 \Pr\left[A_u\right]$ by the calculations above.
    Thus,
    \begin{align*}
        &\Pr\left[\card{S_h(v)} = 1\right] - \sum_{u \in S(v)} \Pr\left[A_u \land B_u \right] \\ & \ge \sum_{u \in S(v)} \Pr\left[A_u\right] \left(\frac{2}{3} - \sum_{u' \in S(v)} \Pr\left[A_{u'}\right]\right)\\
        &= \sum_{u \in S(v)} \Pr\left[A_u \right] \left(\frac{2}{3} - \frac 1 3 \sum_{u' \in S(v)} \frac{1}{\deg_{Q}(u')}\right) \ge \frac{1}{3} \cdot \sum_{u \in S(v)} \Pr\left[A_u\right]\\
        &\ge \frac{1}{3} \cdot \sum_{u \in S(v)} \left(\frac{1}{3\deg_{Q}(u)} - \frac{1}{n^3}\right)
        \ge \frac{1}{9} \cdot \sum_{u \in S(v)} \frac{1}{\deg_{Q}(u)} - \frac{1}{3n^2} \\ &\ge \frac{\delta}{90} - \frac{1}{3n^2},
    \end{align*}
    where the last inequality follows from Property~\ref{prop:spar_match2}. Note that we can assume $\frac{\delta}{90} - \frac{1}{3n^2} \ge \frac{\delta}{100}$ for $n$ large enough. We can now conclude that $\E[P(h)] \ge \frac{\delta}{100} \sum_{v \in B} \deg(v) \ge \frac{\delta^2 |E|}{800}$, as desired.
    Summarized, our algorithm applies the sparsification procedure of \Cref{lm:spar_mis} followed by the above derandomization process with objective function $P(h)$. We can thus find a hash function $h^*$ such that $P(h^*) \ge \frac{\delta^2}{800}\card{E}$, i.e., an independent set $\indepset_{h^*}$ incident to at least $\frac{\delta}{1600} \card{E}$ edges. Thus, after $O(\log n)$ iterations, the returned independent set is maximal.
\end{document}